\theoremstyle{plain}
\newtheorem{theorem}{Theorem}
\newtheorem{lemma}[theorem]{Lemma}
\newtheorem{corollary}[theorem]{Corollary}
\newtheorem{observation}[theorem]{Observation}
\newtheorem{claim}[theorem]{Claim}
\newtheorem{remark}[theorem]{Remark}
\title{Minimum Weight Euclidean $(1+\eps)$-Spanners\thanks{A preliminary version of this paper appears in the
\emph{Proceedings of the 48th International Workshop on Graph-Theoretic Concepts in Computer Science (WG)}, LNCS, Springer, 2022.}
}
\author{Csaba D. T\'oth\thanks{Department of Mathematics, California State University Northridge, 18111 Nordhoff St., Los Angeles, CA~91330; Department of Computer Science, Tufts University, Medford, MA, USA.
Email: \texttt{csaba.toth@csun.edu}.
Research supported, in part, by the NSF awards DMS-1800734 and DMS-2154347.}}
\date{}
\newcommand{\MST}{{\rm MST}}
\newcommand{\N}{\mathbb{N}} 
\newcommand{\Z}{\mathbb{Z}} 
\newcommand{\R}{\mathbb{R}} 
\newcommand{\eps}{\varepsilon}
\newcommand{\area}{\mathrm{area}}
\def\conv{\texttt{conv}}
\newcommand{\later}[1]{{}}
\newcommand{\old}[1]{{}}
\long\def\ignore#1{}
\begin{document}
\maketitle

\begin{abstract}
Given a set $S$ of $n$ points in the plane and a parameter $\varepsilon>0$, a Euclidean $(1+\varepsilon)$-spanner is a geometric graph $G=(S,E)$ that contains, for all $p,q\in S$, a $pq$-path of weight at most $(1+\varepsilon)\|pq\|$.
We show that the minimum weight of a Euclidean $(1+\varepsilon)$-spanner for $n$ points in the unit square $[0,1]^2$ is $O(\varepsilon^{-3/2}\,\sqrt{n})$, and this bound is the best possible.
The upper bound is based on a new spanner algorithm in the plane.
It improves upon the baseline $O(\varepsilon^{-2}\sqrt{n})$, obtained by combining a tight bound for the weight of a Euclidean minimum spanning tree (MST) on $n$ points in $[0,1]^2$, and a tight bound for the lightness of Euclidean $(1+\varepsilon)$-spanners, which is the ratio of the spanner weight to the weight of the MST.
Our result generalizes to Euclidean $d$-space for every constant dimension $d\in \mathbb{N}$: The minimum weight of a Euclidean $(1+\varepsilon)$-spanner for $n$ points in the unit cube $[0,1]^d$ is $O_d(\varepsilon^{(1-d^2)/d}n^{(d-1)/d})$, and this bound is the best possible.

For the $n\times n$ section of the integer lattice in the plane, we show that the minimum weight of a Euclidean $(1+\varepsilon)$-spanner is between $\Omega(\varepsilon^{-3/4}\cdot n^2)$ and $O(\varepsilon^{-1}\log(\varepsilon^{-1})\cdot n^2)$. These bounds become $\Omega(\varepsilon^{-3/4}\cdot \sqrt{n})$ and $O(\varepsilon^{-1}\log(\varepsilon^{-1})\cdot \sqrt{n})$ when scaled to a grid of $n$ points in the unit square. In particular, this shows that the integer grid is \emph{not} an extremal configuration for minimum weight Euclidean $(1+\varepsilon)$-spanners.\\

\noindent\textbf{Keywords:} Euclidean spanner, minimum weight, integer lattice, Yao-graph
\end{abstract}

\section{Introduction}  \label{sec:intro}

For a set $S$ of $n$ points in a metric space and a parameter $t\geq 1$, a graph $G=(S,E)$ is a \emph{$t$-spanner} if $G$ contains, between every two points $p,q\in S$, a $pq$-path of weight at most $t\cdot \|pq\|$, where $\|pq\|$ denotes the distance between $p$ and $q$. In other words, a $t$-spanner distorts the true distances between any pair of points up to a factor $t$; the parameter $t$ is the \emph{stretch factor} of the spanner. Several optimization criteria have been considered for $t$-spanners, such as the \emph{size} (number of edges), the \emph{weight} (sum of edge weights), the \emph{maximum degree}, and the \emph{hop-diameter}. Specifically, the \emph{sparsity} of a spanner is the ratio $|E|/|S|$ between the number of edges and vertices; and the \emph{lightness} is the ratio between the weight of a spanner and the weight of a minimum spanning tree (MST) on $S$.

In this paper, we focus on the Euclidean $d$-space in constant dimension $d\in \N$. In this case, for every constant $\eps>0$, there exist $(1+\eps)$-spanners with sparsity and lightness that depends only on $\eps$ and $d$.
The precise dependence on $\eps$ (assuming constant dimension $d$) has almost completely been determined:
Every finite point set in $\R^d$ admits a Euclidean $(1+\eps)$-spanner with $O_d(\eps^{1-d})$ sparsity and $O_d(\eps^{-d} \log\eps^{-1})$ lightness. The sparsity bound is the best possible, and the lightness bound is tight up to the $\log\eps^{-1}$ factor~\cite{LeS22}. Several classical spanner constructions achieve sparsity $O_d(\eps^{1-d})$ in $\R^d$ such as Yao-graphs~\cite{althofer1993sparse,Clarkson87,ruppert1991approximating,Yao82}, $\Theta$-graphs~\cite{Kei88,KeilG92}, gap-greedy spanners~\cite{AryaS97} and path-greedy spanners~\cite{althofer1993sparse}; see the book by Narasimhan and Smid~\cite{NS-book} and the more recent surveys
surveys~\cite{GudmundssonK18,MitchellW18}.
For lightness, Das et al.~\cite{DasHN93,DasNS95,NS-book} were the first to construct $(1+\eps)$-spanners of lightness $\eps^{-O(d)}$ in Euclidean $d$-space. This bound has been generalized to metric spaces with doubling dimension $d$, see~\cite{BorradaileLW19,Gottlieb15,FiltserS20}. Recently, Le and Solomon~\cite{LeS22} showed that the path-greedy $(1+\eps)$-spanner in $\R^d$ has lightness $O(\eps^{-d}\log \eps^{-1})$; and so the same spanner achieves a tight bound for sparsity and an almost tight bound for lightness.

\paragraph{Lightness versus Minimum Weight.}
Lightness is a convenient optimization parameter for a spanner on a finite point set $S\subset \R^d$, as it is invariant under similarities (i.e., dilations, rotations, and translations). It also provides an approximation ratio for the minimum weight $(1+\eps)$-spanner on $S$, as the weight of a Euclidean MST (for short, EMST) of $S$ is a trivial lower bound on the spanner weight.
For $n$ points in the unit cube $[0,1]^d$, the weight of the EMST is $O_d(n^{1-1/d})$, and this bound is the best possible~\cite{Fe55,SteeleS89}. In particular, a suitably scaled section of the integer lattice attains this bound up to constant factors (in every constant dimension $d\in \N$). Supowit et al.~\cite{SupowitRP83} proved similar bounds for the minimum weight of other popular graphs, such as spanning cycles and perfect matchings on $n$ points in the unit cube $[0,1]^d$.

The combination of the current best bounds on lightness and the weight of an EMST implies that every set of $n$ points in $[0,1]^d$ admits a Euclidean $(1+\eps)$-spanner of weight
\begin{equation}\label{eq:naive}
 O\left(\eps^{-d} \log \eps^{-1} \cdot n^{1-1/d}\right).
\end{equation}
However, the lower bound constructions for the two (almost) tight upper bounds are very different:
The $\Omega_d(\eps^{-d})$ lightness lower bound holds for the union of $(d-1)$-dimensional grids on two opposite facets of a unit cube~\cite{LeS22}, and the $\Omega_d(n^{1-1/d})$ bound on EMST holds for a $d$-dimensional grid in a unit cube~\cite{Fe55}.
It turns out that these upper bounds cannot be attained simultaneously.
In fact, the bound~\eqref{eq:naive} can be improved to
\begin{equation}\label{eq:new}
    O_d\left(\eps^{(1-d^2)/d}\cdot n^{1-1/d}\right)
\end{equation}
for every constant $d\geq 2$, and \eqref{eq:new} is tight (Theorem~\ref{thm:cube}).
The improvement is most prominent in the plane:
from $O(\eps^{-2}\sqrt{n})$ to $O(\eps^{-3/2}\sqrt{n})$.

\paragraph{Contributions.}
We obtain a tight upper bound on the minimum weight of a Euclidean $(1+\eps)$-spanner for $n$ points in $[0,1]^d$.
\begin{theorem}\label{thm:cube}
For every constant $d\in \N$ and $\eps>0$, every set of $n$ points in the unit cube $[0,1]^d$ admits a Euclidean $(1+\eps)$-spanner
of weight $O_d(\eps^{(1-d^2)/d}n^{(d-1)/d})$, and this bound is the best possible.
\end{theorem}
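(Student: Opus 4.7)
The plan is to prove Theorem~\ref{thm:cube} in two parts, with the main insight being that the two tight extremal bounds known in the literature --- the $\Omega_d(n^{(d-1)/d})$ lower bound on EMST weight (tight on the integer grid) and the $\Omega_d(\eps^{-d})$ lightness lower bound (tight on two $(d-1)$-dimensional grids placed on opposite facets of the unit cube) --- cannot be saturated by the same configuration. The critical cluster size is $m_0 := \eps^{-(d-1)}$: at this size a complete graph on $m_0$ points in a cell of diameter $s$ has weight $O(m_0^2 s)$, and when $[0,1]^d$ is partitioned into $\Theta(n/m_0)$ such cells, the aggregate of these complete graphs already realises the target $O_d(\eps^{(1-d^2)/d}n^{(d-1)/d})$ bound.

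\textbf{Upper bound.} I plan to compute an adaptive partition of $[0,1]^d$ (akin to a balanced kd-tree or compressed quadtree) in which every cell contains $\Theta(m_0)$ points of $S$; there are $\Theta(n/m_0) = \Theta(n\eps^{d-1})$ such cells. The final spanner consists of (i) the complete graph on each cell's points, (ii) a $(1+\eps/2)$-spanner on the cell representatives, constructed by any standard method, and (iii) a single bridge edge from every point to its cell's representative. A packing argument shows $\sum_c s_c \leq O((n/m_0)^{(d-1)/d})$, so the weight of (i) is $O(m_0^2 \cdot (n/m_0)^{(d-1)/d}) = O(m_0^{1+1/d}\,n^{(d-1)/d}) = O(\eps^{(1-d^2)/d}\,n^{(d-1)/d})$. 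Using the naive bound $O(\eps^{-d}\log\eps^{-1})\cdot\MST$ for (ii) on its $O(n/m_0)$ input points and the trivial bound for (iii), both (ii) and (iii) turn out asymptotically smaller than (i) by a positive power of $\eps$. A routine verification then shows the union is a $(1+\eps)$-spanner, using that within-cell pairs are trivially $1$-spanned by (i), while between-cell pairs are $(1+\eps)$-spanned via the bridge-spanner-bridge route.

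\textbf{Lower bound.} I would tile $[0,1]^d$ with $k^d = n\eps^{d-1}$ cells of side $\delta := n^{-1/d}\eps^{-(d-1)/d}$, placing in each cell a scaled Le--Solomon extremal configuration (two $(d-1)$-grids on opposite facets) of $m_0$ points. Le--Solomon's lightness lower bound within each cell forces any $(1+\eps)$-spanner restricted to the cell to have weight $\Omega(\eps^{-d}\cdot \delta\cdot m_0^{(d-2)/(d-1)})$, and summing over the $k^d$ cells produces $\Omega_d(\eps^{(1-d^2)/d}n^{(d-1)/d})$ exactly. The certification that a global $(1+\eps)$-spanner restricts to a $(1+\eps)$-spanner on each cell follows because any spanning path that leaves and re-enters a cell picks up additive length $\Omega(\delta)$, which exceeds the $\eps\|pq\|$ slack for intra-cell pairs $p,q$ provided the cells are placed at mutual distance $\Omega(\delta/\eps)$, easily arranged in this configuration.

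\textbf{Main obstacle.} The principal difficulty I expect is the stretch verification in the upper bound for pairs $p,q$ lying in \emph{adjacent} cells: their true distance $\|pq\|$ may be comparable to the cell diameter $s$, so the bridge-spanner-bridge detour's additive $O(s)$ error can exceed $\eps\|pq\|$. Resolving this requires either augmenting with a constant number of direct short-range edges between neighbouring cells (whose aggregate weight must remain within the target) or using a hierarchical / WSPD variant in which the cell granularity adapts to the scale $\eps r$ at each scale $r$. Balancing these choices without losing the $\eps^{1/d}$ improvement is the central technical hurdle; the lower bound, by contrast, is a relatively routine aggregation over disjoint clusters.
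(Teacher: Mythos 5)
Your lower bound is essentially the paper's own construction in different packaging: the paper places $m=\Theta(\eps^{1-d})$ grid points on two opposite facets of each cell and derives the per-cell weight $\Omega(\eps^{2(1-d)}\delta)$ directly from an empty-ellipse criterion (every $(1+\eps)$-path between $a$ and $b$ lies in the ellipse with foci $a,b$ and major axis $(1+\eps)\|ab\|$, so if that ellipse contains no other point of $S$ the edge $ab$ is forced), which localizes automatically because these ellipses stay within distance $O(\sqrt{\eps}\,\delta)$ of their cells. Your black-box appeal to the Le--Solomon lightness bound yields the same per-cell count, but your stated separation requirement $\Omega(\delta/\eps)$ is both stronger than needed and harmful: $k$ cells of side $\delta$ at that mutual distance need a bounding box of side $k\delta/\eps=1/\eps$, and rescaling back into $[0,1]^d$ costs a factor of $\eps$ in the final bound. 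Separation $\Theta(\delta)$ already suffices, since a $(1+\eps)$-path between two intra-cell points of distance at most $\sqrt{d}\,\delta$ cannot visit a vertex at distance $\Omega(\delta)$ from the cell; with that correction the tiling fits and the aggregation goes through.

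The upper bound, however, has a genuine gap, and it is exactly the one you flag at the end. The graph consisting of per-cell complete graphs, bridge edges, and a spanner on cell representatives is not a $(1+\eps)$-spanner --- it is not even an $O(1)$-spanner: two points in adjacent cells at distance much smaller than the cell diameter $s$ have no route shorter than $\Omega(s)$. Neither proposed remedy closes this. A bounded number of direct edges between neighbouring cells cannot certify stretch $1+\eps$ for the unboundedly many near-boundary pairs occurring at all distance scales; and a hierarchical or WSPD decomposition whose granularity adapts to scale is precisely the construction the paper cites as having weight $O(\eps^{-(d+1)}n^{1-1/d}\log n)$, which loses both the $\log n$ factor and the $\eps^{1/d}$ improvement that is the point of the theorem. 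Nor can you superpose a standard light spanner to handle cross-cell pairs, since any lightness-optimal spanner already weighs $\Theta(\eps^{-d}\log\eps^{-1}\cdot n^{1-1/d})$, dominating the target. The paper avoids cells entirely: its \textsc{SparseYao} algorithm is a modified Yao graph built from cones of aperture $\Theta(\sqrt{\eps})$, refined to cones of aperture $\Theta(\eps)$ only where a local condition fails, and its weight is charged to (essentially disjoint) empty circular sectors of angle $\sqrt{\eps}$. The $\sqrt{\eps}$ aperture --- the minor-to-major axis ratio of the stretch ellipse --- is the source of the $\eps^{1/d}$ gain, and nothing in your cell decomposition plays that role. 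As written, the proposal correctly identifies the weight budget but does not contain a proof of the upper bound.
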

The upper bound is established by a new spanner algorithm, \textsc{SparseYao}, that modifies the classical Yao-graph construction using novel geometric insight (Section~\ref{sec:alg}). The weight analysis is based on a charging scheme that charges the weight of the spanner to empty regions (Section~\ref{sec:square}).
The lower bound construction is a lattice generated by orthogonal vectors of weight $\sqrt{\eps}$ and $\frac{1}{\sqrt{\eps}}$ (Section~\ref{sec:lower}); rather than a square grid.

A section of the integer lattice is the extremal configuration for many problems in combinatorial geometry (e.g., minimum-weight EMST, as noted above). Surprisingly, the minimum weight of Euclidean $(1+\eps)$-spanners for such a grid is still not known. We provide the first nontrivial upper and lower bounds in the plane.
\begin{theorem}\label{thm:grid}
For every $n\in \N$, the minimum weight of a $(1+\eps)$-spanner for the $n\times n$ section of the integer lattice is between $\Omega(\eps^{-3/4}\cdot n^2)$ and $O(\eps^{-1}\log\eps^{-1}\cdot n^2)$.
\end{theorem}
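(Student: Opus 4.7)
The plan is to establish the upper and lower bounds of Theorem~\ref{thm:grid} separately.

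For the upper bound $O(\eps^{-1}\log\eps^{-1}\cdot n^2)$, I would construct a hierarchical Yao-type spanner tailored to the integer lattice. The key observation is that the grid's own structure furnishes a natural set of canonical directions: primitive lattice vectors of norm at most $R$ form $\Theta(R^2)$ directions that are approximately equidistributed on the unit circle. I would use $O(\log\eps^{-1})$ dyadic length scales $r_k = 2^k$; at each scale, each grid point is connected along $\Theta(\eps^{-1})$ canonical directions to a grid representative at distance $\Theta(r_k)$. A per-scale analysis would give weight $O(\eps^{-1}\cdot n^2)$, and summing over the $O(\log\eps^{-1})$ scales yields the bound. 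The spanner property would follow from a multi-scale routing argument: for each pair $(A,B)$, the residual displacement is reduced geometrically by iteratively selecting, at the appropriate scale, the canonical direction closest to the current residual, using $1-\cos\alpha\approx\alpha^2/2$ to control the stretch contribution per step.

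For the lower bound $\Omega(\eps^{-3/4}\cdot n^2)$, I would use a direction-weighted counting argument based on Diophantine approximation. Given any $(1+\eps)$-spanner~$G$ and any grid pair $(A,B)$, the stretch inequality together with the projection identity along $B-A$ yields
\[
\sum_{e \in P_{AB}} |e|\bigl(1 - \cos(\theta_e - \theta_{AB})\bigr) \leq \eps\,\|AB\|,
\]
so most of the $G$-path edge length must lie within angle $O(\sqrt{\eps})$ of $\theta_{AB}$. I would partition $[0,2\pi)$ into $\Theta(\eps^{-1/2})$ angular sectors of width $\Theta(\sqrt{\eps})$ and use Diophantine equidistribution of primitive lattice vectors: choosing $R=\Theta(\eps^{-1/4})$ populates the unit circle with $\Theta(\eps^{-1/2})$ primitives at angular gap $\Theta(\sqrt{\eps})$, so each sector is naturally associated with a primitive direction $(p,q)$ of norm $\Theta(\eps^{-1/4})$. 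A careful counting over the $\Theta(n^2)$ grid pairs aligned with each such $(p,q)$ would show that the total edge weight of $G$ in the corresponding angular window is $\Omega(n^2\cdot \eps^{-1/4})$; since the $\Theta(\eps^{-1/2})$ angular windows are essentially disjoint, summing gives $\Omega(n^2\cdot \eps^{-3/4})$.

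I expect the main obstacle to be the per-sector lower bound, i.e., proving that each sector's angular window must contribute $\Omega(n^2\cdot \eps^{-1/4})$ to the total weight of~$G$. This requires a careful amortization across the $\Theta(n^2)$ pairs in each direction, ruling out the possibility that their $(1+\eps)$-paths share edges to an extent that lowers the per-sector weight. Making this accounting tight enough to produce the $\eps^{-3/4}$ factor, rather than the weaker $\eps^{-1/2}$ that a naive triangle-inequality argument would yield, seems to be the delicate step; it will likely rely on a Menger-type disjointness argument for a well-chosen subfamily of representative pairs, or on a potential-function bound on edge reuse between distinct primitive directions.
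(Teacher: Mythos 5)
Both halves of your proposal contain genuine gaps, and in each case the missing piece is precisely the idea the paper is built on.

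For the lower bound, you correctly identify the right family of directions (primitive vectors of norm $O(\eps^{-1/4})$, whose total length per lattice point is $\Theta(\eps^{-3/4})$), but the mechanism you propose --- amortizing path weight over the $\Theta(n^2)$ pairs in each angular window and controlling edge reuse via a Menger-type or potential-function argument --- is exactly the step you admit you cannot complete, and it is not needed. The paper's Lemma~\ref{lem:LBgrid} avoids all sharing issues with a \emph{forced-edge} criterion: if $\overrightarrow{oa}$ is primitive with $\|oa\|\leq O(\eps^{-1/4})$, then by Pick's theorem every lattice point off the line $oa$ is at distance at least $\|oa\|^{-1}\geq 2\sqrt{\eps}\,\|oa\|$ from that line, so the slab $\mathcal{L}_{oa}$ (hence the ellipse $\mathcal{E}_{oa}$ of all $(1+\eps)$-paths) contains no usable intermediate vertex, and the edge $oa$ must literally appear in \emph{every} $(1+\eps)$-spanner. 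Each forced edge lies in at most two translated stars, so the count is immediate; no disjointness or amortization argument is required. Without this observation your per-sector bound of $\Omega(n^2\eps^{-1/4})$ remains unproved.

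For the upper bound, the sketch does not produce the claimed weight. First, the accounting is inconsistent: if every one of the $n^2$ points connects along $\Theta(\eps^{-1})$ directions to a representative at distance $\Theta(r_k)$, the per-scale weight is $\Theta(\eps^{-1}n^2 r_k)$, not $O(\eps^{-1}n^2)$, and the relevant scales range up to $r_k=\Theta(n)$, so there are $O(\log n)$ of them, not $O(\log\eps^{-1})$; fixing this forces a hierarchical thinning of active points, at which point the stretch argument (and the source of the $\log\eps^{-1}$ factor) is no longer justified. Second, and more fundamentally, the claim that primitive vectors of norm at most $R$ are ``approximately equidistributed'' hides the crux: the shortest primitive vector in an angular sector of width $1/k$ can have length as large as $\Theta(k)$, and only \emph{on average} is it $O(\sqrt{k\log k})$. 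This is exactly what the paper's Farey-sequence machinery (Lemma~\ref{lem:num}, via the Kargaev--Zhigljavsky bound $\int_0^1\rho_n=\Theta(n^{-2}\log n)$) establishes, yielding $\sum_i q_i=O(k^{3/2}\log^{1/2}k)$ and $\sum_i q_i^3=O(k^3\log k)$. Moreover, even granting uniformity, a single-level star with $\Theta(\eps^{-1})$ directions per point has per-star weight $\Omega((\eps^{-1})^{3/2})=\Omega(\eps^{-3/2})$, since at most $O(r^2)$ sectors can be served by primitive vectors of length $\leq r$; reaching $O(\eps^{-1}\log\eps^{-1})$ per point requires the two-level cone structure of \textsc{SparseYao} ($\Theta(\eps^{-1/2})$ coarse cones, refined only where a Pick's-theorem area bound permits), which your construction does not replicate.
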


When scaled to $n$ points in the unit square $[0,1]^2$, the upper bound confirms that a square grid does not maximize the minimum weight of Euclidean $(1+\eps)$-spanners.
\begin{corollary}\label{cor:grid}
For every $n\in \N$, the minimum weight of a $(1+\eps)$-spanner for $n$ points in a scaled section of the integer grid in $[0,1]^2$ is
between $\Omega(\eps^{-3/4}\cdot \sqrt{n})$ and $O(\eps^{-1}\log\eps^{-1}\cdot \sqrt{n})$.
\end{corollary}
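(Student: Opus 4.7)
The plan is to derive Corollary~\ref{cor:grid} as a direct scaling consequence of Theorem~\ref{thm:grid}. For simplicity, I would first assume $n = m^2$ for an integer $m$, so that $S$ consists of an $m\times m$ section of the integer lattice with unit spacing rescaled to sit inside $[0,1]^2$. The $m\times m$ grid has diameter $\sqrt{2}(m-1)$, so after rescaling by the similarity factor $\lambda = 1/(m-1) = \Theta(1/\sqrt{n})$ the point set fits in $[0,1]^2$.

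The key observation is that both the $(1+\eps)$-spanner property and the notion of edge weight are invariant (up to the multiplicative factor $\lambda$) under similarities of the plane. Consequently, the minimum weight of a Euclidean $(1+\eps)$-spanner on the rescaled grid equals $\lambda$ times the minimum weight of a Euclidean $(1+\eps)$-spanner on the $m\times m$ integer grid. Applying Theorem~\ref{thm:grid} with $m$ in place of $n$ gives minimum spanner weight between $\Omega(\eps^{-3/4}\cdot m^2)$ and $O(\eps^{-1}\log\eps^{-1}\cdot m^2)$ on the integer grid itself; multiplying by $\lambda = \Theta(1/m) = \Theta(1/\sqrt{n})$ yields the claimed bounds $\Omega(\eps^{-3/4}\sqrt{n})$ and $O(\eps^{-1}\log\eps^{-1}\sqrt{n})$.

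For general $n$, I would replace the square grid with an $a\times b$ grid where $a = \lfloor\sqrt{n}\rfloor$ and $b = \lceil n/a\rceil$, possibly deleting at most $a-1$ points from the last row to reach exactly $n$ points. Since $a,b = \Theta(\sqrt{n})$ and the bounding box has side $\Theta(\sqrt{n})$, the scaling factor is again $\Theta(1/\sqrt{n})$. The upper bound of Theorem~\ref{thm:grid} immediately extends to rectangular grids (the construction there is translation invariant and the analysis is local), and for the lower bound any $a\times b$ rectangular grid with $a,b=\Theta(\sqrt{n})$ contains an $a\times a$ square subgrid, on which the $\Omega(\eps^{-3/4}\cdot a^2)$ bound applies; deleting a sublinear set of points does not affect the asymptotics.

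There is essentially no obstacle here beyond bookkeeping: the only mildly delicate point is verifying that the lower bound in Theorem~\ref{thm:grid} is robust to (i) passing from the square grid to a near-square rectangular grid and (ii) deleting $o(n)$ points so as to land on exactly $n$ vertices. Both are standard: restrict attention to an $a\times a$ subgrid far from the removed points and apply Theorem~\ref{thm:grid} verbatim. After this, scaling by $\lambda = \Theta(1/\sqrt{n})$ yields the stated matching $\Omega(\eps^{-3/4}\sqrt{n})$ and $O(\eps^{-1}\log\eps^{-1}\sqrt{n})$ bounds on $[0,1]^2$.
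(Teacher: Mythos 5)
Your proposal is correct and follows essentially the same route as the paper: apply Theorem~\ref{thm:grid} to a $\Theta(\sqrt{n})\times\Theta(\sqrt{n})$ section of the integer lattice and rescale by the similarity factor $\Theta(1/\sqrt{n})$, using the invariance of the stretch factor and the linear scaling of weight under similarities. The paper is in fact slightly more casual than you are (it works with the $\lfloor\sqrt{n}\rfloor\times\lfloor\sqrt{n}\rfloor$ grid, i.e.\ $\Theta(n)$ points, without the bookkeeping to hit exactly $n$), so your extra care is harmless and not a deviation in substance.
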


The lower bound is derived from an elementary criterion (the empty slab condition) for an edge to be present in every $(1+\eps)$-spanner (Section~\ref{sec:lower}). The upper bound is based on analyzing the \textsc{SparseYao} algorithm from Section~\ref{sec:alg}, combined with number theoretic results on Farey sequences (Section~\ref{sec:grid}). Closing the gap between the lower and upper bounds in Theorem~\ref{thm:grid} remains an open problem. Higher dimensional generalizations are also left for future work. In particular, multidimensional variants of Farey sequences are currently not well understood.

\paragraph{Further Related Previous Work.}
Many algorithms have been developed for constructing Euclidean $(1+\eps)$-spanners for $n$ points in $\R^d$~\cite{ABUAFFASH2022101807,ChanHJ20,DasHN93,DasN97,DasNS95,ElkinS15,Gottlieb15,GudmundssonLN02,LeS23,LevcopoulosNS02,RaoS98}, each designed for one or more optimization criteria (such as lightness, sparsity, hop diameter, maximum degree, or running time). See~\cite{GudmundssonK18,MitchellW18,NS-book} for comprehensive surveys. We briefly review previous constructions pertaining to the \emph{minimum weight} for $n$ points in the unit cube $[0,1]^d$,
where the weight of an EMST is $O(n^{1-1/d})$.

The path-greedy spanner algorithm~\cite{althofer1993sparse} (abbreviated as ``greedy'') generalizes Kruskal's MST algorithm: Given $\eps>0$ and $n$ points in a metric space, it sorts the ${n\choose 2}$ edges of the complete graph $K_n$ on the $n$ points by nondecreasing weight, and incrementally constructs a spanner $H$ starting from the empty graph; it adds an edge $uv$ to the spanner if $H$ does not already contain an $uv$-path of weight at most $(1+\eps)\|uv\|$.
As noted above, Le and Solomon~\cite{LeS22} recently proved that for $n$ points in the unit cube, the greedy algorithm returns a $(1+\eps)$-spanner of weight $O(\eps^{-d}\log \eps^{-1}\cdot  n^{1-1/d})$. This bound is weaker than Theorem~\ref{thm:cube} for $n$ points arranged in a grid, for example, but it is unclear whether the analysis of the greedy spanner can be improved to match the tight bound in Theorem~\ref{thm:cube}.

A classical method for constructing a $(1+\eps)$-spanners uses \emph{well-separated pair decompositions} (\emph{WSPD}) with a hierarchical clustering; see~\cite[Chap.~3]{Sariel}. Due to a hierarchy of depth $\Theta(\log n)$, this technique has been adapted broadly to dynamic, kinetic, and reliable spanners~\cite{BuchinHO20,ChanHJ20,gao2006deformable,Roditty12}. However, the weight of the resulting $(1+\eps)$-spanner for $n$ points in $[0,1]^d$ is $O(\eps^{-(d+1)}\cdot n^{1-1/d}\cdot\log n)$~\cite{gao2006deformable}. The $O(\log n)$ factor is due to the depth of the hierarchy; and it cannot be removed for any spanner with hop-diameter $O(\log n)$~\cite{AgarwalWY05,DinitzES10,SolomonE14}.

Several constructions for $(1+\eps)$-spanners combine hierarchical clustering with a greedy approach~\cite{DasN97,GudmundssonLN02} (either using approximate distances or running a greedy algorithm on a hierarchical spanner). These ``approximate'' greedy spanners achieve lightness $\eps^{-O(d)}$  in Euclidean $d$-space, hence weight $\eps^{-O(d)}\cdot n^{1-1/d}$ for $n$ points in $[0,1]^d$. They have been generalized to doubling metrics with the same lightness bound~\cite{BorradaileLW19,FiltserS20,Gottlieb15}.

Yao-graphs and $\Theta$-graphs are geometric proximity graphs, defined in the plane as follows. For a constant $k\geq 3$, consider $k$ cones of aperture $2\pi/k$ around each point $p\in S$; in each cone, connect $p$ to a ``closest'' point $q\in S$. The two graphs (Yao-graphs and $\Theta$-graphs) differ in the metric used for measuring the distance from $p$ to $q$: For Yao-graphs, $q$ minimizes the \emph{Euclidean distance} $\|pq\|$ from $p$ to $p$, and for $\Theta$-graphs $q$ minimizes the distance from $p$ to the orthogonal projection of $q$ to the angle bisector of the cone.
Both constructions generalize to $\R^d$, for all $d\geq 2$, where the number of cones is $O_d(k^{1-d})$.
For $\Theta$-graphs, the computation of closest points reduces to orthogonal range searching queries (rather than circular disk ranges for Yao-graphs), and a $\Theta$-graph for $n$ points in $\R^d$ can be computed in $O_d(nk^{1-d}\log^{d-1} n)$ time~\cite[Sec.~5.5]{NS-book}.
Both $\Theta$- and Yao-graphs are $(1+\eps)$-spanners if $k\in \Theta(\eps^{-1})$, and this bound cannot be improved~\cite{althofer1993sparse,KeilG92,NS-book,ruppert1991approximating}. Finding the optimal constant factors hidden in the notation $\Theta(\eps^{-1})$ is an active area of research~\cite{AichholzerBBBKRTV14,AkitayaBB22,BarbaBDFKORTVX15,BoseCMRV16,BoseHO21}. However, if we place $\lfloor n/2\rfloor$ and $\lceil n/2\rceil$ equally spaced points on opposite sides of the unit square, then the weight of both Yao- and $\Theta$-graphs with parameter $k=\Theta(\eps^{-1})$ will be $\Theta(\eps^{-1}\, n)$, which substantially exceeds the bound in Theorem~\ref{thm:cube} for any fixed $\eps>0$ as $n\rightarrow \infty$.

Optimizing various parameters of Euclidean spanners on the integer lattice is a natural problem. Previous work determined the minimum stretch factor in terms of the maximum degree $\Delta$ for small values of $\Delta$. There exist Euclidean $(1+\sqrt{2})$-spanners for $\Z^2$ with maximum degree $\Delta=3$, and $\sqrt{2}$-spanners with $\Delta=4$; and these bounds are the best possible~\cite{DumitrescuG16,GalantP22}.

\paragraph{Organization.}
We start with our lower bound constructions (Section~\ref{sec:lower}) as a warm-up. The two elementary geometric criteria for inclusion in a Euclidean $(1+\eps)$-spanner build intuition and highlight the significance of $\sqrt{\eps}$ as the ratio between the two axes of an ellipse of all paths of stretch at most $1+\eps$ between the foci. Section~\ref{sec:alg} presents Algorithm \textsc{SparseYao} and its stretch analysis in the plane. Its weight analysis for $n$ points in $[0,1]^2$ is in Section~\ref{sec:square}. The algorithm and its analysis generalize fairly easily to Euclidean $d$-space for every constant $d\geq 2$; this is sketched in Section~\ref{sec:d-space}. We analyze the performance of Algorithm \textsc{SparseYao} on the $n\times n$ grid, after a brief review of Feray sequences, in Section~\ref{sec:grid}.
We conclude with a selection of open problems in Section~\ref{sec:con}.

\section{Lower Bounds in the Plane}
\label{sec:lower}

We present lower bounds for the minimum weight of a $(1+\eps)$-spanner for the $n\times n$ section of the integer lattice (Section~\ref{ssec:gridLB}); and for $n$ points in a unit square $[0,1]^2$ (Section~\ref{ssec:squareLB}).

Let $S\subset \R^2$ be finite. Every pair of points, $a,b\in S$, determines a (closed) line segment $ab=\conv \{a,b\}$; the relative interior of $ab$ is denoted by $\mathrm{int}(ab)=ab\setminus \{a,b\}$.
Let $\mathcal{E}_{ab}$ denote the ellipse with foci $a$ and $b$, and great axis of weight $(1+\eps)\|ab\|$, and let $\mathcal{L}_{ab}$ be the slab bounded by two lines parallel to $ab$ and tangent to $\mathcal{E}_{ab}$; see Fig.~\ref{fig:ellipse}.
Note that the width of $\mathcal{L}_{ab}$ equals the minor axis of $\mathcal{E}_{ab}$,
which is $((1+\eps)^2-1^2)^{1/2}\|ab\|=(2\eps+\eps^2)^{1/2}\|ab\|>\sqrt{2\eps}\|ab\|$.
When $S\subset \Z^2$, the vector $\overrightarrow{ab}$ is called \emph{primitive} if $\overrightarrow{ab}=(x,y)$ and $\mathrm{gcd}(x,y)=1$.

\begin{figure}[htbp]
\begin{center}
  \includegraphics[width=0.8\textwidth]{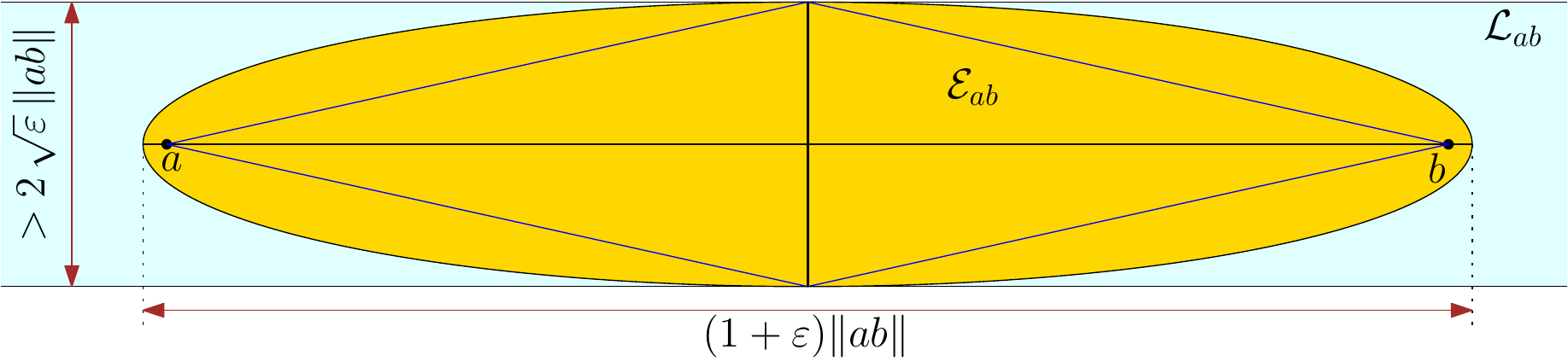}
\end{center}
\caption{The ellipse $\mathcal{E}_{ab}$ with foci $a$ and $b$, and great axis of weight $(1+\eps)\|ab\|$.}
\label{fig:ellipse}
\end{figure}

Consider the following two conditions for edges of a Euclidean $(1+\eps)$-spanner for $S$.
\begin{itemize}
\item \textbf{Empty ellipse condition}:  $S\cap \mathcal{E}_{ab}=\{a,b\}$.
\item \textbf{Empty slab condition}:  $S\cap \mathrm{int}(ab)=\emptyset$ and all points in $S\cap \mathcal{L}_{ab}$ are on the line $ab$.
\end{itemize}

\begin{observation}\label{obs:elementary}
Suppose $S\subset \R^2$ is finite, $G=(S,E)$ is a $(1+\eps)$-spanner for $S$, and $a,b\in S$.
\begin{enumerate}
\item If $ab$ meets the empty ellipse condition, then $ab\in E$.
\item If $S$ is a section of $\Z^2$, $\eps<1$, and $ab$ meets the empty slab condition, then $ab\in E$.
\end{enumerate}
\end{observation}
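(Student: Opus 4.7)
The plan is to derive both parts from a single elementary observation: if $G$ contains an $ab$-path $P = (a = p_0, p_1, \ldots, p_k = b)$ of weight at most $(1+\eps)\|ab\|$, then for every intermediate vertex $p_i$ with $1 \leq i \leq k-1$, splitting $P$ at $p_i$ and applying the triangle inequality on each half yields
\[
\|ap_i\| + \|p_i b\| \;\leq\; \mathrm{weight}(P) \;\leq\; (1+\eps)\|ab\|,
\]
so $p_i$ lies in the closed elliptical region $\mathcal{E}_{ab}$. After replacing $P$ by a simple $ab$-path of no larger weight, I may further assume that every intermediate vertex is distinct from $a$ and $b$.

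For part (1), I would assume $ab \notin E$ toward a contradiction. Any $ab$-path in $G$ then has at least one intermediate vertex $p \in S$, and the observation above places $p$ in $S \cap \mathcal{E}_{ab}$, contradicting the empty ellipse condition $S \cap \mathcal{E}_{ab} = \{a, b\}$.

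For part (2), I would again assume $ab \notin E$, producing an intermediate vertex $p \in S \cap \mathcal{E}_{ab} \subseteq S \cap \mathcal{L}_{ab}$. The empty slab condition forces $p$ to lie on the line through $a$ and $b$. A key auxiliary step is to show that $\overrightarrow{ab}$ must be a primitive lattice vector: otherwise, an intermediate lattice point on the open segment would belong to $S$ (as $S$ is a section of $\Z^2$), violating the condition $S \cap \mathrm{int}(ab) = \emptyset$. With primitivity in hand, any lattice point on the line $ab$ outside the closed segment $[a,b]$ lies at Euclidean distance at least $\|ab\|$ from the segment, so
\[
\|ap\| + \|pb\| \;=\; \|ab\| + 2\cdot \mathrm{dist}(p,[a,b]) \;\geq\; 3\|ab\|.
\]
Since $(1+\eps)\|ab\| < 2\|ab\| < 3\|ab\|$ whenever $\eps < 1$, this contradicts the path-weight bound derived above.

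I expect no substantive obstacle; the whole argument is elementary plane geometry. The only care needed is (i) passing to a simple $ab$-path so that intermediate vertices are genuinely new, and (ii) using that $S$ is a lattice section to upgrade the empty slab condition into primitivity of $\overrightarrow{ab}$, which is what controls the spacing of lattice points along line $ab$ and makes the slack $\eps < 1$ suffice.
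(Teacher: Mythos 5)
Your proof is correct and follows essentially the same route as the paper: the triangle inequality confines every vertex of a low-weight $ab$-path to $\mathcal{E}_{ab}$, and for part (2) you use $S\cap\mathrm{int}(ab)=\emptyset$ to get primitivity of $\overrightarrow{ab}$, hence spacing $\geq\|ab\|$ of lattice points on the line $ab$, together with $\mathcal{E}_{ab}\subset\mathcal{L}_{ab}$ to reduce to the empty ellipse case. The only cosmetic difference is that you derive the contradiction directly from an intermediate vertex rather than stating that the empty slab condition implies the empty ellipse condition; the substance is identical.
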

\begin{proof}
The ellipse $\mathcal{E}_{ab}$ contains all points $p\in \R^2$ satisfying $\|ap\|+\|pb\|\leq (1+\eps)\|ab\|$. By the triangle inequality, every point along an $ab$-path of weight at most $(1+\eps)\|ab\|$ satisfies this inequality, hence the entire path lies in $\mathcal{E}_{ab}$. The empty ellipse condition implies that such a path cannot have interior vertices.

If $S$ is the integer lattice, then $S\cap \mathrm{int}(ab)=\emptyset$ implies that $\overrightarrow{ab}$ is a primitive vector, hence the distance between any two lattice points along the line $ab$ is at least $\|ab\|$. Given that $\mathcal{E}_{ab}\subset \mathcal{L}_{ab}$, the empty slab condition now implies the empty ellipse condition.
\end{proof}

\subsection{Lower Bounds for the Grid}
\label{ssec:gridLB}

We can now establish the lower bound in Theorem~\ref{thm:grid}.

\begin{lemma}\label{lem:LBgrid}
For every $\eps\in (0,1]$ and every integer $n\geq 2\,\eps^{-1/4}$, the weight of every $(1+\eps)$-spanner for the $n\times n$ section of the integer lattice is $\Omega(\eps^{-3/4}n^2)$.
\end{lemma}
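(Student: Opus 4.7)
The plan is to exhibit many edges that must lie in every $(1+\eps)$-spanner via the empty slab condition of Observation~\ref{obs:elementary}(2), and then sum their lengths.

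First I would identify which primitive vectors yield forced edges. Let $a,b\in\Z^2$ with $\overrightarrow{ab}=(p,q)$ primitive, so $\mathrm{int}(ab)\cap\Z^2=\emptyset$. Because $\gcd(p,q)=1$, B\'ezout's identity implies that the minimum distance from any lattice point outside the line through $ab$ to that line is exactly $1/\|(p,q)\|$. On the other hand, $\mathcal{L}_{ab}$ has half-width equal to the semi-minor axis of $\mathcal{E}_{ab}$, namely $\tfrac12\sqrt{2\eps+\eps^2}\,\|(p,q)\|$. Hence the empty slab condition holds whenever
\[
\tfrac12\sqrt{2\eps+\eps^2}\,\|(p,q)\|^2<1,
\]
which for $\eps\in(0,1]$ is implied by $\|(p,q)\|\leq L$, where $L:=c\,\eps^{-1/4}$ for a suitable absolute constant $c>0$. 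By Observation~\ref{obs:elementary}(2), every such lattice segment is an edge of every $(1+\eps)$-spanner.

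Second, I would count and weigh these forced edges. For each primitive direction $(p,q)$ with $0<\|(p,q)\|\leq L$, the number of placements $a$ such that both $a$ and $a+(p,q)$ lie in the $n\times n$ grid is $(n-|p|)(n-|q|)\geq (n-L)^2\geq n^2/4$, where the last inequality uses the hypothesis $n\geq 2\eps^{-1/4}$ (choosing $c\leq 1$ ensures $L\leq n/2$). Counting each unordered edge once by fixing a half-plane of directions (e.g., $p>0$, or $(p,q)=(0,1)$), the total weight of forced edges is at least
\[
\frac{n^2}{4}\sum_{\substack{(p,q)\text{ primitive}\\ 0<\|(p,q)\|\leq L}}\|(p,q)\|.
\]
The density $6/\pi^2$ of primitive vectors in $\Z^2$ gives this sum as $\Theta(L^3)$; more elementarily, the annulus $L/2\leq\|v\|\leq L$ contains $\Omega(L^2)$ primitive vectors each of length at least $L/2$, yielding the lower bound $\Omega(L^3)$. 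Substituting $L=\Theta(\eps^{-1/4})$ produces the desired $\Omega(\eps^{-3/4}\,n^2)$.

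The main subtlety lies not in the geometry but in the uniform handling of the regime where $\eps$ is close to $1$ and $L=O(1)$, where the asymptotic density estimate for primitive vectors is vacuous. In that regime I would verify directly that $(1,0)$ and $(0,1)$ always satisfy the slab condition for $\eps\leq 1$, which forces all $2n(n-1)$ axis-parallel unit lattice segments, contributing $\Theta(n^2)$ to the spanner weight; this already matches $\Omega(\eps^{-3/4}n^2)=\Omega(n^2)$ when $\eps=\Theta(1)$. A careful choice of the constant $c$ and a case split between small and large $L$ then completes the argument.
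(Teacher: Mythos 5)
Your proposal is correct and follows essentially the same route as the paper: both identify, via the empty slab condition, that every primitive lattice vector of length $O(\eps^{-1/4})$ forces an edge (you via B\'ezout and the slab half-width, the paper via Pick's theorem), both bound the total weight of such forced directions by $\Theta(\eps^{-3/4})$ using the density of primitive vectors, and both multiply by $\Theta(n^2)$ placements while controlling multiplicity. The only differences are presentational (counting direction--placement pairs rather than translated stars, and your explicit treatment of the $\eps=\Theta(1)$ regime), so no further comparison is needed.
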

\begin{proof}
Let $S=\{(s_1,s_2)\in \Z^2: 0\leq s_1,s_2<n\}$ and $A=\{(a_1,a_2)\in \Z^2: 0\leq a_1,a_2<\lceil \eps^{-1/4}\rceil/2\}$.
Denote the origin by $o=(0,0)$. For every grid point $a\in A$, we have $\|oa\|\leq \eps^{-1/4}/\sqrt{2}$.
We show that every primitive vector $\overrightarrow{oa}$ with $a\in A$ satisfies the empty slab condition. It is clear that $S\cap \mathrm{int}(oa)=\emptyset$. Let $s\in S$ be a point that is not on the line $\mathrm{span}(oa)$ spanned by $oa$. By Pick's theorem, $\area(\Delta(oas))\geq \frac12$. Since $\area(\Delta(oas)) =\frac12\cdot \|oa\|\cdot \mathrm{dist}(\mathrm{span}(oa),s)$, then the distance between the line spanned by $oa$ and $s$ is at least $\|oa\|^{-1}\geq \sqrt{2}\cdot \eps^{1/4} \geq 2\, \eps^{1/2}\, \|oa\|$;
and so $s\notin \mathcal{L}_{oa}$, as claimed.

By elementary number theory, $\overrightarrow{oa}$ is primitive for $\Theta(|A|)$ points $a\in A$. Indeed, every $a_1\in \N$ is relatively prime to $N\varphi(a_1)/a_1$ integers in every interval of length $N$, where $\varphi(.)$ is Euler's totient function, and $\varphi(a_1)=\Theta(a_1)$. Consequently, the total weight of primitive vectors $\overrightarrow{oa}$, $a\in A$, is $\Theta(|A|\cdot \eps^{-1/4})=\Theta(\eps^{-3/4})$.

The edges $oa$, for all primitive vectors $\overrightarrow{oa}$ with $a\in A$, form a star centered at the origin. The translates of this star to other points $s\in S$, with $0\leq s_1,s_2\leq \frac{n}{2}\leq n - \lceil \eps^{-1/4}\rceil$ are also present in every $(1+\eps)$-spanner for $S$. As every edge is part of at most two such stars, summation over $\Theta(n^2)$ stars yields a lower bound of $\Omega(\eps^{-3/4}n^2)$.
\end{proof}

\begin{remark}
{\rm
The lower bound in Lemma~\ref{lem:LBgrid} derives from the total weight of primitive vectors $\overrightarrow{oa}$ with $\|oa\|\leq O(\eps^{-1/4})$, which satisfy the empty slab condition. There are additional primitive vectors that satisfy the empty ellipse condition (e.g., $\overrightarrow{oa}$ with $a=(1,y)$ for all $|y|<\eps^{-1/3}$). However, it is unclear how to account for all vectors satisfying the empty ellipse condition, and whether their overall weight would improve the lower bound in Lemma~\ref{lem:LBgrid}.
}
\end{remark}

\begin{remark}
{\rm
The empty ellipse and empty slab conditions each imply that an edge \emph{must} be present in every $(1+\eps)$-spanner for $S$. It is unclear how the total weight of such ``must have'' edges compare to the minimum weight of a $(1+\eps)$-spanner.
}
\end{remark}

\begin{remark}
{\rm
We obtain a grid of $\Theta(n)$ points in the unit square $[0,1]^2$ if we scale down the $\lceil \sqrt{n}\rceil\times \lceil \sqrt{n}\rceil$ section of the integer lattice by a factor of $\lceil \sqrt{n}\rceil$. The EMST of this grid has weight $\Theta(\sqrt{n})$. Le and Solomon~\cite{LeS22} proved that the greedy $(1+\eps)$-spanner has weight $O(\eps^{-2}\log \eps^{-1} \sqrt{n})$, although the true performance of the greedy algorithm might be better. In contrast, Theorem~\ref{thm:cube} yields a $(1+\eps)$-spanner of weight $O(\eps^{-3/2}\sqrt{n})$; and
Corollary~\ref{cor:grid} further improves it to $O(\eps^{-1}\log(\eps^{-1})\sqrt{n})$.
}
\end{remark}

\subsection{Lower Bounds in the Unit Square}
\label{ssec:squareLB}

We continue with the lower bound in Theorem~\ref{thm:cube}.

\begin{lemma}\label{lem:squareLB}
For every $\eps\in (0,1]$ and every integer $n\geq 2\,\lfloor \eps^{-1}/2\rfloor$, there exists a set $S$ of $n$ points in $[0,1]^2$ such that every $(1+\eps)$-spanner for $S$ has weight $\Omega(\eps^{-3/2}\sqrt{n})$.
\end{lemma}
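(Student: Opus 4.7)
The plan is to establish the lower bound by tiling the unit square with scaled copies of a ``locally heavy'' configuration and aggregating a per-cell bound. Set $m = 2\lfloor\eps^{-1}/2\rfloor = \Theta(\eps^{-1})$ and $N = \lfloor n/m\rfloor = \Theta(n\eps)$. Partition $[0,1]^2$ into $N$ square cells of side $\ell(1+\eps/2)$ with $\ell = \Theta(1/\sqrt{N})$, leaving a buffer of width $\eps\ell/2$ between adjacent cells. In each cell, place a scaled copy $S_0$ of the classical planar lightness-lower-bound configuration~\cite{LeS22}: $m/2$ equally spaced points on each of two parallel horizontal segments spanning the cell. The underlying rectangular sublattice of $S_0$ has orthogonal generator lengths in ratio $\sqrt{\eps}:1/\sqrt{\eps}$, matching the construction hint in the introduction. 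The total number of points is $Nm = \Theta(n)$, with any remaining $n - Nm < m$ points placed arbitrarily in a reserved corner.

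The crux of the argument is a locality lemma: for any two points $u,v$ in the same cell $C$, every $(1+\eps)$-path from $u$ to $v$ in the spanner must remain entirely inside $C$. Indeed, any excursion out of $C$ crosses the buffer twice, adding at least $\eps\ell$ to the path length, which strictly exceeds the allowed overhead $\eps\|uv\|\le\eps\ell$. Hence the spanner restricted to each cell is itself a $(1+\eps)$-spanner on that cell's $m$ points. Invoking the lightness lower bound, scaled to a cell of diameter $\ell$, the weight contributed by each cell is $\Omega(\eps^{-2}\ell)$. Summing over the $N$ cells gives
\[
\Omega(N \cdot \eps^{-2}\ell) \;=\; \Omega\!\left(\eps^{-2}\sqrt{N}\right) \;=\; \Omega\!\left(\eps^{-2}\sqrt{n\eps}\right) \;=\; \Omega\!\left(\eps^{-3/2}\sqrt{n}\right),
\]
as required. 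The admissibility hypothesis $n \ge 2\lfloor\eps^{-1}/2\rfloor$ ensures that at least one cell is nontrivial (i.e., $N\ge 1$).

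The main obstacle is the $\Omega(\eps^{-2})$ weight bound on the $\Theta(\eps^{-1})$-point sub-configuration via Observation~\ref{obs:elementary}. The natural route is to exhibit $\Omega(\eps^{-1})$ distinct top-to-bottom ``slopes'' and, for each slope, $\Omega(\eps^{-1})$ top--bottom pairs whose ellipses contain no other configuration point, yielding $\Omega(\eps^{-2})$ must-have edges of length $\Theta(1)$ each. Verifying the empty ellipse condition uniformly across these candidate pairs --- in particular, choosing the vertical separation between the two segments so that the ellipse is ``thin enough'' to exclude all same-segment neighbors yet ``long enough'' to exclude all cross-segment neighbors other than the intended endpoints --- is the most delicate step; the locality lemma and the final summation are routine by comparison.
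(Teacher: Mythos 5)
Your overall strategy is the same as the paper's: tile the unit square with $\Theta(\eps n)$ cells, place in each cell a copy of the ``$\Theta(\eps^{-1})$ points on each of two opposite sides'' configuration, argue that each copy independently forces weight $\Omega(\eps^{-2})$ at its own scale, and sum. (The paper builds the construction at unit scale and rescales at the end, which is only cosmetically different.) However, your proof has a genuine quantitative error in the locality lemma. A detour that leaves the segment $uv$ by transverse depth $h$ costs an excess of only about $2h^2/\|uv\|$ over $\|uv\|$ (reflect across the far boundary of the buffer), not $2h$. With buffer width $h=\eps\ell/2$ and $\|uv\|=\Theta(\ell)$, the excess is $\Theta(\eps^2\ell)$, far below the allowed overhead $\eps\|uv\|=\Theta(\eps\ell)$; equivalently, the ellipse $\mathcal{E}_{uv}$ of admissible $(1+\eps)$-paths has half-width $\sqrt{\eps/2}\,\|uv\|\gg \eps\ell$, so it reaches points of neighboring cells and paths can legally escape your buffer. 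This is precisely the $\sqrt{\eps}$-versus-$\eps$ distinction the paper's Section~\ref{sec:lower} is built around (the minor axis of $\mathcal{E}_{ab}$ is $\Theta(\sqrt{\eps})\|ab\|$). The fix is to make the buffer width $\Theta(\sqrt{\eps}\,\ell)$, which still wastes only an $O(\sqrt{\eps})$ fraction of each cell, or to dispense with the locality lemma entirely and argue, as the paper does, directly from the empty ellipse condition (Observation~\ref{obs:elementary}(1)) that each forced edge's ellipse avoids all other points, including those of neighboring copies.

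The second gap is that you explicitly defer the crux, the per-cell $\Omega(\eps^{-2})$ bound, describing only ``the natural route'' and flagging the empty-ellipse verification as the delicate step without carrying it out. The paper closes this in one stroke: with $m=\lfloor\eps^{-1}/2\rfloor$ points on each of two opposite unit-length sides, every top--bottom pair satisfies the empty ellipse condition, so the whole bipartite graph $K_{m,m}$ (weight $\Omega(m^2)=\Omega(\eps^{-2})$, each edge of length between $1$ and $\sqrt{2}$) must be present; no slope-by-slope case analysis is needed. You should either import that argument or actually do the verification you sketch; as written, the central inequality of the lemma is asserted rather than proved.
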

\begin{proof}
First let $S_0$ be a set of $2m$ points, where $m=\lfloor \eps^{-1}/2\rfloor$, with $m$ equally spaced points on two opposite sides of a unit square. By the empty ellipse property, every $(1+\eps)$-spanner for $S_0$ contains a complete bipartite graph $K_{m,m}$. The weight of each edge of $K_{m,m}$ is between $1$ and $\sqrt{2}$, and so the weight of every $(1+\eps)$-spanner for $S_0$ is $\Omega(\eps^{-2})$.

For $n>2m$, consider an $\lfloor \sqrt{\eps n}\rfloor\times \lfloor \sqrt{\eps n}\rfloor$ grid of unit squares, and insert a translated copy of $S_0$ in each unit square. Let $S$ be the union of these $\Theta(\eps n)$ copies of $S_0$; and note that $|S|=\Theta(n)$. A $(1+\eps)$-spanner for each copy of $S_0$ still requires a complete bipartite graph of weight $\Omega(\eps^{-2})$. Overall, the weight of every $(1+\eps)$-spanner for $S$ is $\Omega(\eps^{-1}n)$.

Finally, scale $S$ down by a factor of $\lfloor \sqrt{\eps n}\rfloor$ so that it fits in a unit square. The weight of every edge scales by the same factor, and the weight of a $(1+\eps)$-spanner for the resulting $n$ points in $[0,1]^2$ is $\Omega(\eps^{-3/2}\, \sqrt{n})$, as claimed.
\end{proof}

\begin{remark}
{\rm
The points in the lower bound construction above lie on $O(\sqrt{\eps n})$ axis-parallel lines in $[0,1]^2$, and so the weight of their EMST is $O(\sqrt{\eps n})$. Recall that the lightness of the greedy $(1+\eps)$-spanner is $O(\eps^{-d}\log \eps^{-1})$~\cite{LeS22}. For $d=2$, it yields a $(1+\eps)$-spanner of weight $O(\eps^{-2}\log \eps^{-1})\cdot \|\MST(S)\|=O(\eps^{-3/2}\log (\eps^{-1}) \sqrt{n} )$.
Note that the greedy algorithm returns a $(1+\eps)$-spanner of nearly minimum weight in this case.
}
\end{remark}

\section{Spanner Algorithm: Sparse Yao-Graphs}
\label{sec:alg}

For clarity, we present our new spanner algorithm in full detail in the plane (in this section), and then sketch the straightforward generalization to $\R^d$ (in Section~\ref{sec:d-space}).
Let $S$ be a set of $n$ points in $\R^2$ and $\eps\in (0,\frac19)$. As noted above, the Yao-graph $Y_k(S)$ with $k=\Theta(\eps^{-1})$ cones per vertex is a $(1+\eps)$-spanner for $S$~\cite{althofer1993sparse,Clarkson87}.
We describe a new algorithm, \textsc{SparseYao}$(S,\eps)$, by modifying the classical Yao-graph construction (Section~\ref{ssec:alg}); and show that it returns a $(1+\eps)$-spanner for $S$ (Section~\ref{ssec:stretch}). Later, we use this algorithm for $n$ points in the unit square (Section~\ref{sec:square}; and for an $n\times n$ section of the integer lattice (Section~\ref{sec:grid}).

The basic idea is that, in some cases, cones of aperture $\Theta(\sqrt{\eps})$ suffice for the inductive proof that establishes the spanning ratio $1+\eps$. This idea is fleshed out in Section~\ref{ssec:stretch}.
The use of the angle $\sqrt{\eps}$ then allows us to charge the weight of the resulting spanner to the area of empty regions (specifically, to an empty section of a cone) in Section~\ref{sec:square}.

\subsection{Sparse Yao-Graph Algorithm}
\label{ssec:alg}

We present an algorithm that computes a modified version of the Yao-graph for a point set $S$ and parameter $\eps>0$. It starts with cones of aperture $\Theta(\sqrt{\eps})$, and refines them to cones of aperture $\Theta(\eps^{-1})$. We connect each point $p\in S$ to the closest point in each large cone, and add edges in the small cones only if necessary. To specify when exactly the small cones are needed, we define two geometric regions that will also play crucial roles in the stretch and weight analyses.

\begin{figure}[htbp]
\begin{center}
  \includegraphics[width=0.7\textwidth]{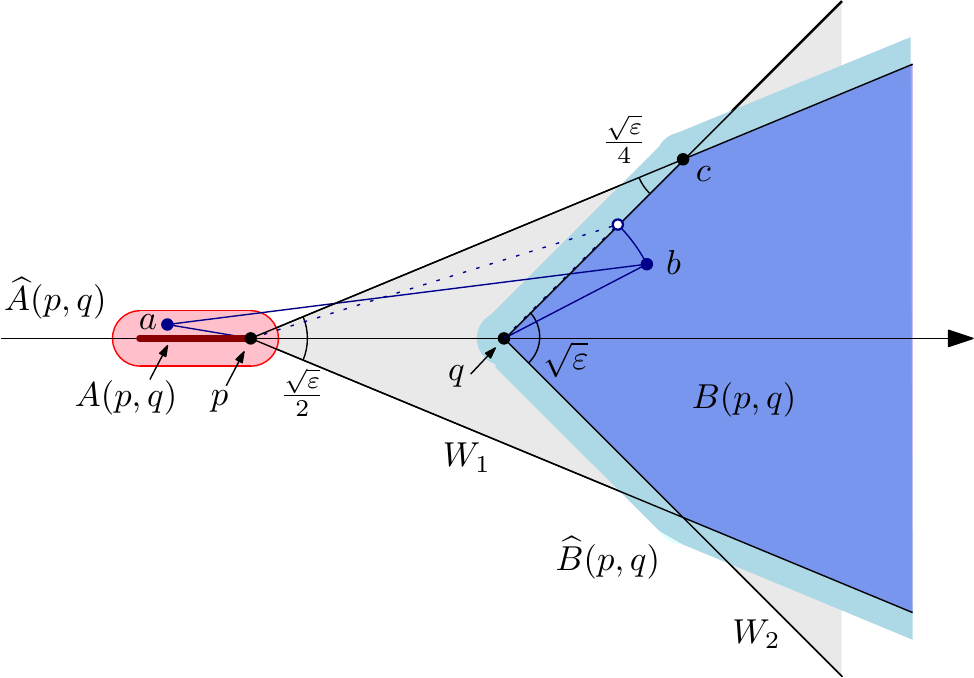}
\end{center}
\caption{Cones $W_1$ and $W_2$, line segment $A(p,q)$,
and regions $\widehat{A}(p,q)$, $B(p,q)$, and $\widehat{B}(p,q)$ for a pair of points $p,q\in S$.}
\label{fig:wedge}
\end{figure}

\smallskip\noindent\textbf{Definitions.}
Let $p, q\in S$ be distinct points; refer to Fig.~\ref{fig:wedge}.
Let $A(p,q)$ be the line segment of weight $\frac{\sqrt{\eps}}{16}\,\|pq\|$ on the line $pq$ with one endpoint at $p$ but interior-disjoint from the ray $\overrightarrow{pq}$; and $\widehat{A}(p,q)$ the set of points in $\R^2$ within distance $\frac{\eps}{64}\,\|pq\|$ from $A(p,q)$.
Let $W_1$ be the cone with apex $p$, aperture $\frac12\cdot \sqrt{\eps}$, and symmetry axis $\overrightarrow{pq}$; and let $W_2$ be the cone with apex $q$, aperture $\sqrt{\eps}$, and the same symmetry axis $\overrightarrow{pq}$. Let $B(p,q)=W_1\cap W_2$.
Finally, let $\widehat{B}(p,q)$ be the set of points in $\R^2$ within distance at most $\frac{\eps}{32}\,\|pq\|$ from $B(p,q)$.

We show below (Lemma~\ref{lem:technical3}) that if we add edge $pq$ to the spanner, then we do not need any of the edges $ab$ with $a\in \widehat{A}(p,q)$ and $b\in \widehat{B}(p,q)$. We can now present our algorithm.

\paragraph{Algorithm \textsc{SparseYao}$(S,\eps)$.}
Input: a set $S\subset \R^2$ of $n$ points, and $\eps\in (0,\frac19)$.

\smallskip\noindent\textbf{Preprocessing Phase.}
Subdivide $\R^2$ into $k:=\lceil 16\,\pi/\sqrt{\eps}\rceil$ congruent cones of aperture $2\pi/k\leq \frac18\cdot \sqrt{\eps}$ with apex at the origin, denoted $C_1,\ldots ,C_k$. For $i\in \{1,\ldots ,k\}$, let $\overrightarrow{r}_i$ be the symmetry axis of $C_i$ directed from the origin towards th interior of $C_i$.
For each $i\in \{1,\ldots , k\}$, subdivide $C_i$ into $k$ congruent cones of aperture $2\pi/k^2\leq \eps/8$, denoted $C_{i,1},\ldots , C_{i,k}$; see Fig.~\ref{fig:Yao}. For each point $s\in S$, let $C_i(s)$ and $C_{i,j}(s)$, resp., be the translates of cones $C_i$ and $C_{i,j}$ to apex $s$.

For all $s\in S$ and $i\in \{1,\ldots ,k\}$, let $q_i(s)$ be a closest point to $s$ in $C_i(s)\cap (S\setminus \{s\})$, if such a point exists.
For every $i\in\{1,\ldots , k\}$ and $j\in \N$, let $L_{i,j}$ be the list of all ordered pairs $(s,q_i(s))$ with $2^{-j}\leq \|s q_i(s)\|< 2^{1-j}$,
sorted in decreasing order of the orthogonal projection of $s$ to $\overrightarrow{r}_i$
(e.g., if $\overrightarrow{r}_i$ is the $x$-axis, then $L_{i,j}$ is sorted by decreasing $x$-coordinates of the points $s$.)
For every $i\in\{1,\ldots , k\}$, let $L_i$ be the concatenation of the lists $L_{i,j}$ by decreasing $j$ (i.e., shorter edges first).

For all $s\in S$, indices $i,j\in \{1,\ldots ,k\}$, and parameter $f\geq 0$, let $q_{i,j}(s,f)$ denote a closest point to $s$ that lies in
$C_{i,j}(s)\cap S$ and satisfies $\|s q_{i,j}(s,f)\|> f$; if such a point exists.

\smallskip\noindent\textbf{Main Phase: Computing a Spanner.}
Initialize an empty graph $G=(S,E)$ with $E:=\emptyset$.
\begin{enumerate}
\item For all $i\in \{1,\ldots , k\}$, do:
    \begin{itemize}
    \item While the list $L_i$ is nonempty, do:
        \begin{enumerate}
        \item Let $(p,q_i(p))$ be the first ordered pair in $L_i$, and put $q_i:=q_i(p)$ for short.
        \item Add (the unordered edge) $pq_i$ to $E$.
        \item For all $i'\in \{i-1,i, i+1\}$ and $j\in \{1,\ldots ,k\}$, do:\\
        Put $q_{i',j}:= q_{i',j}(p, \frac13 \|pq_i\|)$.
        If $q_{i',j}\notin B(p,q_i)$,
         then add $p q_{i',j}$ to $E$.
        \item For every $s\in \widehat{A}(p,q_i)$, including $s=p$, delete the pair $(s,q_i(s))$ from $L_i$.
        \end{enumerate}
    \end{itemize}
\item Return $G=(S,E)$.
\end{enumerate}

\begin{figure}[htbp]
\begin{center}
  \includegraphics[width=0.8\textwidth]{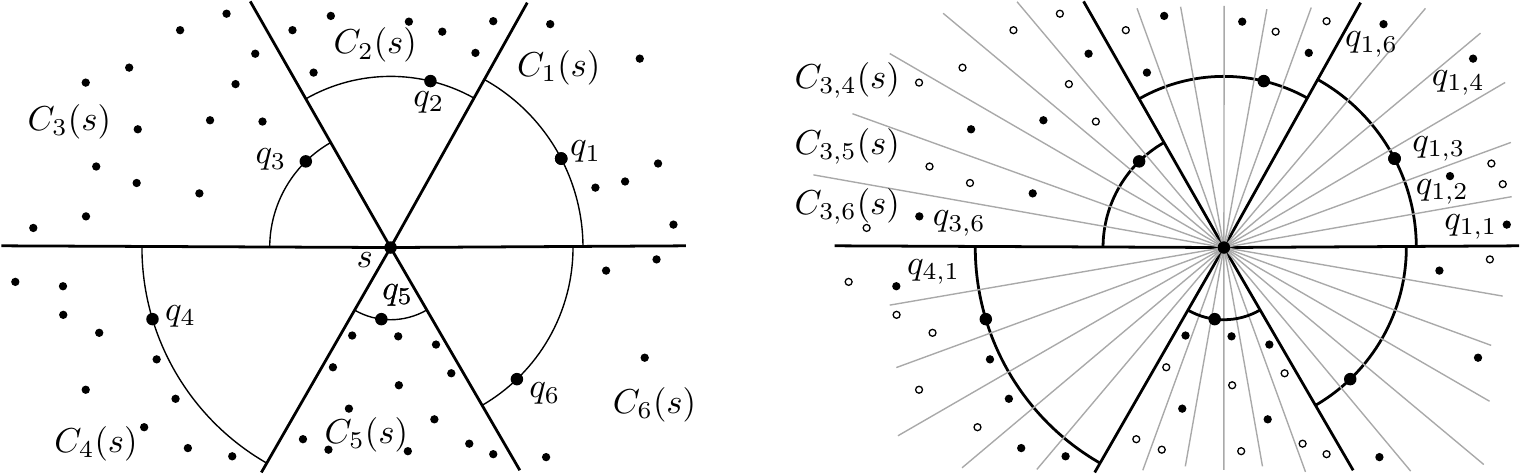}
\end{center}
\caption{Cones $C_i(s)$ and $C_{i,j}(s)$ for a point $s\in S$, with $k=6$.}
\label{fig:Yao}
\end{figure}

\begin{remark}
{\rm
Each $L_i$ is sorted by two criteria: the sublists $L_{i,j}$ are sorted by orthogonal projections of the apices $s$ to the directed lines $\overrightarrow{r}_i$,
and $L_i$ itself sorted by weight rounded down to the nearest power of two. The order by weight is used in the stretch analysis, and the order by projections in the weight analysis.

The main algorithm adds edges to the graph $G=(S,E)$ in steps~1b and~1c in each while loop.
In step~1b, it adds \emph{some} of the edges of the Yao graph $Y_k(S)$, where $k=\Theta(\eps^{-1/2})$.
It does not necessarily add all edges of $Y_k(S)$: Initially all edges of $Y_k(S)$ are in the lists $L_i$,
but some of them may be removed from $L_i$ in a step~1d, and they may not be added to $G$.

In step~1b, the algorithm adds edges from point $p$ to some points $q_{i',j}$ in the cones $C_{i',j}(p)$, where $i-1\leq i'\leq i+1$ (i.e., small cones contained in three consecutive large cones). In the large cone $C_i(p)$, we know that $q_i$ is a closest point to $p$, so for $i'=i$ the points $q_{i,j}$ are the closest points to $p$ in the small cones. However, for $i'=i\pm 1$, the points $q_{i',j}$ are not necessarily closest to $p$ in $C_{i',j}(p)$: they are the closest to $p$ among points at distance at least $\frac13 \|pq_i\|$ from $p$. Furthermore, an edge $pq_{i',j}$ is added to $G$ only if $q_{i',j}$ does not lie in the region $B(p,q_i)$ shown in Fig.~\ref{fig:wedge}. This guarantees that $\|pq_{i',j}\|$ is comparable to $\|pq_i\|$. The lower bound $\|pq_{i',j}\|\geq \frac13\|pq_i\|$ will be used in the stretch analysis, and the upper bound $\|pq_{i',j}\|\leq 2\,\|pq_i\|$ in the weight analysis.
}
\end{remark}

\begin{remark}
{\rm
It is clear that the runtime of Algorithm \textsc{SparseYao} is polynomial in $n$.
In particular, the preprocessing phase constructs the Yao-graph $Y_{k}(S)$ with $O(\eps^{-1/2}n)$ edges.
A data structure that can return $q_{i,j}(s,f)$ for a query $(i,j,s,f)$ could be based on standard range searching data structures~\cite{Aga17a}; or simply by sorting all edges $st$ by weight in each cone $C_{i,j}(s)$ in $O(n\log n)$ time. The main phase of the algorithm computes the graph $G$ in $O(\eps^{-1}n)$ time.
Optimizing the runtime, however, is beyond the scope of this paper.
}
\end{remark}

\subsection{Stretch Analysis}
\label{ssec:stretch}

In this section, we show that $G=\textsc{SparseYao}(S,\eps)$ is a $(1+\eps)$-spanner for $S$. We know that the Yao-graph with $k^2=\Theta(\eps^{-1})$ cones is a $\eps$-spanner for $S$. The following seven lemmas justify that we do not need edges in all $k^2$ cones to obtain a $(1+\eps)$-spanner.

The first three technical lemmas show that \emph{if} $G$ already contains $(1+\eps)$-paths from $a$ to $p$ and from $q$ to $b$, then we can concatenate them with the edge $pq$ to obtain a $(1+\eps)$-path from $a$ to $b$. We start with the special case that $a=p$ and $b\in B(p,q)$ in Lemma~\ref{lem:technical1}, then extend it to the case when $a\in A(p,q)$ and $b\in B(p,q)$ in Lemma~\ref{lem:technical2}. The general case, where $a\in \widehat{A}(p,q)$ and $b\in \widehat{B}(p,q)$, is tackled in Lemma~\ref{lem:technical3}. In inequalities~\eqref{eq:technical1} and~\eqref{eq:technical2} below, we use $\left(1+\frac{\eps}{2}\right)\|pq\|$ and $\left(1+\frac{\eps}{3}\right)\|pq\|$, resp., instead of $\|pq\|$ to leave room to absorb further error terms in the general case.

\begin{lemma}\label{lem:technical1}
Let $p,q\in \mathbb{R}^2$. For any two points $a=p$ and $b\in B(p,q)$, we have
\begin{equation}\label{eq:technical1}
\left(1+\frac{\eps}{2}\right)\|pq\|+(1+\eps)\|qb\|\leq (1+\eps)\|ab\|.
\end{equation}
\end{lemma}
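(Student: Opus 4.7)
The plan is to place $p$ at the origin and $q$ at $(L, 0)$ with $L := \|pq\|$, so the common symmetry axis of $W_1$ and $W_2$ is the positive $x$-axis. A point $b = (x, y) \in B(p,q) = W_1 \cap W_2$ then satisfies $|y| \leq x\tan(\sqrt{\eps}/4)$ (from membership in $W_1$, whose half-angle is $\sqrt{\eps}/4$) and $|y| \leq (x-L)\tan(\sqrt{\eps}/2)$ (from $W_2$, whose half-angle is $\sqrt{\eps}/2$). In particular $x \geq L$, so $b$ lies beyond $q$ in the forward direction inside a very narrow wedge. Since $a = p$, the inequality~\eqref{eq:technical1} rearranges to
\[
\|pb\| - \|qb\| \;\geq\; \frac{1+\eps/2}{1+\eps}\,L,
\]
which matches intuition: when $b$ is nearly collinear with $p$ and $q$ and lies beyond $q$, the triangle inequality is almost tight and $\|pb\| - \|qb\| \approx L$.

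The key algebraic step I would use is the identity
\[
\|pb\| - \|qb\| \;=\; \frac{\|pb\|^2 - \|qb\|^2}{\|pb\| + \|qb\|} \;=\; \frac{L(2x-L)}{\|pb\| + \|qb\|},
\]
which reduces the problem to proving the upper bound $\|pb\| + \|qb\| \leq \tfrac{1+\eps}{1+\eps/2}(2x-L)$. For this I would bound each summand by its $x$-component times a secant drawn from the cone constraint, namely $\|pb\| \leq x\sec(\sqrt{\eps}/4)$ and $\|qb\| \leq (x-L)\sec(\sqrt{\eps}/2)$. The elementary bound $\sec\theta \leq 1 + \theta^2$ (valid for $|\theta| \leq 1$, and applicable here since $\eps < \tfrac{1}{9}$) then yields $\|pb\| + \|qb\| \leq x(1 + \eps/16) + (x-L)(1 + \eps/4)$.

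Since $x \geq L$ gives both $x \leq 2x - L$ and $x - L \leq 2x - L$, the last expression is at most $(2x-L)\bigl(1 + \tfrac{5\eps}{16}\bigr)$, and the remaining obligation is the purely numerical check $1 + \tfrac{5\eps}{16} \leq \tfrac{1+\eps}{1+\eps/2}$. Clearing denominators reduces this to $\eps \leq 6/5$, which is comfortably satisfied on $\bigl(0, \tfrac{1}{9}\bigr)$. The main obstacle is not any single step but careful bookkeeping with the constants: the half-apertures $\sqrt{\eps}/4$ and $\sqrt{\eps}/2$ contribute $O(\eps)$ multiplicative distortions to $\|pb\|$ and $\|qb\|$, respectively, and one must verify that the asymmetric stretch budget on the two sides of~\eqref{eq:technical1} (a factor $1+\eps/2$ on the $\|pq\|$ term versus $1+\eps$ on the right-hand side) absorbs their sum with enough slack left over for the extensions to $a \in A(p,q)$ and then $a \in \widehat{A}(p,q)$, $b \in \widehat{B}(p,q)$ in the subsequent technical lemmas.
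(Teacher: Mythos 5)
Your proof is correct, and it takes a genuinely different route from the paper's. The paper first reduces to $b\in\partial B(p,q)$ by rotating $qb$ about $q$, then splits into two cases according to whether $b$ lies on $\partial W_2$ (the segment $qc$) or on $\partial W_1$, parametrizes the relevant angle by $t\in[0,1]$, and in each case decomposes $\|pb\|=\|pq^\perp\|+\|q^\perp b\|$ via the foot of the perpendicular from $q$, balancing the secant error terms against whichever of the two pieces is the larger half of $\|pb\|$. You instead work directly in coordinates and exploit the identity $\|pb\|-\|qb\|=\bigl(\|pb\|^2-\|qb\|^2\bigr)/\bigl(\|pb\|+\|qb\|\bigr)=L(2x-L)/\bigl(\|pb\|+\|qb\|\bigr)$, which eliminates the $y$-coordinate from the numerator entirely and converts the whole lemma into a single upper bound on $\|pb\|+\|qb\|$; the two cone constraints then give $\|pb\|\le x\sec(\sqrt{\eps}/4)$ and $\|qb\|\le (x-L)\sec(\sqrt{\eps}/2)$ with no case split and no boundary reduction. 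All the steps check out: the constraint $x\ge L$ does follow from membership in $W_2$, the bound $\sec\theta\le 1+\theta^2$ is valid in the range used, the estimate $x\cdot\tfrac{\eps}{16}+(x-L)\cdot\tfrac{\eps}{4}\le \tfrac{5\eps}{16}(2x-L)$ is correct, and the final numerical condition does reduce to $\eps\le 6/5$, which holds with room to spare on $(0,\tfrac19)$. What your approach buys is a shorter, case-free argument with transparent constant bookkeeping; what it gives up is only the reusable geometric setup (the projection $q^\perp$ and the parametrization by $t$) that the paper recycles across its two cases. Your closing remark is also on point: the slack you retain (effectively $\eps\le 6/5$ versus the required $\eps<\tfrac19$) is what Lemmas~\ref{lem:technical2} and~\ref{lem:technical3} consume when $a$ moves into $\overline{A}(p,q)$ and then $a,b$ move into the thickened neighborhoods.
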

\begin{proof}
We start with two simplifying assumptions.

\noindent (i) We may assume that $p$ is the origin, $q$ is on the positive $x$-axis, and $b$ is on or above the $x$-axis, by applying a suitable congruence if necessary.

\noindent (ii) We may assume w.l.o.g.\ that $b$ is in the boundary $\partial B(p,q)$ of $B(p,q)$, since if we rotate the segment $qb$ around $q$, then the left hand side of \eqref{eq:technical1} does not change, but the right hand side is minimized for $b\in\partial B(p,q)$.

Let us review the Taylor estimates of some of the trigonometric functions.
For the secant function, we use the upper bound
$\sec x = \frac{1}{\cos x} < 1+x^2$ for $0<x<\frac12$.
For the tangent function, we use the lower and upper bounds $x\leq \tan x<  x+\frac{x^3}{2}$  for $0<x<\frac12$.

To prove~\eqref{eq:technical1},
we distinguish between two cases based on whether $b\in \partial W_1$ or $b\in \partial W_2$.
Let $c$ be the intersection point of $\partial W_1$ and $\partial W_2$ above the $x$-axis.
Since $\angle pcq = \angle qpc = \sqrt{\eps}/4$, then $\Delta{pqc}$ is an isosceles triangle
with $\|pq\|=\|qc\|$.

\paragraph{Case~1: $b\in qc$ (Fig.~\ref{fig:cases}(left)).}
Note that $0\leq \angle qpb\leq \sqrt{\eps}/4$. Assume
$\angle qpb = t\cdot \sqrt{\eps}/4$ for some $t\in [0,1]$.
Since the interior angles of triangle $\Delta{pqb}$ add up to $\pi$,
then $\angle qbp = (2-t)\sqrt{\eps}/4$.

Let $q^\perp$ be the orthogonal projection of $q$ to $pb$. Then $\|pb\|=\|pq^\perp\|+\|q^\perp b\|$.
Since $\angle qpb \leq \angle qbp$ implies $\angle qpq^\perp  \leq \angle qb q^\perp$,
then we have $\|q^\perp b\|\leq \|p q^\perp\|$,
and so $\|q^\perp b\|\leq \frac12\, \|pb\|$.

We are now ready to prove \eqref{eq:technical1} in Case~1:
\begin{align*}
\left(1+\frac{\eps}{2}\right)\|pq\|+(1+\eps)\|qb\|
&= \left(1+\frac{\eps}{2}\right) \|pq^\perp \| \sec \angle  qpb
    +(1+\eps) \|q^\perp b\|  \sec \angle qbp \\
&= \left(1+\frac{\eps}{2}\right) \|p q^\perp\| \sec \frac{t\,\sqrt{\eps}}{4}
    +(1+\eps) \|q^\perp b\| \sec \frac{(2-t)\sqrt{\eps}}{4}\\
&\leq \left(1+\frac{\eps}{2}\right) \left(1+\frac{t^2\eps}{16}\right)\|pq^\perp\|
    +(1+\eps)\left(1+ \frac{(2-t)^2 \eps}{16}\right)\|q^\perp b\| \\
&< \left(1+\frac{(t^2+9)\eps}{16}\right)\|pq^\perp\|
    +\left(1+\frac{(t^2-4t+21)\eps}{16}\right)\|q^\perp b\| \\
&=  \left(1+\frac{(t^2+9)\eps}{16}\right) \left(\|p q^\perp\|+\|q^\perp b\|\right)
    +  \frac{(-4t+12)\eps}{16}\,  \|q^\perp b\| \\
&\leq  \left(1+\frac{(t^2+9)\eps}{16}\right) \|p b\| +  \frac{(-2t+6)\eps}{8}\cdot \frac{\|pb\|}{2}\\
&\leq  \left(1+ \eps\cdot \frac{t(t-2)+15}{16}\right) \|pb\| \\
&<  \left(1+ \eps \right) \|pb\|,
\end{align*}
as required.

\begin{figure}[htbp]
\begin{center}
  \includegraphics[width=0.9\textwidth]{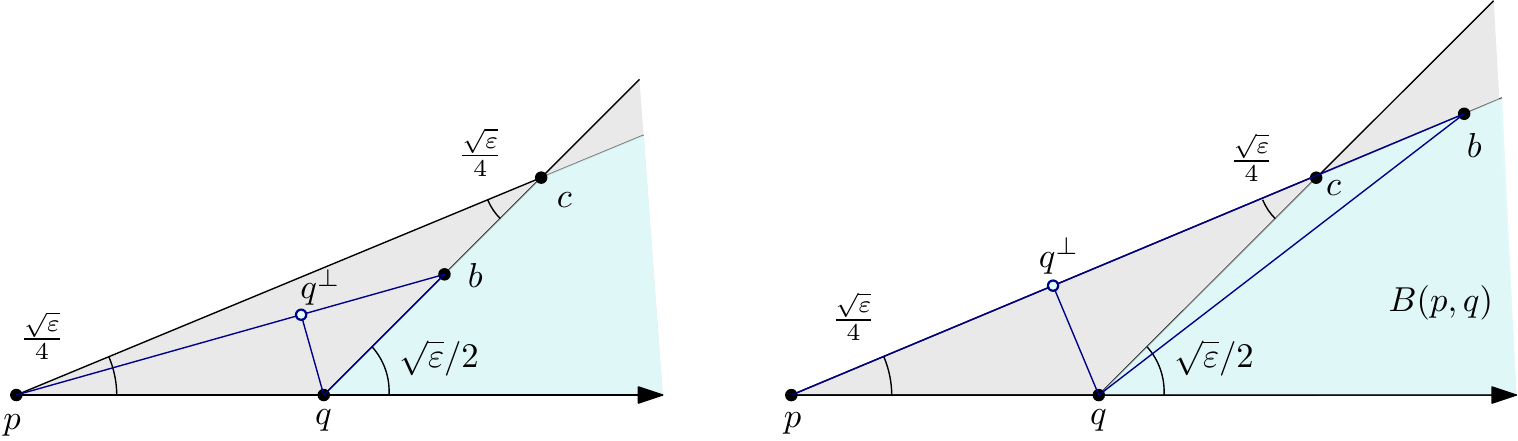}
\end{center}
\caption{Left: Case~1 where $b\in qc$.
    Right: Case~2, where $b$ lies to the right of $c$ on the boundary of $B(p,q)$.}
\label{fig:cases}
\end{figure}

\paragraph{Case~2: $b\in \partial B(p,q)\cap \partial W_1$ (Fig.~\ref{fig:cases}(right)).}
In this case, $\angle qpb=\sqrt{\eps}/4$ is fixed.
Note that $0\leq \angle pbq \leq \angle pcq\leq  \sqrt{\eps}/4$.
Assume $\angle pbq = t\cdot \sqrt{\eps}/4$ for some $t\in [0,1]$.

Since $\angle qpb \geq \angle qbp$ implies $\angle qp q^\perp \geq qb q^\perp$,
then we get $\|p q^\perp\|\leq \|q^\perp b\|$ hence $\|p q^\perp\|\leq \frac12\, \|pb\|$.
Furthermore, the right triangles $\Delta{pq q^\perp}$ and $\Delta{bq q^\perp}$ yield
 \[
 \|q q^\perp\|=\|p q^\perp\| \tan \angle qpb = \|q^\perp b\| \tan \angle qbp.
 \]
This further implies
\[
\|q^\perp b\|
=\|p q^\perp\| \frac{\tan \angle qpb}{\tan \angle qbp}
= \|p q^\perp\| \frac{\tan \left(\frac{\sqrt{\eps}}{4}\right) }{\tan\left(t\cdot \frac{\sqrt{\eps}}{4}\right)}
\leq \|pq^\perp \| \frac{\frac{\sqrt{\eps}}{4}+\frac12 \left(\frac{\sqrt{\eps}}{4}\right)^3}{t\cdot \frac{\sqrt{\eps}}{4}}
\leq \|p q^\perp \| \frac{1+\eps/32}{t}.
\]

We are now ready to prove \eqref{eq:technical1} in Case~2:
\begin{align*}
\left(1+\frac{\eps}{2}\right)\|pq\|+(1+\eps)\|qb\|
&= \left(1+\frac{\eps}{2}\right) \|pq^\perp\| \sec \angle qpb +(1+\eps) \|q^\perp b\|\sec \angle qbp \\
&=  \left(1+\frac{\eps}{2}\right) \|p q^\perp \|\sec \frac{\sqrt{\eps}}{4} +(1+\eps)\|q^\perp b\| \sec \frac{t\,\sqrt{\eps}}{4}\\
&\leq  \left(1+\frac{\eps}{2}\right)\left(1+\frac{\eps}{16}\right) \|p q^\perp \|
    +(1+\eps)\left(1+\frac{t^2\eps}{16}\right)\|q^\perp b\| \\
&<  \left(1+\frac{10\eps}{16}\right) \|p q^\perp \|
    +\left(1+\eps+\frac{t^2\eps(1+\eps)}{16}\right)\|q^\perp b\| \\
&=(1+\eps) \left(\|p q^\perp\|+\|q^\perp b\|\right) +
    \frac{\eps}{16}\left(t^2(1+\eps)\|q^\perp b\|-6 \|p q^\perp \|  \right)\\
&\leq (1+\eps) \|pb\| +  \frac{\eps}{16}\left(t(1+\eps)\left(1+\frac{\eps}{32}\right) -6 \right) \|p q^\perp \|\\
&< (1+\eps) \|pb\|,
\end{align*}
as required, since $0<t\leq 1$ and $0<\eps<1/9$.

We have confirmed \eqref{eq:technical1} in both cases. The completes the proof of Lemma~\ref{lem:technical1}.
\end{proof}

For points $p,\in \mathbb{R}^2$, let $\overline{A}(p,q)$ be the line segment of weight $\frac{\sqrt{\eps}}{2}\,\|pq\|$ on the line $pq$ with one endpoint at $p$ but interior-disjoint from the ray $\overrightarrow{pq}$.
In particular, we have $A(p,q)\subset \overline{A}(p,q)$.

\begin{lemma}\label{lem:technical2}
For all $a\in \overline{A}(p,q)$ and $b\in B(p,q)$, we have
\begin{equation}\label{eq:technical2}
(1+\eps)\|ap\|+\left(1+\frac{\eps}{3}\right)\|pq\|+(1+\eps)\|qb\|\leq (1+\eps)\|ab\|.
\end{equation}
\end{lemma}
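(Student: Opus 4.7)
The plan is to reduce to Lemma~\ref{lem:technical1} by routing the $ab$-path through $p$. Applying Lemma~\ref{lem:technical1} to $(p,q,b)$ yields $(1+\tfrac{\eps}{2})\|pq\|+(1+\eps)\|qb\|\leq (1+\eps)\|pb\|$. Substituting this into the left-hand side of \eqref{eq:technical2} and simplifying the $(1+\eps)\|qb\|$ contribution, the target inequality reduces to the single geometric claim
\[
\|ap\|+\|pb\|-\|ab\|\leq \frac{\eps}{6(1+\eps)}\,\|pq\|,
\]
i.e., a bound on the triangle defect at $p$ proportional to $\|pq\|$.

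To establish this defect bound I would place $p$ at the origin with $q$ on the positive $x$-axis. Then $a\in \overline{A}(p,q)$ lies on the non-positive $x$-axis with $\|ap\|\leq \tfrac{\sqrt{\eps}}{2}\|pq\|$, while $b\in B(p,q)\subseteq W_1$ makes an angle $\phi:=\angle qpb\leq \sqrt{\eps}/4$ with $\overrightarrow{pq}$. Hence the interior angle of triangle $apb$ at $p$ equals $\pi-\phi$, and the law of cosines together with $1-\cos\phi\leq \phi^2/2$ gives
\[
(\|ap\|+\|pb\|)^2-\|ab\|^2 = 2\|ap\|\|pb\|(1-\cos\phi)\leq \|ap\|\|pb\|\,\phi^2.
\]
Factoring the left side as $(\|ap\|+\|pb\|-\|ab\|)(\|ap\|+\|pb\|+\|ab\|)$ and bounding the second factor below by $\|ap\|+\|pb\|$ yields
\[
\|ap\|+\|pb\|-\|ab\|\leq \frac{\|ap\|\|pb\|\,\phi^2}{\|ap\|+\|pb\|}\leq \min(\|ap\|,\|pb\|)\,\phi^2.
\]

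The key geometric observation is that $b\in W_2$ forces the projection of $b$ onto the line $pq$ to lie at or beyond $q$ in the direction $\overrightarrow{pq}$ (since the half-aperture $\sqrt{\eps}/2$ of $W_2$ is well below $\pi/2$), so $\|pb\|\geq \|pq\|$. Combined with $\|ap\|\leq \tfrac{\sqrt{\eps}}{2}\|pq\|<\|pq\|\leq \|pb\|$, this forces $\min(\|ap\|,\|pb\|)=\|ap\|$, whence
\[
\|ap\|+\|pb\|-\|ab\|\leq \|ap\|\,\phi^2 \leq \tfrac{\sqrt{\eps}}{2}\|pq\|\cdot \tfrac{\eps}{16} = \tfrac{\eps^{3/2}}{32}\|pq\|.
\]
A direct check shows $\tfrac{\eps^{3/2}}{32}\leq \tfrac{\eps}{6(1+\eps)}$ for all $\eps\in(0,1]$ (equivalent to $(1+\eps)\sqrt{\eps}\leq 16/3$), completing the reduction. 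The only conceptually delicate step is recognising that $W_2$ pins $\|pb\|$ from below by $\|pq\|$, which is precisely what makes $\|ap\|$ (and not $\|pb\|$) the binding factor in the harmonic-mean estimate; everything else is a short application of the law of cosines and trigonometric Taylor bounds already used in Lemma~\ref{lem:technical1}.
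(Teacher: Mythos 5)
Your proof is correct, but it takes a genuinely different route from the paper. The paper observes that $a\in \overline{A}(p,q)$ lies on the symmetry axis behind $p$, so the cone $W_1$ translated back to apex $a$ contains the original, giving $B(p,q)\subseteq B(a,q)$; it then applies Lemma~\ref{lem:technical1} directly to the pair $(a,q)$, uses collinearity to write $\|aq\|=\|ap\|+\|pq\|$, and redistributes the slack $\frac{\eps}{2}\|aq\|$ between the $\|ap\|$ and $\|pq\|$ terms (using $\|ap\|\leq\frac13\|pq\|$). You instead apply Lemma~\ref{lem:technical1} at $p$ as stated and absorb the extra leg $ap$ by bounding the triangle defect $\|ap\|+\|pb\|-\|ab\|$ via the law of cosines at the angle $\pi-\phi$, $\phi\leq\sqrt{\eps}/4$. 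Your reduction to $\|ap\|+\|pb\|-\|ab\|\leq\frac{\eps}{6(1+\eps)}\|pq\|$ is algebraically right, the harmonic-mean estimate $\frac{\|ap\|\,\|pb\|}{\|ap\|+\|pb\|}\leq\min(\|ap\|,\|pb\|)$ is valid, and the observation that $W_2$ (a forward cone at $q$ of half-aperture $\sqrt{\eps}/2<\pi/2$) forces $\|pb\|\geq\|pq\|$ correctly pins the minimum to $\|ap\|\leq\frac{\sqrt{\eps}}{2}\|pq\|$, giving the defect bound $\frac{\eps^{3/2}}{32}\|pq\|$, which comfortably beats $\frac{\eps}{6(1+\eps)}\|pq\|$. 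What the paper's approach buys is brevity and no new trigonometry — one containment observation and a line of algebra; what yours buys is that it does not require noticing the cone containment $B(p,q)\subseteq B(a,q)$, at the cost of a short law-of-cosines computation. Both are sound; you could tighten your write-up only by noting explicitly that the degenerate case $a=p$ is immediate from Lemma~\ref{lem:technical1}.
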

\begin{proof}
Note that $a\in \overline{A}(p,q)$ implies $B(p,q)\subseteq B(a,q)$ since $a\in \overline{A}(p,q)$ lies on the common symmetry axis of $B(p,q)$, $W_1$, and $W_2$. Then \eqref{eq:technical1} can be written as
 \begin{align}\label{eq:reduction}
\left(1+\frac{\eps}{2}\right)\|aq\|+(1+\eps)\|qb\| &\leq (1+\eps)\|ab\| \nonumber\\
\left(1+\frac{\eps}{2}\right)\left(\|ap\| + \|pq\|\right)+(1+\eps)\|qb\|&\leq (1+\eps)\|ab\| \nonumber\\
\left(\|ap\|+\frac{\eps}{2}\, \|ap\| + \frac{\eps}{6}\, \|pq\|\right)
   + \left(1+\frac{\eps}{3}\right)\|pq\|+(1+\eps)\|qb\|&\leq (1+\eps)\|ab\|\nonumber\\
(1+\eps)\|ap\|+\left(1+\frac{\eps}{3}\right)\|pq\|+(1+\eps)\|qb\|&\leq (1+\eps)\|ab\|,
\end{align}
as $\|ap\|\leq \|\overline{A}(pq)\| = \sqrt{\eps} \|pq\| < \frac13\,\|pq\|$ for $\eps<\frac19$.
\end{proof}

In the general case, we have  $a\in \widehat{A}(p,q)$ and $b\in \widehat{B}(p,q)$.
However, for technical reasons (cf.~Lemma~\ref{lem:tilde} below), we use a larger neighborhood instead of $\widehat{A}(p,q)$.
Recall that $\widehat{A}(p,q)$ is the set of points in $\R^2$ within distance $\frac{\eps}{64}\,\|pq\|$ from $A(p,q)$.
Now let $\tilde{A}(p,q)$ be the set of points in $\R^2$ within distance at most $\frac{3\eps}{32}$ from $\overline{A}(p,q)$.

\begin{lemma}\label{lem:technical3}
For all $a\in \tilde{A}(p,q)$ and $b\in \widehat{B}(p,q)$, we have
\begin{equation}\label{eq:technical3}
(1+\eps)\|ap\|+\|pq\|+(1+\eps)\|qb\|\leq (1+\eps)\|ab\|.
\end{equation}
\end{lemma}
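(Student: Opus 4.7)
The plan is to reduce Lemma~\ref{lem:technical3} to Lemma~\ref{lem:technical2} by a projection/perturbation argument. Observe that $\tilde{A}(p,q)$ and $\widehat{B}(p,q)$ are $O(\eps\|pq\|)$-neighborhoods of $\overline{A}(p,q)$ and $B(p,q)$, respectively, while~\eqref{eq:technical2} has an $\eps/3$ slack deliberately built into the coefficient of $\|pq\|$ (the target inequality~\eqref{eq:technical3} has coefficient $1$, not $1+\eps/3$). My hope is that this slack is precisely wide enough to absorb the error incurred when perturbing from the enlargements back to the original sets.

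Concretely, given $a\in \tilde{A}(p,q)$ and $b\in \widehat{B}(p,q)$, I would select nearest points $a'\in \overline{A}(p,q)$ and $b'\in B(p,q)$, which by the definitions of $\tilde{A}$ and $\widehat{B}$ satisfy $\|aa'\|\leq \tfrac{3\eps}{32}\|pq\|$ and $\|bb'\|\leq \tfrac{\eps}{32}\|pq\|$. Lemma~\ref{lem:technical2} applied to the primed pair yields~\eqref{eq:technical2}. Three applications of the triangle inequality (lower bounding $\|a'p\|$ by $\|ap\|-\|aa'\|$, lower bounding $\|qb'\|$ by $\|qb\|-\|bb'\|$, and upper bounding $\|a'b'\|$ by $\|ab\|+\|aa'\|+\|bb'\|$) then transfer~\eqref{eq:technical2} to the unprimed quantities, at the cost of an additive error of $2(1+\eps)(\|aa'\|+\|bb'\|)$ on the right-hand side.

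The only step requiring verification, and the place where the constants have to cooperate, is showing that this error fits inside the available slack, i.e., that
\[
2(1+\eps)(\|aa'\|+\|bb'\|)\leq \tfrac{\eps}{3}\|pq\|.
\]
Using the bounds above, the left-hand side is at most $\tfrac{(1+\eps)\eps}{4}\|pq\|$, so this reduces to the elementary inequality $\tfrac{1+\eps}{4}\leq \tfrac{1}{3}$, which holds whenever $\eps\leq \tfrac{1}{3}$ and thus under our standing hypothesis $\eps<\tfrac{1}{9}$. I don't expect any genuine obstacle here: the numerical constants $\tfrac{3\eps}{32}$ and $\tfrac{\eps}{32}$ in the definitions of $\tilde{A}$ and $\widehat{B}$ were evidently chosen to make precisely this bookkeeping close in a single step, so the proof should amount to writing down the three triangle inequalities and checking the constant.
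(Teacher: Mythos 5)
Your proposal is correct and is essentially identical to the paper's own proof: both pick $a'\in \overline{A}(p,q)$ and $b'\in B(p,q)$ within the prescribed distances, apply Lemma~\ref{lem:technical2} to the primed pair, and absorb the total perturbation error $2(1+\eps)(\|aa'\|+\|bb'\|)\leq \frac{(1+\eps)\eps}{4}\|pq\|$ into the $\frac{\eps}{3}\|pq\|$ slack, which reduces to $\frac{1+\eps}{4}\leq\frac13$ for $\eps<\frac19$. No gaps.
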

\begin{proof}
Since $a\in \tilde{A}(p,q)$ and $b\in \widehat{B}(p,q)$, then there exist
$a'\in \overline{A}(p,q)$ and $b'\in B(p,q)$ with $\|aa'\|\leq \frac{3\eps}{32}\,\|pq\|$ and $\|bb'\|\leq \frac{\eps}{32}\,\|pq\|$.
In particular, $\|aa'\| + \|bb'\|\leq \frac{\eps}{8}\,\|pq\|$.
Combining these inequalities with Lemma~\ref{lem:technical2} for points $a'$ and $b'$, we obtain
\begin{align*}
(1+\eps)\|ap\|+\|pq\|+(1+\eps)\|qb\|
&\leq \Big((1+\eps)\|a'p\|+\|pq\|+(1+\eps)\|qb'\|\Big) + (1+\eps)\Big(\|aa'\|+\|bb'\|\Big)\\
&\leq \left( (1+\eps)\|a'b'\| - \frac{\eps}{3}\, \|pq\|\right) + (1+\eps)\frac{\eps}{8}\, \|pq\| \\
&\leq (1+\eps)\Big( \|a'a\|+\|ab\| +\|bb'\|\Big) + \left(\frac{(1+\eps)}{8}-\frac13\right)\eps\,\|pq\| \\
&\leq (1+\eps)\|ab\| + \left((1+\eps)\frac{1}{4}-\frac13\right)\eps\,\|pq\| \\
&\leq (1+\eps)\|ab\|,
\end{align*}
for $0<\eps<1/9$, as claimed.
\end{proof}

\paragraph{Relation between $B(p,q)=W_1\cap W_2$ and $W_1\setminus W_2$.}
The following two lemmas help analyze step~1c of Algorithm \textsc{SparseYao}
that adds some of the edges $pq_{i',j}$ to the spanner.
For points $p,q\in S$, recall that $B(p,q)=W_1\cap W_2$, where $W_1$ and $W_2$ are wedges of aperture $\frac12\cdot\sqrt{\eps}$ and $\sqrt{\eps}$, resp.; see Fig.~\ref{fig:wedge}.

\begin{lemma}\label{lem:WWW}
Let $p,q\in \mathbb{R}^2$, and $B(p,q)=W_1\cap W_2$. For every $q'\in W_1\setminus W_2$, we have $\|pq'\|\leq 2\,\|pq\|$.
\end{lemma}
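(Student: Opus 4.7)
The plan is to identify the region $W_1\setminus W_2$ as the (open) union of two triangles that share the segment $pq$, and then bound the diameter from $p$ of each triangle using the law of sines.

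Fix coordinates with $p$ at the origin and $q=(\|pq\|,0)$. By symmetry across the line $pq$, it suffices to bound $\|pq'\|$ for $q'\in W_1\setminus W_2$ lying above the $x$-axis. The upper boundary of $W_1$ is the ray from $p$ at angle $\sqrt{\eps}/4$ above $\overrightarrow{pq}$, and the upper boundary of $W_2$ is the ray from $q$ at angle $\sqrt{\eps}/2$. Let $c$ be their intersection point; this is precisely the point $c$ already used in the proof of Lemma~\ref{lem:technical1}, for which the interior angles of $\Delta pqc$ are $\angle qpc=\angle pcq=\sqrt{\eps}/4$ and $\angle pqc=\pi-\sqrt{\eps}/2$.

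The next step is a short case analysis on the $x$-coordinate to verify that $\overline{W_1\setminus W_2}\cap\{y\ge 0\}$ coincides with the closed triangle $\Delta pqc$. For $0\le x\le\|pq\|$ the cone $W_2$ contains no points, so the upper half of $W_1$ in this strip lies in $W_1\setminus W_2$; for $\|pq\|\le x\le x_c$ (where $x_c$ is the abscissa of $c$), the upper boundary of $W_2$ lies below the upper boundary of $W_1$, so $W_1\setminus W_2$ is the wedge between them; and for $x\ge x_c$ the upper boundary of $W_1$ lies below the upper boundary of $W_2$, so $W_1\subseteq W_2$ there. The three pieces fit together to form exactly $\Delta pqc$.

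Finally, applying the law of sines to $\Delta pqc$ and using the double-angle identity $\sin(\sqrt{\eps}/2)=2\sin(\sqrt{\eps}/4)\cos(\sqrt{\eps}/4)$ gives
\[
\|pc\|=\|pq\|\cdot\frac{\sin(\pi-\sqrt{\eps}/2)}{\sin(\sqrt{\eps}/4)}=2\|pq\|\cos(\sqrt{\eps}/4)\le 2\|pq\|.
\]
Because $c$ is the vertex of $\Delta pqc$ farthest from $p$ (beating $p$ and $q$ at distances $0$ and $\|pq\|$), every $q'\in W_1\setminus W_2$ satisfies $\|pq'\|\le\|pc\|\le 2\|pq\|$, as claimed. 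No serious obstacles arise; the only mildly delicate point is the case analysis in the preceding paragraph, after which the bound reduces to a one-line trigonometric calculation.
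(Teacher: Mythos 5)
Your proof is correct and follows essentially the same route as the paper: both decompose $W_1\setminus W_2$ along the segment $pq$ into two congruent triangles $\Delta pqc$ and bound the distance from $p$ by the farthest vertex $c$. The only (immaterial) difference is that you compute $\|pc\|=2\cos(\sqrt{\eps}/4)\,\|pq\|$ via the law of sines, whereas the paper simply applies the triangle inequality to the isosceles triangle with $\|pq\|=\|qc\|$.
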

\begin{proof}
The line segment $pq$ decomposes $W_1\setminus W_2$ into two isosceles triangles. By the triangle inequality, the diameter of each isosceles triangle is less than $2\|pq\|$. This implies that for any point $q'\in W_1\setminus W_2$, we have $\|pq' \|< 2\, \|pq\|$.
\end{proof}

\begin{figure}[htbp]
\centering
  \includegraphics[width=0.7\textwidth]{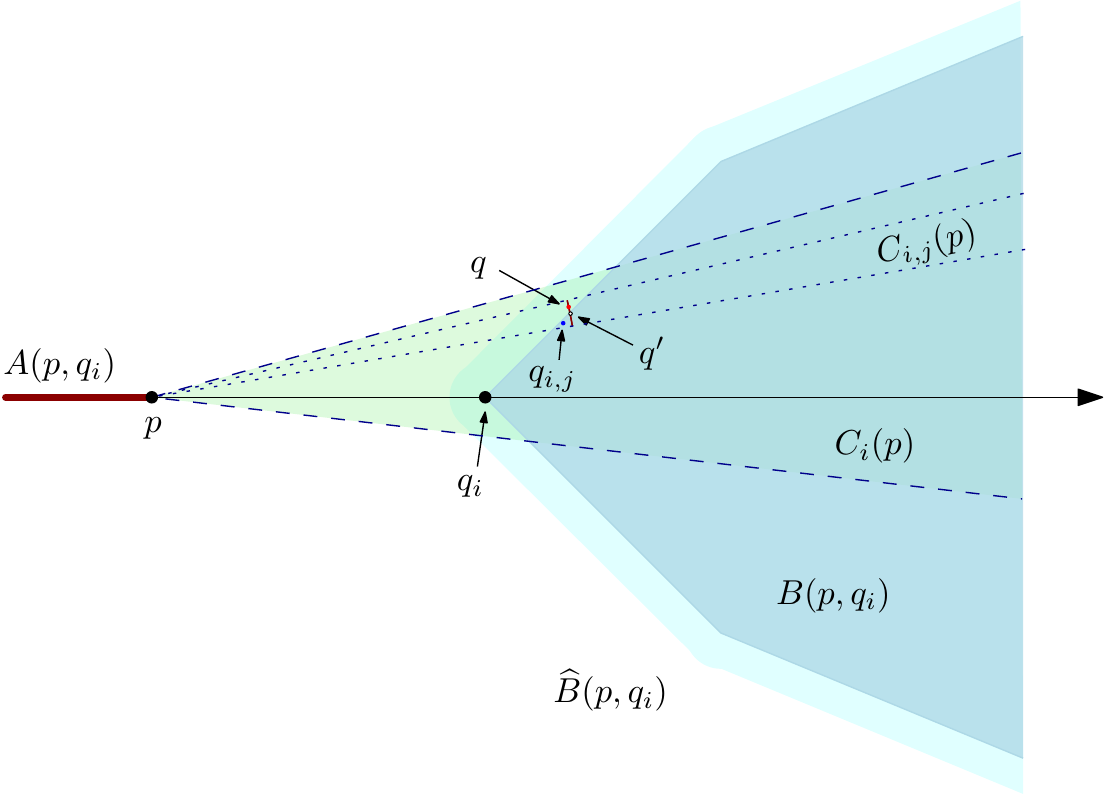}
\caption{If $q\in C_{i,j}(p)$, $q\notin B(p,q_i)$, but $q_{i,j}\in B(p,q_i)$, then $q\in \widehat{B}(p,q_i)$.}
\label{fig:wedge25}
\end{figure}

The following lemma justifies the role of the regions $\widehat{B}(p,q_i)$ in step~1d of the algorithm.

\begin{lemma}\label{lem:bbb}
Let $p,q\in S$, and assume that $q\in C_{i',j}(p)$ for some indices $i,j\in \{1,\ldots , k\}$ and $i'\in \{i-1,i ,i+1\}$, where $q_{i',j}=q_{i',j}(p,\frac13\||pq_i\|)$ is a closest point to $p$ in $C_{i',j}(p)$ at distance at least $\frac13\|pq_i\|$ from $p$.
If $q\notin {B}(p,q_i)$ but $q_{i',j}\in B(p,q_i)$, then $q\in \widehat{B}(p,q_i)$.
\end{lemma}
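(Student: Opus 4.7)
The plan is to combine three ingredients: (i) the small cone $C_{i',j}(p)$ is contained in $W_1(p,q_i)$, so $q\in W_1(p,q_i)$; (ii) together with $q\notin B(p,q_i)$ this forces $q\in W_1\setminus W_2$, giving $\|pq\|\le 2\|pq_i\|$ by Lemma~\ref{lem:WWW}; and (iii) $q$ and $q_{i',j}$ are angularly very close (the small-cone aperture is $2\pi/k^2$), so $q$ lies near the segment $[p,q_{i',j}]$, and that segment sits inside $B(p,q_i)$ by convexity. Throughout I set $R:=\|pq_i\|$ and place coordinates so that $p$ is the origin and $q_i$ lies on the positive $x$-axis.

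For step~(i), the containment $C_{i',j}(p)\subset W_1(p,q_i)$ follows from a short angle count. Since $q_i\in C_i(p)$ and $C_i(p)$ has half-aperture $\pi/k\le\sqrt{\eps}/16$, the direction $\overrightarrow{pq_i}$ lies within $\pi/k$ of the symmetry axis $\overrightarrow{r}_i$. For any $i'\in\{i-1,i,i+1\}$, the axis $\overrightarrow{r}_{i'}$ lies within $2\pi/k$ of $\overrightarrow{r}_i$, and $C_{i'}(p)$ itself has half-aperture $\pi/k$; summing these offsets shows that every direction in $C_{i',j}(p)\subset C_{i'}(p)$ is within $4\pi/k\le\sqrt{\eps}/4$ of $\overrightarrow{pq_i}$. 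Since $\sqrt{\eps}/4$ is exactly the half-aperture of $W_1(p,q_i)$, this puts $q\in W_1(p,q_i)$, and with $q\notin B(p,q_i)=W_1\cap W_2$ we conclude $q\in W_1\setminus W_2$. Lemma~\ref{lem:WWW} then yields $\|pq\|\le 2R$.

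For step~(iii), both $q$ and $q_{i',j}$ lie in the small cone of aperture $2\pi/k^2\le \eps/(128\pi)$, so $\angle qpq_{i',j}\le \eps/(128\pi)$. Let $q^\perp$ denote the orthogonal projection of $q$ onto the line through $p$ and $q_{i',j}$. In the principal case, when $q^\perp$ lies on the segment $[p,q_{i',j}]$, convexity of $B(p,q_i)$ gives $q^\perp\in B(p,q_i)$ and
\[
  \|q q^\perp\| = \|pq\|\sin(\angle qpq_{i',j}) \le 2R\cdot\frac{\eps}{128\pi} = \frac{R\eps}{64\pi} < \frac{\eps}{32}\,R,
\]
so $q\in\widehat{B}(p,q_i)$ as required. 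The main obstacle is the residual case in which the perpendicular foot $q^\perp$ falls past $q_{i',j}$ on the ray $\overrightarrow{pq_{i',j}}$: there the segment $[p,q_{i',j}]$ is too short and the projection no longer lands in $B(p,q_i)$. I would handle this by replacing $q_{i',j}$ with the exit point $e$ of the ray $\overrightarrow{pq_{i',j}}$ from $B(p,q_i)$ (a point of $\partial B\cap\partial W_2$) and by exploiting the description from the proof of Lemma~\ref{lem:WWW} of $W_1\setminus W_2$ as two narrow isosceles triangles sharing the vertex $q_i$; a direct computation bounding $\|qe\|$ through the angular separation $\angle qpq_{i',j}$ then closes the argument. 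Controlling this residual case cleanly---showing that the triangular strip of $W_1\setminus W_2$ to which $q$ is confined lies in the $\frac{\eps}{32}R$-neighborhood of $B(p,q_i)$---is the technical heart of the proof.
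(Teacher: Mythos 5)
Your steps (i) and (ii) match the paper's opening moves: the same angle count gives $\angle qpq_i\le\frac14\sqrt{\eps}$, hence $q\in W_1\setminus W_2$ and $\|pq\|\le 2\|pq_i\|$ via Lemma~\ref{lem:WWW}. The problem is step (iii). The justification of your ``principal case'' is false: $B(p,q_i)=W_1\cap W_2$ does \emph{not} contain $p$ (the cone $W_2$ has apex $q_i$ and opens away from $p$), so convexity of $B$ says nothing about a point $q^\perp$ of the segment $[p,q_{i',j}]$ --- only the part of that segment beyond its entry into $W_2$ lies in $B$. You also have the two cases backwards. Along any ray from $p$ whose direction is within $\frac14\sqrt{\eps}$ of $\overrightarrow{pq_i}$, the set of points lying in $B$ is a \emph{terminal sub-ray}, because such a direction lies in the recession cone of $W_2$ (half-aperture $\frac12\sqrt{\eps}$) and trivially in that of $W_1$. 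Hence the case you flag as the unresolved ``technical heart,'' where $q^\perp$ falls past $q_{i',j}$, is immediate: the foot lies beyond $q_{i',j}\in B$ on the ray $\overrightarrow{pq_{i',j}}$, so $q^\perp\in B$, and $\|qq^\perp\|=\|pq\|\sin\angle qpq_{i',j}\le 2\|pq_i\|\cdot\frac{\eps}{128}<\frac{\eps}{32}\|pq_i\|$ finishes it. It is the case $q^\perp\in[p,q_{i',j}]$ that needs an extra (easy) remark: since $\|pq\|\ge\|pq_{i',j}\|$, one gets $\|q^\perp q_{i',j}\|\le\|pq_{i',j}\|\bigl(1-\cos\angle qpq_{i',j}\bigr)$, which is negligible, so $\mathrm{dist}(q,B)\le\|qq^\perp\|+\|q^\perp q_{i',j}\|$ still fits the $\frac{\eps}{32}\|pq_i\|$ budget. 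As written, one case rests on an invalid convexity claim and the other is left as an unexecuted sketch, so the proof is incomplete.

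For comparison, the paper avoids projections altogether: it runs an intermediate-value argument along the circular arc of radius $\|pq\|$ centered at $p$ inside the small cone $C_{i',j}(p)$. The point of that arc in the direction of $q_{i',j}$ lies in $B$ (again by the sub-ray property, using $\|pq\|\ge\|pq_{i',j}\|$) while $q$ does not, so the arc crosses $\partial B(p,q_i)$ at some $q'\in C_{i',j}(p)$, and $\|qq'\|$ is at most the arc length $\|pq\|\cdot 2\pi/k^2\le\frac{\eps}{64}\|pq_i\|$. Your projection route can be repaired to the same effect, but the repair requires exactly the sub-ray observation that is missing from your argument.
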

\begin{proof}
Since $q\in C_{i-1}(p)\cup C_i(p)\cup C_{i+1}(p)$ and the aperture of each $C_i(p)$ is
$2\pi/k\leq \frac18\cdot \sqrt{\eps}$, then $\angle qpq_i\leq \frac{1}{4}\cdot \sqrt{\eps}$,
and so $q\in W_1\setminus W_2$. Lemma~\ref{lem:WWW} yields $\|pq\|\leq 2\, \|pq_i\|$.
Both $q$ and $p_{i',j}$ are in the cone $C_{i',j}(p)$.
Consider the circle of radius $\|pq\|$ centered at $p$; see Fig.~\ref{fig:wedge25}.
Since $\|pq\|\geq \|pq_{i',j}\|$, and $q\notin {B}(p,q_i)$, $q_{i',j}\in B(p,q_i)$,
this circle intersects $\partial B(p,q_{i'})$ in the cone $C_{i',j}(p)$.
Denote by $q'$ the intersection point. Now $q,q'\in C_{i',j}(p)$ implies
that $\angle q'pq_{i',j}\leq 2\pi/k^2\leq \eps/128$.
The distance $\|qq'\|$ is bounded above by the length of the circular arc
between them:
$\|qq'\|\leq \|pq\| \angle qpq'
\leq \|pq\| \angle qpq_{i',j}
\leq \frac{\eps}{128}\,\|pq\|
\leq \frac{\eps}{128}\,2\,\|pq_i\| =\frac{\eps}{64}\|pq_i\|$.
As $q'\in B(p,q_i)$, we have $\mathrm{dist}(q,B(p,q_i))\leq \|qq'\|<\frac{\eps}{32}\|pq_i\|$,
and so $q\in \widehat{B}(p,q_{i'})$, as required.
\end{proof}

We can also clarify the relation between $\widehat{A}(p,q)$ and $\tilde{A}(p,q)$ in the setting used in the stretch analysis. Specifically, we show $\widehat{A}(p,q_i(p))\subset \tilde{A}(p,q_{i',j}(p,\frac13\|pq_i\|))$ for $i'\in \{i-1,i, i+1\}$.

\begin{lemma}\label{lem:tilde}
Let $p\in S$, and let $q_i=q_i(p)$ be a closest point to $p$ in $C_i(p)$, and $q_{i',j}=q_{i',j}(p, \frac13\|pq_i\|)$ a closest point to $p$ among all points in $C_{i',j}(p)$ at distance at least $\frac13\|pq_i\|$ from $p$, for some indices $i,j\in \{1,\ldots , k\}$ and $i'\in \{i-1,i, i+1\}$.
Then $\widehat{A}(p,q)\subset \tilde{A}(p,q_{i',j})$.
\end{lemma}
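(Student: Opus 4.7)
\smallskip\noindent\textbf{Proof plan for Lemma~\ref{lem:tilde}.}
The plan is to show that the thin rectangular region $\widehat{A}(p,q_i)$ is contained in the slightly thicker and longer rectangular region $\tilde{A}(p,q_{i',j})$, despite the fact that their spines $A(p,q_i)$ and $\overline{A}(p,q_{i',j})$ emanate from $p$ in slightly different directions. The three ingredients I will combine are: (i) the angular distance $\alpha=\angle q_i p q_{i',j}$ is small because $q_i$ and $q_{i',j}$ lie in the union of three consecutive cones of aperture at most $\frac{\sqrt{\eps}}{8}$; (ii) the lower bound $\|pq_{i',j}\|\geq \tfrac13\,\|pq_i\|$ guaranteed by the definition of $q_{i',j}(p,\tfrac13\|pq_i\|)$; and (iii) elementary trigonometry for the orthogonal projection between two segments sharing the endpoint $p$.

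First I will estimate the angle. Since $q_i\in C_i(p)$ and $q_{i',j}\in C_{i'}(p)$ with $i'\in\{i-1,i,i+1\}$, and each large cone has aperture at most $2\pi/k\leq \frac{\sqrt{\eps}}{8}$, we have $\alpha=\angle q_ip q_{i',j}\leq \frac{\sqrt{\eps}}{4}$. Next, I will take an arbitrary point $x\in \widehat{A}(p,q_i)$ and write $x=x'+v$ with $x'\in A(p,q_i)$ and $\|v\|\leq \frac{\eps}{64}\|pq_i\|$. Both segments $A(p,q_i)$ and $\overline{A}(p,q_{i',j})$ start at $p$ and extend in the halfplane opposite to $q_i$ and $q_{i',j}$, respectively, so the angle between their supporting rays is also~$\alpha$.

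Let $y$ be the orthogonal projection of $x'$ onto the line through $\overline{A}(p,q_{i',j})$. Then $\|py\|=\|px'\|\cos\alpha\leq \|px'\|\leq \frac{\sqrt{\eps}}{16}\|pq_i\|$, and using $\|pq_{i',j}\|\geq \tfrac13\|pq_i\|$ I obtain
\[
\|py\|\leq \frac{\sqrt{\eps}}{16}\|pq_i\|\leq \frac{3\sqrt{\eps}}{16}\|pq_{i',j}\|<\frac{\sqrt{\eps}}{2}\|pq_{i',j}\|,
\]
so $y$ lies inside the segment $\overline{A}(p,q_{i',j})$. The perpendicular distance satisfies
\[
\|x'y\|=\|px'\|\sin\alpha\leq \frac{\sqrt{\eps}}{16}\|pq_i\|\cdot \frac{\sqrt{\eps}}{4}=\frac{\eps}{64}\|pq_i\|.
\]
By the triangle inequality,
\[
\mathrm{dist}\bigl(x,\overline{A}(p,q_{i',j})\bigr)\leq \|x'y\|+\|v\|\leq \frac{\eps}{64}\|pq_i\|+\frac{\eps}{64}\|pq_i\|=\frac{\eps}{32}\|pq_i\|\leq \frac{3\eps}{32}\|pq_{i',j}\|,
\]
where the final inequality uses $\|pq_i\|\leq 3\|pq_{i',j}\|$ once more. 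This shows $x\in \tilde{A}(p,q_{i',j})$, completing the proof.

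The only mildly delicate step is verifying that the projection $y$ truly lies on the finite segment $\overline{A}(p,q_{i',j})$ rather than on its extension; this is where the length ratio between $\overline{A}$ (scaled by $\frac{\sqrt{\eps}}{2}$) and $A$ (scaled by $\frac{\sqrt{\eps}}{16}$) pays off, with a comfortable margin even after the factor $3$ loss from $\|pq_{i',j}\|\geq \tfrac13\|pq_i\|$. All other estimates are straightforward Euclidean trigonometry, so I do not anticipate any hidden obstacle.
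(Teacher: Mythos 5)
Your proposal is correct and follows essentially the same route as the paper: bound the angle $\angle q_i p q_{i',j}$ by $\frac{\sqrt{\eps}}{4}$ via the cone apertures, use $\|pq_{i',j}\|\geq\frac13\|pq_i\|$ to compare the lengths of $A(p,q_i)$ and $\overline{A}(p,q_{i',j})$, bound the perpendicular deviation by $\|A(p,q_i)\|\sin\alpha\leq\frac{\eps}{64}\|pq_i\|$, and finish with the triangle inequality to reach $\frac{3\eps}{32}\|pq_{i',j}\|$. Your explicit check that the orthogonal projection lands on the finite segment $\overline{A}(p,q_{i',j})$ is the same length comparison the paper makes, just stated more carefully.
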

\begin{proof}
By our assumptions, we have $q_i\in C_i(p)$ and $q_{i',j}\in C_{i-1}(p)\cup C_i(p)\cup C_{i+1}(p)$.
Since the aperture of each cone $C_i(p)$ is $2\pi/k\leq \frac18\cdot \sqrt{\eps}$, then $\angle q_{i',j}pq_i\leq \frac{1}{4}\cdot \sqrt{\eps}$.
As $\frac13 \|pq_i\|\leq \|pq_{i',j}\|$, then
$$\|A(p,q_i)\| = \frac{\sqrt{\eps}}{16}\|pq_i\| \frac{3\sqrt{\eps}}{16}\|p q_{i',j} < \frac{\sqrt{\eps}}{2}\|pq_{i',j}\| = \|\overline{A}(p,q_{i',j})\|.$$
Consequently, every point in $A(p,q_i)$ is within distance at most
$$\|A(p,q_i)\| \sin \angle q_ipq_{i',j}) \leq \|A(p,q_i)\| \angle q_ipq_{i',j} \leq \frac{\sqrt{\eps}}{16}\, \|pq_i\|\cdot \frac{1}{4} \, \sqrt{\eps} = \frac{\eps}{64}\, \|pq_i\|$$
from $\overline{A}(p,q_{i',j})$. By the triangle inequality, every point in the $(\frac{\eps}{64}\, \|pq_i\|)$-neighborhood of $A(p,q_i)$ is at distance at most $(\frac{\eps}{64}+\frac{\eps}{64})\|pq_i\| = \frac{\eps}{32}\, \|pq_i\| \leq \frac{3\eps}{32}\, \|pq_{i',j}\|$
from segment $\overline{A}(p,q_{i',j})$.
\end{proof}

\begin{figure}[htbp]
\centering
  \includegraphics[width=0.98\textwidth]{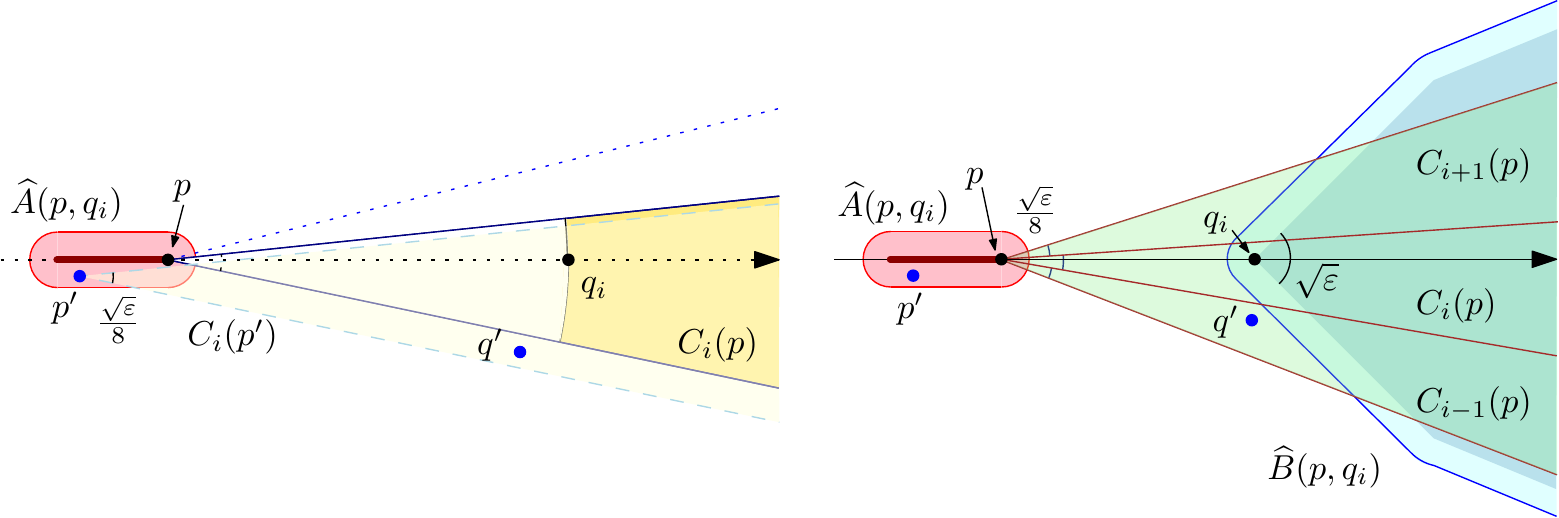}
\caption{The relative position of $pq_i$ and $p' q'$. Specifically, $p'\in \widehat{A}(p,q_i)$ and $q'\in C_i(p')$.
Left: Cones $C_i(p)$ and $C_i(p')$ with $q'\notin C_i(p)$
Right: $q'\in C_{i-1}(p)\cup C_i(p)\cup C_{i+1}(p)$ and the region $B(p'q_i')$.}
\label{fig:wedge6}
\end{figure}

Finally, we show that the cones $C_{i-1}(p)$, $C_i(p)$, and $C_{i+1}(p)$ in step~1c of Algorithm  \textsc{SparseYao}
jointly cover any point $q'\in C_i(p')$ if $p'\in  \widehat{A}(p,q_i)$ and $\|pq'\|\geq \frac12\|pq_i\|$; see Fig.~\ref{fig:wedge6}.

\begin{lemma}\label{lem:fivecones}
Let $p,q_i\in \mathbb{R}^2$ such that $q_i$ is the closest point to $p$ in the cone $C_i(p)$.
Let $p'\in \widehat{A}(p,q_i)$ and $q'\in C_i(p')$ such that $\|p'q'\|\geq \frac12 \|pq_i\|$.
Then 
\begin{equation}\label{eq:fivecones}
q'\in C_{i-1}(p)\cup C_{i}(p)\cup C_{i+1}(p)
\end{equation}
\end{lemma}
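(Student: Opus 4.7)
The plan is to verify directly that the angle $\psi$ between $\overrightarrow{pq'}$ and $\overrightarrow{r}_i$ satisfies $|\psi|\leq 3\pi/k$; this yields \eqref{eq:fivecones} because $C_{i-1}(p)\cup C_i(p)\cup C_{i+1}(p)$ is exactly the cone with apex $p$, axis $\overrightarrow{r}_i$, and aperture $6\pi/k$.

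I would set up Cartesian coordinates so that $\overrightarrow{r}_i$ is the positive $x$-axis, and write $R:=\|pq_i\|$ and $u:=\pi/k\leq \sqrt{\eps}/16$. Since $q_i\in C_i(p)$ and $q'\in C_i(p')$, the unit vectors $\hat{q}_i:=(q_i-p)/R$ and $\hat{u}:=(q'-p')/\|p'q'\|$ can be written $\hat{q}_i=(\cos\alpha,\sin\alpha)$ and $\hat{u}=(\cos\phi,\sin\phi)$ for some $|\alpha|,|\phi|\leq u$, and $s:=\|p'q'\|\geq R/2$ by hypothesis. Because $A(p,q_i)$ is oriented in direction $-\hat{q}_i$ and $\widehat{A}(p,q_i)$ is its $(\eps R/64)$-neighborhood, every $p'\in \widehat{A}(p,q_i)$ admits a decomposition
\[
p'-p \;=\; -\tau\,\hat{q}_i \;+\; \omega\,\hat{q}_i^\perp, \qquad |\tau|\leq \left(\tfrac{\sqrt{\eps}}{16}+\tfrac{\eps}{64}\right)R, \qquad |\omega|\leq \tfrac{\eps}{64}R.
\]
Setting $\delta:=\phi-\alpha$ (so $|\delta|\leq 2u$) and re-expressing $p'-p$ in the orthonormal basis $(\hat{u},\hat{u}^\perp)$ yields $p'-p=a\,\hat{u}+b\,\hat{u}^\perp$ with $a=-\tau\cos\delta+\omega\sin\delta$ and $b=\tau\sin\delta+\omega\cos\delta$. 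Hence $\overrightarrow{pq'}=(s+a)\hat{u}+b\,\hat{u}^\perp$ and so $\psi = \phi + \arctan\!\bigl(b/(s+a)\bigr)$.

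Using $\arctan x\leq x$ for $x\geq 0$, to conclude $|\psi|\leq u+2u=3u$ it suffices to show the single inequality $|b|/(s+a)\leq 2u$. Applying $|\sin\delta|\leq 2u$ and $|\cos\delta|\leq 1$ gives $|b|\leq 2u|\tau|+|\omega|$ and $s+a\geq s-|\tau|-2u|\omega|$; substituting the bounds on $|\tau|,|\omega|,s$ together with $u\leq\sqrt{\eps}/16$ reduces the target to an elementary polynomial inequality in $\sqrt{\eps}$ that holds comfortably for $\eps<1/9$ (the left side is $O(\sqrt{\eps})$ while the right side is $\Theta(1)$ after dividing through by $u$). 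Once $|\psi|\leq 3\pi/k$ is established, membership of $q'-p$ in the symmetric cone of aperture $6\pi/k$ about $\overrightarrow{r}_i$ gives~\eqref{eq:fivecones}.

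The main obstacle is that the generic angular bound $\sin(\angle(\overrightarrow{pq'},\overrightarrow{p'q'}))\leq \|pp'\|/\|p'q'\|$ is too weak here: the ratio $\|pp'\|/\|p'q'\|$ can be as large as $\sqrt{\eps}/8+\eps/32$, whose arcsine exceeds $2\pi/k$, so that $|\phi|+\arcsin(\|pp'\|/\|p'q'\|)$ can itself exceed $3\pi/k$. The resolution is to work in the rotated basis $(\hat{u},\hat{u}^\perp)$, which exploits that $p'-p$ is directed essentially along $-\hat{q}_i$ and hence nearly anti-parallel to $\hat{u}$; the perpendicular component $b$ of $p'-p$ is therefore only $O(\eps R)$ rather than the full $\|pp'\|=O(\sqrt{\eps}R)$, and this extra factor of $\sqrt{\eps}$ is precisely what makes the inequality $|b|/(s+a)\leq 2u$ go through.
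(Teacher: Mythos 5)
Your proof is correct and follows essentially the same route as the paper's: set up coordinates with the axis of $C_i(p)$ along the $x$-axis and show by a direct computation that the component of $p'-p$ transverse to the relevant direction is only $O(\eps\,\|pq_i\|)$ (because $A(p,q_i)$ is nearly parallel to the axis), while the longitudinal extent of $q'-p'$ is $\Theta(\|pq_i\|)$, so $\overrightarrow{pq'}$ deviates from the axis by at most $3\pi/k$. The paper phrases this as a bound on $|y(q')|/|x(q')|$ in the fixed frame rather than your rotated $(\hat{u},\hat{u}^\perp)$ basis, but the key estimates and the reason the naive bound $\|pp'\|/\|p'q'\|=O(\sqrt{\eps})$ is insufficient are the same.
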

\begin{proof}
Recall that the aperture of cone $C_i(p)$ is $\frac{2\pi}{k}\leq \frac{1}{8}\cdot \sqrt{\eps}$.
Assume w.l.o.g.\ that $p$ is the origin, and the symmetry axis of the cone $C_i(p)$ is the is the positive $x$-axis.
Assume further that $p'=(x(p'),y(p'))$ and $q'=(x(q'),y(q'))$ in this coordinate system.
It is enough to show that the angle between segment $pq'$ and the positive $x$-axis is less than $\frac{3}{16}\cdot \sqrt{\eps}$.
In the remainder of the proof, we estimate the tangent of this angle.

Since $q'\in C_i(p')$ and $\eps\in (0,\frac19)$, the coordinates of vector $\overrightarrow{p'q'}$ are bounded by
\begin{align*}
|x(q')-x(p')| &
    \geq \|p'q'\| \cos \frac{\sqrt{\eps}}{16} \geq \|p'q'\| \cos \frac{1}{48} > \frac{63}{64}\|p'q'\|,\\
|y(q')-y(p')| &
    \leq \|p'q'\|\sin \frac{\sqrt{\eps}}{16}
    \leq \frac{\sqrt{\eps}}{16}\, \|p'q'\|.
\end{align*}
Since $p'\in \widehat{A}(p,q_i)$ and $q'\in C_i(p)$, then the coordinates of $p'$ are bounded by
\begin{align*}
|x(p')| &
    \leq \|A(pq_i)\|+\mathrm{dist}(p',A(pq_i))
    \leq \frac{\sqrt{\eps}}{16}\cdot \|pq_i\|+ \frac{\eps}{64}\|pq_i\|
    \leq \frac{13}{576}\|pq_i\|
    \leq \frac{1}{32}\|p'q'\|,\\
|y(p')| &
    \leq \|A(pq_i)\|\sin \frac{\sqrt{\eps}}{16}+\mathrm{dist}(p',A(pq_i))
    \leq \frac{\sqrt{\eps}}{16}\cdot \|pq_i\|\cdot \frac{\sqrt{\eps}}{16}+ \frac{\eps}{64}\|pq_i\|
    <    \frac{\eps}{51}\|pq_i\|
    \leq \frac{\sqrt{\eps}}{17}\|p'q'\|.
\end{align*}
These bounds yield
$$ \frac{|y(q')|}{|x(q')|}
\leq \frac{|y(q')-y(p')|+|y(p')|}{|x(q')-x(p')|-|x(p')|}
< \frac{ (\frac{1}{16}+\frac{1}{17})\, \sqrt{\eps}\, \|p'q'\|}{ (\frac{63}{64}-\frac{1}{32})\, \|p'q'\|}
=\frac{132}{1037} \sqrt{\eps}< \frac{\sqrt{\eps}}{7}.
$$
Using the Taylor estimate $x\leq \tan x\leq x+\frac{x^3}{2}$ in this range, the angle between segment $pq'$ and the positive $x$-axis is less than $\frac{3}{16}\cdot \sqrt{\eps}$.
This completes the proof of~\eqref{eq:fivecones}.
\end{proof}

\paragraph{Completing the Stretch Analysis.}
We are now ready to present the stretch analysis for \textsc{SparseYao}$(S,\eps)$.

\begin{theorem}\label{thm:twostage}
For every finite point set $S\subset \R^2$ and $\eps\in (0,\frac19)$, the graph $G=\textsc{SparseYao}(S,\eps)$ is a $(1+\eps)$-spanner.
\end{theorem}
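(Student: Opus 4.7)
My plan is to prove the theorem by strong induction on the length $\|ab\|$ for every pair $a, b \in S$: the goal for each pair is to exhibit a three-part path $a \to p \to q \to b$ in which $(p, q)$ is a single edge actually inserted by Algorithm \textsc{SparseYao}, and in which the first and third portions are furnished by the induction hypothesis on strictly shorter pairs. For each pair $(a, b)$ I would first identify the large cone $C_i(a)$ containing $b$, and then split into two cases based on whether the pair $(a, q_i(a))$ was processed in the main loop of step~1 or removed from $L_i$ in step~1d by an earlier-processed pair.

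If $(a, q_i(a))$ was processed, then the edge $(a, q_i(a))$ is already in $E$. Let $j(b)$ satisfy $b \in C_{i, j(b)}(a)$ and put $q := q_{i, j(b)}(a, \tfrac{1}{3} \|a q_i(a)\|)$; this candidate exists because $b$ itself qualifies (since $\|ab\| \geq \|a q_i(a)\|$), and $\|aq\| \leq \|ab\|$. If $q \notin B(a, q_i(a))$ then the edge $(a, q)$ lies in $E$, and since $q$ and $b$ share the small cone of aperture $\eps/8$, the standard Yao small-cone estimate yields $\|qb\| < \|ab\| - \Omega(\|aq\|)$, and induction on $(q, b)$ closes the path. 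If $q \in B(a, q_i(a))$ I would instead route through $q_i(a)$: when $b \in B(a, q_i(a))$, Lemma~\ref{lem:technical1} applied with $p = a$ gives $\|a q_i(a)\| + (1+\eps)\|q_i(a)\, b\| \leq (1+\eps)\|ab\|$; when $b \notin B(a, q_i(a))$, Lemma~\ref{lem:bbb} forces $b \in \widehat{B}(a, q_i(a))$, and Lemma~\ref{lem:technical3} applied with $p = a$ yields the same bound. In both subcases induction on $(q_i(a), b)$ completes the path.

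If $(a, q_i(a))$ was removed, I would identify the pair $(p, q_i(p))$ whose processing caused the removal, so $a \in \widehat{A}(p, q_i(p))$. The order of $L_i$ (shorter edges first, up to a factor of two) gives $\|p q_i(p)\| \leq 2 \|a q_i(a)\|$, and the definition of $\widehat{A}$ implies $\|pa\| = O(\sqrt{\eps})\|p q_i(p)\|$. Hence $\|ab\| \geq \|a q_i(a)\| \geq \tfrac{1}{2} \|p q_i(p)\|$, and Lemma~\ref{lem:fivecones} places $b$ in some cone $C_{i', j}(p)$ with $i' \in \{i-1, i, i+1\}$. Setting $q' := q_{i', j}(p, \tfrac{1}{3} \|p q_i(p)\|)$, a quick triangle inequality confirms $\|pb\| > \tfrac{1}{3}\|p q_i(p)\|$, so $q'$ exists with $\|p q'\| \leq \|pb\|$. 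If $q' \notin B(p, q_i(p))$, step~1c added the edge $(p, q')$; Lemma~\ref{lem:tilde} yields $a \in \tilde{A}(p, q')$, and Lemma~\ref{lem:technical3} applied to $(a, p, q', b)$ delivers $(1+\eps)\|ap\| + \|p q'\| + (1+\eps)\|q' b\| \leq (1+\eps)\|ab\|$. If instead $q' \in B(p, q_i(p))$, Lemma~\ref{lem:bbb} forces $b \in \widehat{B}(p, q_i(p))$, and the already-added edge $(p, q_i(p))$ combined with Lemma~\ref{lem:technical3} applied to $(a, p, q_i(p), b)$ does the job. In either subcase I finish the three-part path $a \to p \to q^{*} \to b$ with $q^{*} \in \{q', q_i(p)\}$ by invoking induction on the two strictly shorter pairs $(a, p)$ and $(q^{*}, b)$.

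The main obstacle I expect is the clean verification that every recursive call is on a strictly shorter pair, so that the induction is well-founded. For the $(a, p)$ calls this is immediate from $\|ap\| = O(\sqrt{\eps})\|p q_i(p)\| \ll \|ab\|$; for the small-cone subcases (with $q$ or $q'$ sharing a small cone with $b$) it is the standard Yao estimate at aperture $\eps/8$; for the $B$ and $\widehat{B}$ subcases it reduces to the geometric fact that any point of $\widehat{B}(p, q)$ has projection on $\overrightarrow{pq}$ of length at least $(1 - O(\eps))\|pq\|$, so it is strictly closer to $q$ than to $p$ by an amount $\Omega(\|pq\|)$. A further delicacy is the base case (the pair of minimum pairwise distance): this is handled uniformly by observing that in Case~2 the responsible pair $(p, q_i(p))$ already has weight at most $\|a q_i(a)\| = \|ab\|$, so an edge of weight at most $\|ab\|$ is guaranteed to be present in $G$, anchoring the induction.
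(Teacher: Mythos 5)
Your proposal is correct and follows essentially the same route as the paper's proof: induction over pairs ordered by length, the same dichotomy between the pair $(a,q_i(a))$ being processed versus deleted from $L_i$ in step~1d, and the same reliance on Lemmas~\ref{lem:technical1}, \ref{lem:technical3}, \ref{lem:bbb}, \ref{lem:tilde}, and~\ref{lem:fivecones} (your use of the standard Yao small-cone estimate in the sub-case $q\notin B(a,q_i(a))$ is a harmless simplification of the paper's corresponding branch). The one substantive omission is in the Case-2 sub-branch where $q'\notin B(p,q_i(p))$: applying Lemma~\ref{lem:technical3} to $(a,p,q',b)$ requires $b\in\widehat{B}(p,q')$, which you never verify; the paper supplies exactly this membership via the circle-projection argument of Lemma~\ref{lem:bbb}, and the same estimate (using that $b$ and $q'$ share a cone of aperture $2\pi/k^2$ and $\|pq'\|\leq\|pb\|$) is what closes your version as well.
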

\begin{proof}
Let $S$ be a set of $n$ points in the plane.
Let $L_0$ be the list of all $\binom{n}{2}$ edges of the complete graph on $S$ in increasing order by Euclidean weight (with ties broken arbitrarily). For $\ell=1,\ldots , \binom{n}{2}$, let $e_\ell$ be the $\ell$-th edge in $L_0$, and let $E(\ell)=\{e_1,\ldots ,e_\ell\}$. We prove the following by induction on $\ell$:
    \begin{claim}\label{cl:induction}
    For every edge $ab\in E(\ell)$,  $G=(S,E)$ contains an $ab$-path of weight at most $(1+\eps)\|ab\|$.
    \end{claim}
For $\ell=1$, the claim clearly holds, as the shortest edge $pq$ is necessarily the closest points to $p$ in the cone $C_i(p)$, and so the algorithm adds $pq$ to $E$.
Assume that $1<\ell\leq \binom{n}{2}$ and Claim~\ref{cl:induction} holds for $\ell-1$.
If the algorithm added edge $e_\ell$ to $E$, then Claim~\ref{cl:induction} trivially holds for $\ell$.

Suppose that $e_\ell\notin E$. Let $e_\ell=pq$, and $q\in C_{i,j}(p)$ for some $i,j\in \{1,\ldots , k\}$.
Let $q_i=q_i(p)$ and $q_{i,j}=q_{i,j}(p,\frac13 \|pq_i\|)$ be points specified in the preprocessing phase of Algorithm \textsc{SparseYao}.
We distinguish between two cases.

\smallskip\noindent\textbf{(1) The algorithm added the edge $pq_i$ to $E$.}
Note that $\|q_i q\|\leq \|pq\|$ and $\|q_{i,j} q\| \leq \|pq\|$.
By the induction hypothesis, $G$ contains a $q_iq$-path $P_1$ of weight at most $(1+\eps)\|q_iq\|$ and a $q_{i,j}q$-path $P_2$ of weight at most $(1+\eps)\|q_{i,j} q\|$.
If $q\in \widehat{B}(p,q_i)$, then $pq_i+P_1$ is a $pq$-path of weight at most $(1+\eps)\|pq\|$ by Lemma~\ref{lem:technical3}.
Otherwise, $q\notin \widehat{B}(p,q_i)$.
In this case, $q_{i,j}\notin B(p,q_i)$ by Lemma~\ref{lem:bbb}.
This means that the algorithm added the edge $pq_{i,j}$ to $E$.
We have $q\in \widehat{B}(p,q_{i,j})$ by Lemma~\ref{lem:bbb}, and so
$pq_{i,j}+P_2$ is a $pq$-path of weight at most $(1+\eps)\|pq\|$ by Lemma~\ref{lem:technical3}.

\smallskip\noindent\textbf{(2) The algorithm did not add the edge $pq_i$ to $E$.}
Then the algorithm deleted $(p,q_i)$ from the list $L_i$ in some step~1d;
and in step~1b of the same iteration, it added another edge $p'q_i'$ to $E$,
where $q_i':=q_i'(p')$ is the closest point to $p'$ in the cone $C_i(p')$
chosen in the preprocessing phase of Algorithm \textsc{SparseYao}.
This means that $p\in \widehat{A}(p',q_i')$.
Since $L_i$ is the concatenation of the lists $L_{i,j}$ according to increasing weight,
then $\|p'q_i'\|< 2\|pq_i\|$.

By Lemma~\ref{lem:fivecones} (with the roles of $p,q_i$ and $p',q'$ interchanged), we have
$q\in C_{i-1}(p')\cup C_i(p')\cup C_{i+1}(p')$. We can now distinguish between two subcases:

\smallskip\noindent\textbf{(2a) $q\in \widehat{B}(p',q_i')$.}
By induction, $G$ contains a $pp'$-path $P_1$ of weight at most $(1+\eps)\|pp'\|$ and a $q_i'q$-path $P_2$ of weight at most $(1+\eps)\|q_i'q\|$. By Lemma~\ref{lem:technical3} (with $a=p$ and $b=q$),
the concatenation $P_1+p'q_i'+P_2$ is a $pq$-path of weight at most $(1+\eps)\|pq\|$.

\smallskip\noindent\textbf{(2b) $q\notin \widehat{B}(p',q_i')$.}
Then Lemma~\ref{lem:fivecones} implies $q\in C_{i',j'}(p')$ for some $i'\in \{i-1,i,i+1\}$ and $j'\in \{1,\ldots ,k\}$.
We claim that $\frac13\|p'q_i'\|\leq \|p'q\|$. Indeed, we have $\frac12 \|p'q_i'\|< \|pq_i\|\leq \|pq\|$ by assumption.
Since $p\in \widehat{A}(p',q_i')$ and $\eps\in (0,\frac19)$, then
$\|pp'\|
\leq \|A(p',q_i')\|+\frac{\eps}{64}\|p' q_i'\|
\leq  (\frac{\sqrt{\eps}}{16}+\frac{\eps}{64})\|p' q_i'\|
\leq \frac{13}{576}\|p' q_i'\|<\frac{1}{32}\|p'q_i'\|$.
Now the triangle inequality yields
$\|p'q\|
\geq \|qp\|-\|pp'\|
\geq (\frac12 - \frac{1}{32})\|p' q_i'\|
> \frac13 \|p'q_i'\|$.

Let $q_{i',j'}=q_{i',j'}(p',\frac13\| p' q_i(p')\|)$ be the closest point to $p'$ in the cone $C_{i',j'}(p')$ at distance at least
$\frac13\| p' q_i(p')\|$ from $p'$, as specified in the preprocessing phase of Algorithm \textsc{SparseYao}.
Since $q\notin \widehat{B}(p',q_i')$, then Lemma~\ref{lem:bbb} gives $q_{i',j'}\notin B(p',q_i')$.
Thus the algorithm added the edge $p'q_{i',j'}$ in step~1c.
Since $\|p'q_{i',j'}(p')\|\geq \frac13\|p'q_{i'}\|$, then we have $p\in \widehat{A}(p',q_i')\subset \tilde{A}(p',q_{i',j'})$ by Lemma~\ref{lem:tilde}; and $q\in \widehat{B}(p',q_{i',j'})$ by Lemma~\ref{lem:bbb}.
By induction, $G$ contains a $pp'$-path $P_1$ of weight at most $(1+\eps)\|pp'\|$, and a $q_{i',j'}q$-path $P_2$ of weight at most $(1+\eps)\|q_{i',j'}q\|$. The concatenation $P_1+p'q_{i',j'}+P_2$ is a $pq$-path of weight at most $(1+\eps)\|pq\|$ by Lemma~\ref{lem:technical3}.

This completes the proof of Claim~\ref{cl:induction}, hence Theorem~\ref{thm:twostage}.
\end{proof}

\section{Spanners in the Unit Square}
\label{sec:square}

In this section we show that, for a set $S\subset [0,1]^2$ of $n$ points and $\eps\in (0,\frac19)$, Algorithm~\textsc{SparseYao} returns a $(1+\eps)$-spanner of weight $O(\eps^{-3/2}\sqrt{n})$ (Theorem~\ref{thm:UBsqaure}).
%
Recall that for all $p\in S$ and all $i\in \{1,\ldots ,k\}$, $q_i(p)$ denotes a closest point to $p$ in the cone $C_i(p)$ of aperture $\frac18\,\sqrt{\eps}$, if it exists, $pq_i(p)$ may or may not be an edge in $G$.
Let
\[F=\big\{p q_i(p)\in E(G): p\in S, i\in \{1,\ldots , k\}\big\}.\]
We first show that the weight of the edges in $F$ approximates the weight of all other edges.

\begin{lemma}\label{lem:deltoid}
If Algorithm~\textsc{SparseYao} adds $pq_i$ and $pq_{i',j}$ to $G$ in the same iteration, then $\|pq_{i',j}\|< 2\,\|pq_i\|$.
\end{lemma}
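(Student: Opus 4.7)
The plan is to reduce the lemma to Lemma~\ref{lem:WWW} by showing that $q_{i',j}$ lies in $W_1\setminus W_2$, where $W_1,W_2$ are the two cones defining $B(p,q_i)=W_1\cap W_2$. Since the algorithm adds $pq_{i',j}$ in step~1c, the condition in that step guarantees $q_{i',j}\notin B(p,q_i)$, so once we verify $q_{i',j}\in W_1$ we are in $W_1\setminus W_2$ and Lemma~\ref{lem:WWW} yields $\|pq_{i',j}\|<2\,\|pq_i\|$.

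To verify $q_{i',j}\in W_1$, I would track angles with respect to the ray $\overrightarrow{pq_i}$, which is the symmetry axis of $W_1$. The key parameters are: each cone $C_i$ has aperture $2\pi/k\leq \tfrac18\sqrt{\eps}$; in step~1c we consider $i'\in\{i-1,i,i+1\}$, whose union $C_{i-1}(p)\cup C_i(p)\cup C_{i+1}(p)$ forms a cone of aperture at most $3\cdot\tfrac18\sqrt{\eps}=\tfrac{3}{8}\sqrt{\eps}$ around the common axis of $C_i(p)$. Since $q_i\in C_i(p)$, the ray $\overrightarrow{pq_i}$ deviates from this axis by at most $\tfrac12\cdot\tfrac18\sqrt{\eps}=\tfrac{1}{16}\sqrt{\eps}$. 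Adding these deviations, every direction in $C_{i-1}(p)\cup C_i(p)\cup C_{i+1}(p)$ makes angle at most $\tfrac{3}{16}\sqrt{\eps}+\tfrac{1}{16}\sqrt{\eps}=\tfrac14\sqrt{\eps}$ with $\overrightarrow{pq_i}$. Since $W_1$ has aperture $\tfrac12\sqrt{\eps}$, i.e.\ half-angle $\tfrac14\sqrt{\eps}$, about $\overrightarrow{pq_i}$, the chain of inclusions $q_{i',j}\in C_{i',j}(p)\subseteq C_{i-1}(p)\cup C_i(p)\cup C_{i+1}(p)\subseteq W_1$ holds.

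Combining the two ingredients then closes the argument: $q_{i',j}\in W_1$ and $q_{i',j}\notin W_1\cap W_2$ together imply $q_{i',j}\in W_1\setminus W_2$, and Lemma~\ref{lem:WWW} (applied with $q:=q_i$ and $q':=q_{i',j}$) delivers the desired strict inequality $\|pq_{i',j}\|<2\,\|pq_i\|$. I do not anticipate a real obstacle here; the only subtlety is the angle bookkeeping, which comfortably fits within the slack afforded by the choice $k=\lceil 16\pi/\sqrt{\eps}\rceil$, so the bounds hold with room to spare.
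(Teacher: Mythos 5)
Your proposal is correct and follows essentially the same route as the paper's proof: both reduce to Lemma~\ref{lem:WWW} by observing that $q_{i',j}\notin B(p,q_i)$ (by the test in step~1c) and that $\bigcup_{i'=i-1}^{i+1}C_{i'}(p)\subset W_1$, hence $q_{i',j}\in W_1\setminus W_2$. The only difference is that you spell out the angle bookkeeping ($\tfrac{3}{16}\sqrt{\eps}+\tfrac{1}{16}\sqrt{\eps}=\tfrac14\sqrt{\eps}$, the half-aperture of $W_1$) that the paper leaves implicit.
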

\begin{proof}
Since \textsc{SparseYao} added $pq_{i',j}$ to $G$, then $q_{i',j}\notin B(p,q_i)$. Recall (cf.\ Fig.~\ref{fig:wedge}) that $B(p,q_i) = W_1\cap W_2$, where $W_1$ and $W_2$ are cones centered at $p$ and $q_i$, resp., with apertures $\frac12\, \sqrt{\eps}$ and $\sqrt{\eps}$. Since the aperture of the cone $C_{i'}(p)$ is at most $\frac18\,\sqrt{\eps}$, then $\bigcup_{i'=i-1}^{i+1}C_{i'}(p)\subset W_1$.
Consequently, $\left(\bigcup_{i'=i-1}^{i+1}C_{i'}(p)\right)\setminus B(p,q_i)\subset W_1\setminus W_2$.
Lemma~\ref{lem:WWW} gives $\|pq_{i',j}(p)\|< 2\,\|pq_i(p)\|$, as claimed.
\end{proof}

\begin{lemma}\label{lem:factor}
For $G=\textsc{SparseYao}(S,\eps)$, we have $\|G\|=O(\eps^{-1/2}) \cdot \|F\|$.
\end{lemma}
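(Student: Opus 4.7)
The plan is to charge every edge of $G\setminus F$ to an edge of $F$ of comparable weight, in such a way that each edge of $F$ absorbs at most $O(\eps^{-1/2})$ such charges. Since $F\subseteq E(G)$ trivially gives $\|F\|\leq \|G\|$, the bound $\|G\|=O(\eps^{-1/2})\|F\|$ then follows from $\|G\setminus F\|=O(\eps^{-1/2})\|F\|$.

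First I would observe that every edge of $G$ is added either in step~1b or in step~1c of some iteration of the while loop, and that every edge added in step~1b lies in $F$ by the definition of $F$. It therefore suffices to charge each edge added in step~1c to the unique step~1b edge $pq_i$ of the same iteration.

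Next, I would bound the charges per iteration. Step~1c iterates over $i'\in\{i-1,i,i+1\}$ and $j\in\{1,\dots,k\}$, where $k=\lceil 16\pi/\sqrt{\eps}\rceil=O(\eps^{-1/2})$, and adds at most one edge $pq_{i',j}$ per pair $(i',j)$. Hence an iteration contributes at most $3k$ edges to $G\setminus F$. By Lemma~\ref{lem:deltoid}, each such edge has weight strictly less than $2\|pq_i\|$, so the total weight charged to $pq_i$ in a single iteration is at most $6k\,\|pq_i\|$.

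Finally, I would sum over all iterations. A given unordered edge $e\in F$ can appear as the step~1b edge of at most two iterations: for each endpoint $v$ of $e$, the other endpoint lies in a unique cone $C_i(v)$, so there is at most one pair $(v,q_i(v))$ in some $L_i$ whose edge equals $e$. Consequently
\[
\|G\setminus F\|\;\leq\;6k\sum_{\text{iterations}}\|pq_i\|\;\leq\;6k\cdot 2\,\|F\|\;=\;O(\eps^{-1/2})\,\|F\|,
\]
which combined with $\|F\|\leq\|G\|$ gives $\|G\|=O(\eps^{-1/2})\,\|F\|$ as claimed. There is no real obstacle here once Lemma~\ref{lem:deltoid} is in hand; the only point requiring care is the factor-of-$2$ double-counting across endpoints, which is a purely combinatorial accounting issue.
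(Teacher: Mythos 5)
Your proposal is correct and follows essentially the same charging argument as the paper: both charge the at most $3k=O(\eps^{-1/2})$ step-1c edges of each iteration to the step-1b edge $pq_i$ of that iteration, using Lemma~\ref{lem:deltoid} to bound each charged edge by $2\|pq_i\|$. Your explicit handling of the factor-of-two double-counting over the two endpoints of an edge in $F$ is a minor refinement that the paper's summation ``over all edges in $F$'' absorbs implicitly.
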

\begin{proof}
Fix $p\in S$ and $i\in \{1,\ldots , k\}$. Put $q_i=q_i(p)$, for short, and suppose that $pq_i\in E(G)$.
Consider the step of the algorithm that adds the edge $pq_i$ to $G$, together with
up to $3k=\Theta(\eps^{-1/2})$ edges of type $p q_{i',j}$, where $q_{i',j}\notin B(p,q_i)$ and $i'\in \{i-1,i ,i+1\}$.
By Lemma~\ref{lem:deltoid}, $\|pq_{i',j}\|< 2 \|pq_i\|$.
The total weight of all edges $p q_{i',j}$ added to the spanner is
\[
 \|p q_i\|+\sum_{i'=i-1}^{i+1}\sum_{j=1}^k \|pq_{i',j}\|
\leq \|pq_i\|+3k\cdot 2\, \|pq_i\|
= O(k)\, \|pq_i\|
= O\left(\eps^{-1/2}\right)\|pq_i\|.
\]
Summation over all edges in $F$ yields $\|G\|=O(\eps^{-1/2}) \cdot \|F\|$.
\end{proof}

It remains to show that $\|F\|\leq O(\eps^{-1}\sqrt{n})$. For $i=1,\ldots , k$, let
\[
F_i=\{pq_i(p)\in E(G): p\in S\},
\]
that is, the set of edges in $G$ between points $p$ and a closest point $q_i(p)$ in cone $C_i(p)$ of aperture at most $\frac18\cdot \sqrt{\eps}$. We prove that $\|F_i\|\leq O(\eps^{-1/2}\, \sqrt{n})$ in Lemma~\ref{lem:Fi} below.
Since $k=\Theta(\eps^{-1/2})$ this will immediately imply $\|F\|=\sum_{i=1}^k \|F_i\| =O(k\eps^{-1/2} \sqrt{n})=O(\eps^{-1}\sqrt{n})$.

\paragraph{Charging Scheme.}
Let $i\in \{1,\ldots , k\}$ be fixed. Assume w.l.o.g.\ that the symmetry axis of the cone $C_i$ is horizontal, and the apex is the leftmost point in $C_i$. Refer to Fig.~\ref{fig:drops}(left).
For each edge $pq_i(p)\in F_i$, let $R_i(p)$ be the intersection of cone $C_i(p)$ and the disk of radius $\|pq_i(p)\|$ centered at $p$. Note that $R_i(p)$ is a sector of the disk;
and the sectors $R_i(p)$, for all $p\in S$, are pairwise homothetic.
The sector $R_i(p)$ has three vertices: Its leftmost vertex is $p$, and the other two vertices are the endpoints of a circular arc, which have the same $x$-coordinate (by symmetry). As $q_i(p)$ is a closest point to $p$ in $C_i(p)$, then $S\cap \mathrm{int}(R_i(p))=\emptyset$.

\begin{figure}[htbp]
\begin{center}
  \includegraphics[width=0.8\textwidth]{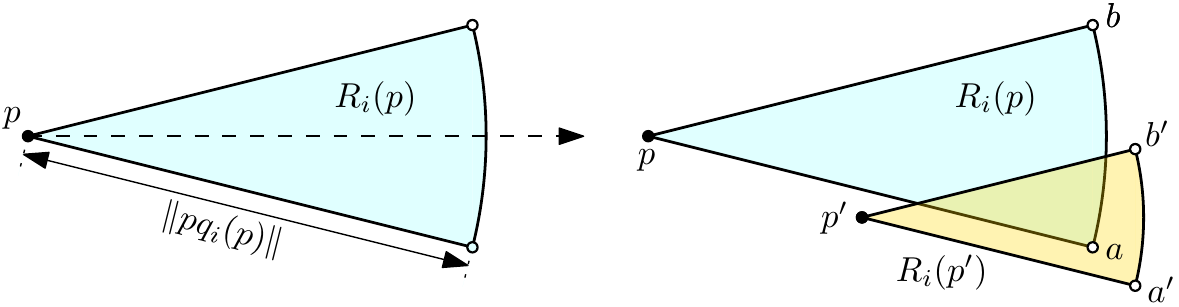}
\end{center}
\caption{Left: a sector $R_i(p)$. Right: two intersecting sectors $R_i(p)$ and $R_i(p')$.}
\label{fig:drops}
\end{figure}

Let $\mathcal{R}_i=\{R_i(p): pq_i(p)\in E(G)\}$ be the set of sectors for all edges $pq_i(p)$ in $G$.
These sectors are not necessarily disjoint; but we can still prove a lower bound on the area of their union.
We first study their intersection pattern.
\begin{lemma}\label{lem:intersect}
Assume that $R_i(p),R_i(p')\in \mathcal{R}_i$ and $R_i(p)\cap R_i(p')\neq \emptyset$.
Then $p$ and $p'$ have smaller $x$-coordinates than any other vertices of the two sectors.
\end{lemma}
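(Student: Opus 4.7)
The plan is to set up coordinates so that the symmetry axis $\overrightarrow{r}_i$ is the positive $x$-axis, place $p$ at the origin, and write $p' = (u, v)$; by swapping the labels of $p$ and $p'$ if necessary, assume $u = x(p') - x(p) \geq 0$. Let $r = \|pq_i(p)\|$ and let $\theta$ denote the half-aperture of $C_i$, so that the arc-endpoints of $R_i(p)$ are $a_\pm(p) = (r\cos\theta,\pm r\sin\theta)$ and $x(a_\pm(p)) = r\cos\theta$. Of the four non-trivial inequalities the lemma requires, three are immediate: $x(p) < x(a_\pm(p))$ and $x(p') < x(a_\pm(p'))$ are built into the definition of a sector, while $x(p) < x(a_\pm(p'))$ reduces to $x(p) \le x(p') < x(a_\pm(p'))$. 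The only nontrivial inequality is $x(p') < x(a_\pm(p))$, equivalently $u < r\cos\theta$.

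To establish $u < r\cos\theta$, I would suppose for contradiction that $u \ge r\cos\theta$ and derive $R_i(p)\cap R_i(p')=\emptyset$ (up to at most a single boundary point), contradicting the hypothesis. Split into two cases by whether $p' \in C_i(p)$. If $p' \in C_i(p)$, then $|v|\le u\tan\theta$, and since $q_i(p)$ is a closest point to $p$ in $S\cap C_i(p)$ we have $\|pp'\|\ge r$. A direct check shows that $C_i(p')\cap\{x\ge u\}$ is contained in $C_i(p)$, and the point of $C_i(p')$ closest to $p$ is the apex $p'$, already at distance at least $r$. Hence $R_i(p')\cap D(p,r)\subseteq\{p'\}$, so $R_i(p)\cap R_i(p')$ contains at most one point.

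In the other case $p' \notin C_i(p)$, assume WLOG $v > u\tan\theta$ after reflection across the $x$-axis. The cone intersection $C_i(p)\cap C_i(p')$ is bounded above by the upper ray of $C_i(p)$, the line $y = x\tan\theta$, and below by the lower ray of $C_i(p')$, the line $y = v - (x-u)\tan\theta$; these rays meet at the corner $(x^\ast, x^\ast\tan\theta)$ with $x^\ast = \tfrac12(u + v\cot\theta)$, whose distance from $p$ equals $x^\ast\sec\theta = \tfrac12(u\sec\theta + v\csc\theta)$. A short monotonicity calculation (differentiating $x^2+y^2$ along each boundary ray) confirms that this corner minimizes the distance from $p$ to the cone intersection. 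Under $u \ge r\cos\theta$ we have $u\sec\theta \ge r$, and under $v > u\tan\theta$ we have $v\csc\theta > u\sec\theta \ge r$, so the corner distance strictly exceeds $r$. Thus $C_i(p)\cap C_i(p')\cap D(p,r) = \emptyset$, forcing $R_i(p)\cap R_i(p') = \emptyset$. The main obstacle is identifying the correct corner point and verifying it as the minimum-distance point of the cone intersection to $p$; the rest of the argument is routine planar trigonometry.
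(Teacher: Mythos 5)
Your proof is correct in substance but follows a genuinely different route from the paper. The paper argues qualitatively: it notes $p'\notin \mathrm{int}(R_i(p))$, assumes for contradiction that $p'$ lies to the right of the vertical chord $ab$ of $R_i(p)$, observes that $p'$ being the leftmost point of $R_i(p')$ forces all boundary intersections of the two sectors onto the circular arc $ab$, and then a short case analysis (straight edge of $R_i(p')$ crossing the arc, versus arc crossing arc and the perpendicular-bisector symmetry) lands $p'$ in $\mathrm{int}(R_i(p))$, a contradiction. You instead reduce everything to the single inequality $x(p')<r\cos\theta$ and verify it by explicit convex geometry: in both of your cases you identify the closest point of a convex cone (or of the wedge $C_i(p)\cap C_i(p')$) to $p$ and show it lies outside the closed disk $D(p,r)$. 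Your Case~2 computation of the corner $\bigl(\tfrac12(u+v\cot\theta),\ \tfrac12(u+v\cot\theta)\tan\theta\bigr)$ and its distance $\tfrac12(u\sec\theta+v\csc\theta)>r$ checks out (the wedge is indeed generated at that corner by the directions $(\cos\theta,\pm\sin\theta)$, both of which have nonnegative inner product with the corner for $\theta<\pi/4$), and it is arguably tighter than the paper's rather terse arc-versus-arc argument. What the paper's approach buys is brevity and independence from coordinates; what yours buys is an explicit quantitative certificate of disjointness.

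One caveat, which you partly flag yourself: in Case~1 you only conclude $R_i(p)\cap R_i(p')\subseteq\{p'\}$, and a singleton intersection does not contradict the hypothesis $R_i(p)\cap R_i(p')\neq\emptyset$. The residual configuration is $p'$ lying exactly on the circular arc of $R_i(p)$ (a tie with $q_i(p)$ for closest point in the cone), in which case the stated conclusion can genuinely fail. This is not a defect specific to your argument --- the paper's own proof makes the same implicit general-position assumption when it asserts that $p'$ lies in the \emph{exterior} of $R_i(p)$ rather than merely outside its interior --- but if you want the case split to close cleanly you should state explicitly that ties in the choice of $q_i(p)$ are excluded (or broken consistently), after which Case~1 yields the empty intersection and the contradiction is genuine.
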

\begin{proof}
Denote the vertices of $R_i(p)$ and $R_i(p')$ by $a,b,p$ and $a',b',p'$, resp.; see Fig.~\ref{fig:drops}(right).
Point $p'$ is in the exterior of $R_i(p)$, since $S\cap \mathrm{int}((R_i(p))=\emptyset$.
Now $R_i(p)\cap R_i(p')\neq \emptyset$ implies that boundaries of $R_i(p)$ and $R_i(p')$ intersect.

Suppose, to the contrary, $p'$ lies to the right of the vertical line $ab$. Since $p'$ is the leftmost point of $R_i(p')$, then all intersection points in $\partial R_i(p)\cap \partial R_i(p')$ are on the circular arc $ab$.
If $a'p'$ or $b'p'$ intersects the circular arc $ab$, then $p'\in \mathrm{int}(R_i(p))$, a contradiction.
Otherwise only the circular arcs $ab$ and $a'b'$ intersect: Then both $p$ and $p'$ lie on the orthogonal bisector
of the two intersection points; and we arrive again at a contradiction $p'\in \mathrm{int}(R_i(p))$.
\end{proof}

We partition the sectors $R_i(p)$ according to their diameters:
For all $j\in \N$, let $\mathcal{R}_{i,j}$ be the set of sectors $R_i(p)$ such that
$2^{-j}\leq \|p q_i(p)\| < 2^{1-j}$.
We show that the sectors $\mathcal{R}_{i,j}$ do not overlap too heavily.

\begin{lemma}\label{lem:density}
For all $j\in \N$, any point $g\in [0,1]^2$ is contained in $O(\eps^{-1/2})$ sectors in $\mathcal{R}_{i,j}$.
\end{lemma}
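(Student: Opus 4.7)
My aim is to bound the depth of the sectors in $\mathcal{R}_{i,j}$ at $g$ by establishing a lower bound of $\Omega(\eps)$ on the angular separation, as seen from $g$, between the apices of any two sectors in $\mathcal{R}_{i,j}$ that both contain $g$. Since every such apex lies inside the reverse cone from $g$ of aperture $\sqrt{\eps}/8$, this separation bound yields at most $(\sqrt{\eps}/8)/\Omega(\eps)=O(\eps^{-1/2})$ apices.

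After a rigid motion I may assume $g$ is the origin and $\overrightarrow{r}_i$ is the positive $x$-axis. Each apex $p_\ell$ with $R_i(p_\ell)\in\mathcal{R}_{i,j}$ containing $g$ lies in the reverse cone from $g$ at distance $\|p_\ell g\|\le r_\ell:=\|p_\ell q_i(p_\ell)\|<2^{1-j}$. Write $p_\ell=(-u_\ell,v_\ell)$ with $u_\ell>0$ and $|v_\ell|/u_\ell\le\tan(\sqrt{\eps}/16)$, and let $\psi_\ell:=\arctan(v_\ell/u_\ell)\in[-\sqrt{\eps}/16,\sqrt{\eps}/16]$.

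For any pair of such apices, WLOG $u_\ell<u_{\ell'}$ so that $p_\ell$ is processed before $p_{\ell'}$ in the sort order of $L_{i,j}$, I will invoke two constraints. The first is \emph{non-containment}: since $p_\ell\in S$, $S\cap\mathrm{int}(R_i(p_{\ell'}))=\emptyset$, and $\|p_\ell-p_{\ell'}\|\le u_{\ell'}\le r_{\ell'}$, the direction $\overrightarrow{p_{\ell'}p_\ell}$ cannot lie in the interior of $C_i$, so $|v_\ell-v_{\ell'}|>(u_{\ell'}-u_\ell)\tan(\sqrt{\eps}/16)$. The second is \emph{non-deletion}: for the edge $p_{\ell'}q_i(p_{\ell'})$ to be added in step~1b, the pair $(p_{\ell'},q_i(p_{\ell'}))$ must have survived step~1d of the iteration that processed $p_\ell$, so $p_{\ell'}\notin\widehat{A}(p_\ell,q_i(p_\ell))$. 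Since $\widehat{A}(p_\ell,q_i(p_\ell))$ is the $(\eps/64)r_\ell$-neighborhood of a segment of length $(\sqrt{\eps}/16)r_\ell$ starting at $p_\ell$ in the direction opposite $q_i(p_\ell)$, this gives the dichotomy: either $u_{\ell'}-u_\ell>(\sqrt{\eps}/16)r_\ell$ (up to lower-order corrections), or $|v_\ell-v_{\ell'}|>(\eps/64)r_\ell$.

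Using the first-order expansion $\psi_{\ell'}-\psi_\ell=(u_\ell\,\Delta v-v_\ell\,\Delta u)/(u_\ell u_{\ell'})+O(\eps^2)$ with $\Delta u=u_{\ell'}-u_\ell$ and $\Delta v=v_{\ell'}-v_\ell$, I will substitute the bounds $|v_\ell|\le u_\ell\tan(\sqrt{\eps}/16)$ and $u_{\ell'}\le r_{\ell'}\le 2r_\ell$ to extract an $\Omega(\eps)$ separation from whichever branch of non-deletion applies: the small-$\Delta u$ branch delivers the bound through the $v$-gap directly, while the large-$\Delta u$ branch delivers it by combining non-containment with $u_{\ell'}\le 2r_\ell$. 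The main obstacle I anticipate is the boundary configuration $|\psi_\ell|,|\psi_{\ell'}|\approx\sqrt{\eps}/16$ where non-containment is only marginally strict and the leading term in the expansion nearly cancels; there I will verify directly that non-deletion applied on the cone boundary forces $\Delta u=\Omega(\sqrt{\eps})\,r_\ell$, so that the near-boundary apices alone number at most $O(\eps^{-1/2})$, and then combine this with the interior bound to conclude.
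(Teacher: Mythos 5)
Your two ingredients---emptiness of the sectors (``non-containment'') and survival of step~1d (``non-deletion'')---are exactly the constraints the paper uses, but the quantity you separate by, the angle $\psi$ at $g$, is the wrong one, and the patch you propose for the boundary does not hold. Concretely: in your large-$\Delta u$ branch the only angular information comes from non-containment, $|\Delta v|\geq \Delta u\, T$ with $T=\tan(\sqrt{\eps}/16)$, which via $\Delta v=u_{\ell'}(t_{\ell'}-t_\ell)+t_\ell\,\Delta u$ gives only $|t_{\ell'}-t_\ell|\geq (\Delta u/u_{\ell'})\,(T-|t_\ell|)$. Since $T=\Theta(\sqrt{\eps})$ and the branch only guarantees $\Delta u/u_{\ell'}=\Omega(\sqrt{\eps})$, this is $\Omega(\eps)$ only when $T-|t_\ell|=\Omega(T)$; for a pair with $|t_\ell|=(1-\sqrt{\eps})T$ and $\Delta u=\frac{\sqrt{\eps}}{16}r_\ell$ the guaranteed separation is only $O(\eps^{3/2})$. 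So the degeneration is not confined to the cone boundary: it affects every pair whose nearer apex lies in the outer portion of the angular range. Your fallback---that near-boundary apices are pairwise separated by $\Delta u=\Omega(\sqrt{\eps})r_\ell$---is false: take $t_\ell=0.9T$, $t_{\ell'}=T$, $u_{\ell'}=u_\ell(1+o(1))$ with $u_\ell=\Theta(r_\ell)$; then $|\Delta v|\approx 0.1\,T u_\ell=\Theta(\sqrt{\eps})\,r_\ell\gg \frac{\eps}{64}r_\ell$, so $p_{\ell'}$ escapes $\widehat{A}(p_\ell,q_i(p_\ell))$ through the perpendicular branch and also satisfies non-containment, with $\Delta u$ arbitrarily small. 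Finally, even if you had the clean pairwise dichotomy ``angular separation $\Omega(\eps)$ or $u$-separation $\Omega(\sqrt{\eps})2^{-j}$'', a two-parameter packing (a $\Theta(\eps^{-1/2})\times\Theta(\eps^{-1/2})$ grid in $(t,u)$) only caps the count at $O(\eps^{-1})$, not $O(\eps^{-1/2})$.

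The repair is to separate by a single linear quantity: the displacement $v$ perpendicular to the cone axis, which is what the paper does (it takes the left envelope $\gamma$ of the truncated sectors and shows consecutive local minima satisfy $|y(p)-y(p')|\geq\frac{\eps}{256}2^{-j}$). In your notation, the two branches then match up: in the small-$\Delta u$ branch non-deletion gives $|\Delta v|=\Omega(\eps)\,2^{-j}$ directly, and in the large-$\Delta u$ branch non-containment converts $\Delta u\geq\frac{\sqrt{\eps}}{16}r_\ell$ into $|\Delta v|\geq T\,\Delta u=\Omega(\eps)\,2^{-j}$ as well. Since all apices lie in the reverse cone of half-angle $\sqrt{\eps}/16$ within distance $2^{1-j}$ of $g$, their $v$-extent is $O(\sqrt{\eps})\,2^{-j}$, and the $O(\eps^{-1/2})$ bound follows with no boundary case at all. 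At that point your argument coincides with the paper's.
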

\begin{proof}
Let $\mathcal{R}_{i,j}(g)=\{R\in \mathcal{R}_{i,j}: g\in R\}$ be the set of sectors in $\mathcal{R}_{i,j}$ that contain $g$; these sectors pairwise intersect. By Lemma~\ref{lem:intersect}, the leftmost vertices of the sectors have smaller $x$-coordinates than any other vertices (i.e., endpoints of circular arcs). Let $\ell$ be a vertical line that separates the leftmost vertices of these sectors from all other vertices; and let $\ell^-$ be the left halfplane bounded by $\ell$; see Fig.~\ref{fig:terrain}(left).

\begin{figure}[htbp]
\begin{center}
  \includegraphics[width=0.98\textwidth]{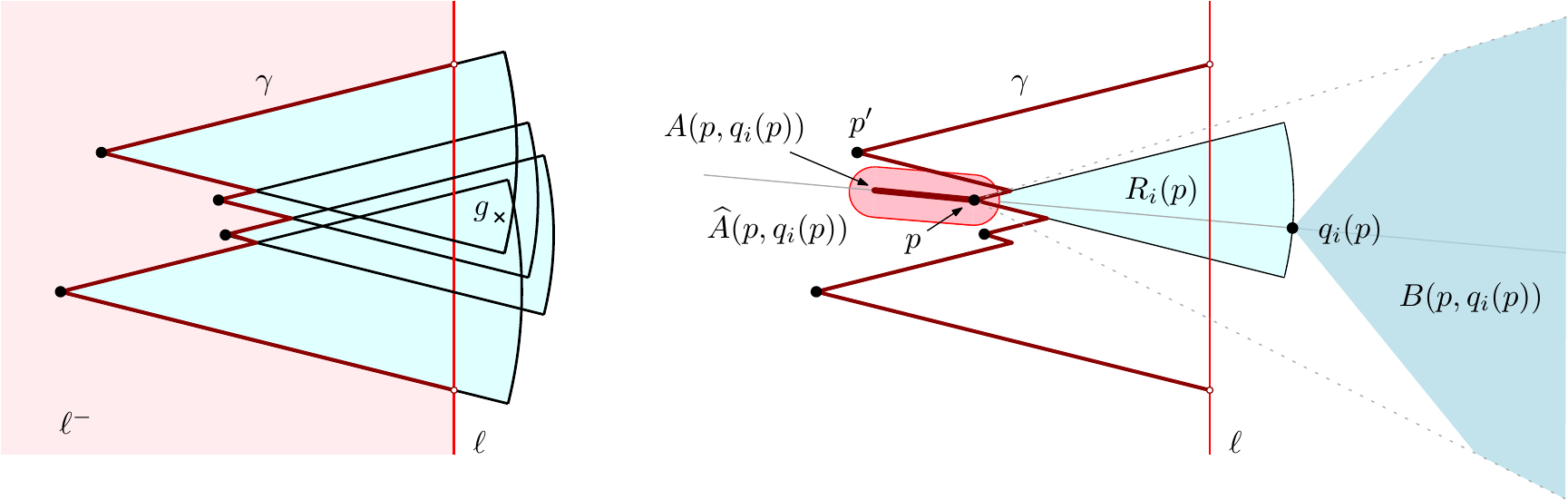}
\end{center}
\caption{Left: the set of sectors that contains point $g$; and the vertical line $\ell$.
Right: regions $A(p,q_i(p))$ and $\widehat{A}(p,q_i(p))$ for a vertex $p\in \gamma$.}
\label{fig:terrain}
\end{figure}

Recall that for every sector $R_i(p)\in \mathcal{R}_{i,j}(g)$, we have $\|pg\|\leq \|pq_i(p)\|<2^{1-j}$. As the aperture of $C_i$ is at most $\frac18\, \sqrt{\eps}$, with $\eps\in (0,\frac19)$, then the weight of the vertical segment $\ell\cap R_i(p)$ is at most
$\|\ell\cap R_i(p)\| \leq 2^{1-j}\ \cdot 2\sin (\frac{1}{16}\, \sqrt{\eps})\leq O(2^{-j} \sqrt{\eps})$.

For every sector $R_i(p)\in \mathcal{R}_{i,j}(g)$, the region $R_i(p)\cap \ell^-$ is an isosceles triangle with two legs of slopes $\pm \tan (\frac{1}{16}\sqrt{\eps})$. Consider the union of these isosceles triangles, $(\bigcup\{ R: R\in \mathcal{R}_{i,j}(g)\})\cap \ell^-$. Its right boundary is of a vertical segment of weight $O(2^{-j}\sqrt{\eps})$ along $\ell$; and its left boundary is a $y$-monotone curve, that we denote by $\gamma$. All local $x$-minima of $\gamma$ are leftmost vertices of the sectors $R_i(p)\in \mathcal{R}(g)$.

We claim that if $p,p'\in S$ are two consecutive local $x$-minima along $\gamma$, then
\begin{equation}\label{eq:height}
    |y(p)-y(p')|\geq \frac{\eps}{256}\cdot 2^{-j},
\end{equation}
that is the $y$-coordinates of $p$ and $p'$ differ by at least $\eps \cdot 2^{-(j+9)}$; refer to Fig.~\ref{fig:terrain}(right). Suppose, to the contrary, that $|y(p)-y(p')|< \eps\cdot 2^{-(j+9)}$.
Due to the slopes of $\gamma$, this implies
\[
|x(p)-x(p')|
=\frac{|y(p)-y(p')|}{|\mathrm{slope}(pp')|}
<\eps \cdot 2^{-(j+9)}{\tan \left(\frac{1}{16}\sqrt{\eps}\right)}
\leq \frac{\eps\cdot 2^{-(j+9)}}{\frac{1}{16}\cdot\sqrt{\eps}}
= \sqrt{\eps}\,2^{-(j+5)}.
\]

Assume w.l.o.g.\ that Algorithm~\textsc{SparseYao} added edge $pq_i(p)$ to $E(G)$ before $p' q_i(p')$.
As the list $L_i$ is sorted by decreasing $x$-coordinates, then $x(p')\leq x(p)$.
When the algorithm added edge $p q_i(p)$ to $E(G)$, it deleted all pairs $(s, q_i(s))$ from $L_i$ such that $s\in \widehat{A}(p,q_i(p))$. Given that $\|p q_i(p)\|\geq 2^{-j}$ and $q_i(p)\in C_i(p)$,
then $A(p,q_i(p))$ is a line segment of weight
$\|A(p,q_i(p))\| = \frac{\sqrt{\eps}}{16}\, \|p q_i(p)\| \geq 2^{-(j+4)}\, \sqrt{\eps}$
and slope of absolute value at most $\tan(\frac{1}{16}\,\sqrt{\eps})$.
This implies that $x(p')$ lies in the $x$-projection of
$A(p,q_i(p))$. Furthermore, the distance between $p'$ and the point with the same $x$-coordinate
in $A(p,q_i(p))$ is at most $2\, |y(p)-y(p')|< \eps\cdot 2^{-(j+8)}$.
Recall that $\widehat{A}(p,q_i(p))$ contains every point in an $(\frac{\eps}{64}\cdot 2^{-j})$-neighborhood of 
$A(p,q_i(p))$.  Consequently, $p'\in  \widehat{A}(p,q_i(p))$,
and the algorithm deleted $(p',q_i(p)')$ from $L_i$ when it added $p q_i(p)$ to $E(G)$.
This contradicts the assumption $p'q_i(p')\in E(G)$, and proves the claim~\eqref{eq:height}.

As the height of $\gamma$ is $O(2^{-j} \sqrt{\eps})$. Combined with \eqref{eq:height}, this implies that $\gamma$ has at most $O(2^{-j}\sqrt{\eps})/ (\frac{\eps}{64}\cdot 2^{-j}) =O(\eps^{-1/2})$ local $x$-minima. Thus
$|\mathcal{R}(g)|\leq O(\eps^{-1/2})$, as claimed.
\end{proof}

The sectors in $\mathcal{R}_{i,j}$ are not necessarily disjoint. In order to obtain disjoint regions, we define the \emph{core} of a sector $R_i(p)$, denoted $\widehat{R}_i(p)$; see Fig.~\ref{fig:core}(left). Label the vertices of $R_i(p)$ by $a$, $b$, and $p$, where $\|pa\|=\|pb\|=\|pq_i(p)\|$. Let $\mathbf{v}_i(p)$ be the vector along the angle bisector of $\angle apb$ of weight $\|\mathbf{v}_i(p)\|=\frac23 \|p q_i(p)\|$. Now $\frac13 R_i(p)$ denote the region obtained by scaling $R_i(p)$ from center $p$ with ratio $\frac13$; and let $\widehat{R}_i(p)=\frac13 R_i(p)+\mathbf{v}_i(p)$. By construction, we have $\widehat{R}_i(p)\subset R_i(p)$ and $\area(\widehat{R}_i(p))=\frac19\,\area(R_i(p))$.

\begin{figure}[htbp]
\begin{center}
  \includegraphics[width=0.8\textwidth]{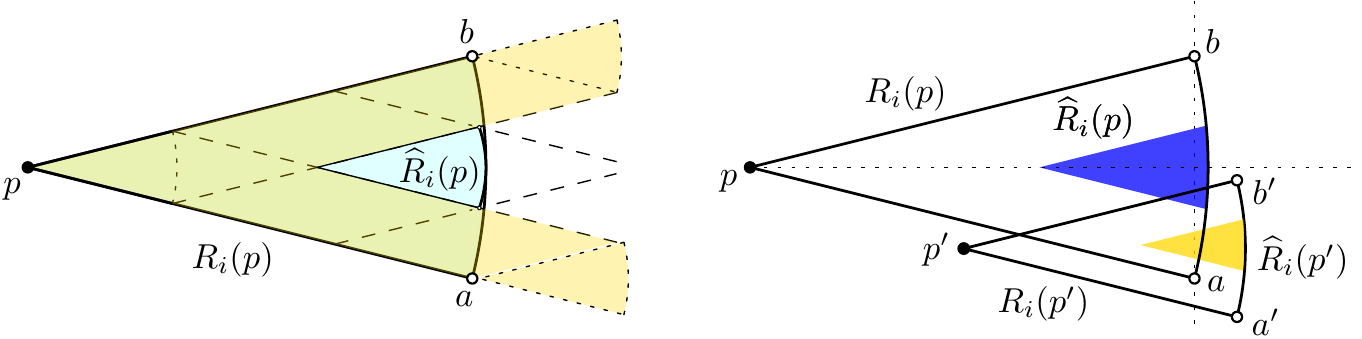}
\end{center}
\caption{Left: the core $\widehat{R}_i(p)$ of a sector $R_i(p)$.
Right: The cores $\widehat{R}_i(p)$ and $\widehat{R}_i(p')$ are disjoint.}
\label{fig:core}
\end{figure}

\begin{lemma}\label{lem:disjoint}
If $j+2\leq j'$, then any two sectors in $\mathcal{R}_{i,j}$ and $\mathcal{R}_{i,j'}$ have disjoint cores.
\end{lemma}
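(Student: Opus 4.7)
The plan is to argue by contradiction. Suppose $P \in \widehat{R}_i(p) \cap \widehat{R}_i(p')$, and let $r = \|pq_i(p)\|$ and $r' = \|p'q_i(p')\|$. The hypothesis $j' \geq j + 2$ gives $r \geq 2^{-j}$ and $r' < 2^{1-j'} \leq 2^{-j-1}$, hence $r' < r/2$ strictly. I would work in coordinates with $p$ at the origin and the axis of $C_i$ as the positive $x$-direction; let $\mathbf{u}$ denote this unit vector, $\phi_0 = \pi/k$ the half-aperture of $C_i$, and write $p' = (d,e)$.

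The structural observation driving the argument is that the core $\widehat{R}_i(p) = \tfrac{1}{3} R_i(p) + \tfrac{2r}{3}\mathbf{u}$ sits simultaneously inside (i) the \emph{apex disk} $B(a, r/3)$ around $a := p + \tfrac{2r}{3}\mathbf{u}$, because $R_i(p) \subseteq B(p,r)$; and (ii) the \emph{apex cone} $a + C_i$, because $R_i(p) \subseteq C_i(p)$ and $C_i$ is scale-invariant. Analogous containments hold for $\widehat{R}_i(p')$ with apex $a' := p' + \tfrac{2r'}{3}\mathbf{u}$ and radius $r'/3$.

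The apex-disk containment forces $\|a - a'\| \leq (r+r')/3$; combined with $a - a' = (p - p') + \tfrac{2(r-r')}{3}\mathbf{u}$ and the triangle inequality,
\[
\|p - p'\| \;\leq\; \tfrac{r + r'}{3} + \tfrac{2(r - r')}{3} \;=\; r - \tfrac{r'}{3} \;<\; r.
\]
The apex-cone containment, writing $P - p = (X, Y)$, gives $|Y| \leq (X - \tfrac{2r}{3})\tan\phi_0$ and analogously $|Y - e| \leq (X - d - \tfrac{2r'}{3})\tan\phi_0$. Adding these via the triangle inequality on the transverse coordinate yields
\[
|e| \;\leq\; \bigl(2X - \tfrac{2r}{3} - d - \tfrac{2r'}{3}\bigr)\tan\phi_0
\]
for some $X$ in the common interval $[\max(\tfrac{2r}{3}, d + \tfrac{2r'}{3}), \min(r, d + r')]$. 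Since the right-hand side is increasing in $X$, I evaluate at the upper endpoint $X = \min(r, d + r')$ and distinguish between $d \leq r - r'$ (where the bound becomes $(d + \tfrac{4r'}{3} - \tfrac{2r}{3})\tan\phi_0$) and $d > r - r'$ (where it becomes $(\tfrac{4r}{3} - d - \tfrac{2r'}{3})\tan\phi_0$); in each case the hypothesis $r' < r/2$ collapses the bound to $|e| \leq d\tan\phi_0$, forcing $p' \in C_i(p)$.

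Now $p' \in C_i(p) \cap (S \setminus \{p\})$, so the closeness property of $q_i(p)$ forces $\|p - p'\| \geq \|pq_i(p)\| = r$, contradicting the strict bound derived from the disk step. The main obstacle is the cone step: the crude transverse estimate $|Y| \leq r\tan\phi_0$ (the full angular width at the far arc) is too loose to force $p' \in C_i(p)$, and one really must use the sharper apex-cone bound $|Y| \leq (X - \tfrac{2r}{3})\tan\phi_0$, which exploits that the core narrows to a point at its inner apex, combined with the case split keyed crucially to $r' < r/2$.
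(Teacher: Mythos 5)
Your proof is correct, and it takes a genuinely different route from the paper's. The paper first invokes Lemma~\ref{lem:intersect} to place $p'$ to the left of the chord $ab$ of $R_i(p)$, combines this with the emptiness of $\mathrm{int}(R_i(p))$ to conclude $p'\notin \mathrm{int}(C_i(p))$, and then exhibits an explicit separating line for the two cores (the boundary ray $pa$ translated by $\mathbf{v}_i(p)$), using $\|\mathbf{v}_i(p')\|\leq \frac12\|\mathbf{v}_i(p)\|$. You instead run the logic in the opposite direction: enclosing each core in an apex ball $B(a,r/3)$ and an apex cone $a+C_i$, you show that a common point of the cores would force $p'$ strictly inside $C_i(p)$ with $\|pp'\|\leq r-\frac{r'}{3}<r=\|pq_i(p)\|$, contradicting the choice of $q_i(p)$ as a closest point of $S$ in that cone. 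Both arguments hinge on $r'<r/2$ (yours in the case split that collapses the transverse bound to $|e|<d\tan\phi_0$, the paper's in comparing $\|\mathbf{v}_i(p')\|$ to $\|\mathbf{v}_i(p)\|$), but yours bypasses Lemma~\ref{lem:intersect} entirely, avoids the reflection/WLOG step and the claims about which side of various translated lines the sectors lie on, and replaces them with two explicit linear inequalities in the axial coordinate; note also that the nonemptiness of your common $X$-interval already forces $d\geq \frac{2r}{3}-r'>0$, which you should state explicitly so that $|e|<d\tan\phi_0$ indeed certifies $p'\in C_i(p)$. A side benefit of your version is that the ball-and-cone containments and the transverse triangle inequality carry over verbatim to $\R^d$ (with $Y,e\in\R^{d-1}$), matching the paper's claim in Section~\ref{sec:d-space} that Lemma~\ref{lem:disjoint} extends to higher dimensions.
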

\begin{proof}
Let $R_i(p)\in \mathcal{R}_{i,j}$ and $R_i(p')\in \mathcal{R}_{i,j'}$ with $j+2\leq j'$. Label their vertices
by $a$, $b$, $p$ and $a'$, $b'$, $p'$, resp., in counterclockwise order; see Fig.~\ref{fig:core}(right). Note that
\[
\|a'p'\| \leq 2^{-j'} \leq 2^{-(j+2)} \leq \frac12\,\|ap\|.
\]
Suppose, for the sake of contradiction, that $\widehat{R}_i(p)\cap \widehat{R}_i(p')\neq \emptyset$.
Since $\widehat{R}_i(p)\subset R_i(p)$ and $\widehat{R}_i(p')\subset R_i(p')$, then $R_i(p)\cap R_i(p')\neq \emptyset$. By Lemma~\ref{lem:intersect}, $p'$ lies to the left of the vertical line $ab$.
Given that $p'\in S$, the point $p'$ cannot be in the interior of $R_i(p)$, that is, $p'\notin \mathrm{int}(R_i(p))$. The combination of these two observation yields $p'\notin \mathrm{int}(C_i(p))$.

We may assume that $p'$ lies below the line $pa$ by a reflection in the $x$-axis, if necessary.
Since the sector $R_i(p')$ has a horizontal symmetry axis and
$\|\mathbf{v}_i(p')\|=\frac23 \|p'a'\|= \frac23 \|p'b'\|$, then
$R_i(p')$ lies below the line $p'a'+3\mathbf{v}_i(p')$;
and the core $\widehat{R}_i(p')$ lies below the line $p'a'+2\mathbf{v}_i(p')$.
Recall that the vectors $\mathbf{v}_i(p)$ and $\mathbf{v}_i(p')$ are parallel,
and $\|a'p'\| \leq \frac12\,\|ap\|$ implies $\|\mathbf{v}_i(p')\|\leq \frac12 \|\mathbf{v}_i(p)\|$.
Consequently, $\widehat{R}_i(p')$ lies below the line $pa+\mathbf{v}_i(p)$.
However, by construction, $\widehat{R}_i(p)$ lies above the line $pa+\mathbf{v}_i(p)$.
This implies that the cores $\widehat{R}_i(p)$ and $\widehat{R}_i(p')$ are disjoint.
\end{proof}

The combination of Lemmas~\ref{lem:density} and~\ref{lem:disjoint} gives the following bound on the total area of all sectors of a given direction.

\begin{lemma}\label{lem:volume}
For every $i$, we have $\sum_{R\in \mathcal{R}_i}\area(R)=\sum_{j\in \N} \sum_{R\in \mathcal{R}_{i,j}} \area(R)\leq O(\eps^{-1/2})$.
\end{lemma}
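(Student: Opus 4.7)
The plan is to reduce the sum of sector areas to a sum of their cores, where we can exploit the disjointness between scales (Lemma~\ref{lem:disjoint}) while controlling within-scale overlap via Lemma~\ref{lem:density}. Since $\area(\widehat{R}_i(p))=\tfrac{1}{9}\area(R_i(p))$ by construction, it suffices to prove $\sum_{R\in \mathcal{R}_i}\area(\widehat{R})=O(\eps^{-1/2})$ and multiply by $9$ at the end.

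To apply Lemma~\ref{lem:disjoint}, which requires a scale-gap of at least $2$, I would split the set of scales $\N$ into two arithmetic progressions $P_0=\{j\in\N:j\text{ even}\}$ and $P_1=\{j\in\N:j\text{ odd}\}$. Any two distinct indices $j,j'$ within the same class satisfy $|j-j'|\geq 2$, so Lemma~\ref{lem:disjoint} implies that cores drawn from different scales in the same parity class are pairwise disjoint.

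Next, fix a parity class $P_t$ and consider an arbitrary point $g\in\R^2$. By the disjointness just established, $g$ can lie in cores of $\mathcal{R}_{i,j}$ for at most one value $j\in P_t$; and for that particular $j$, Lemma~\ref{lem:density} says that $g$ lies in at most $O(\eps^{-1/2})$ sectors of $\mathcal{R}_{i,j}$, so \emph{a fortiori} in at most $O(\eps^{-1/2})$ cores. Hence the total pointwise multiplicity of all cores coming from $P_t$ is $O(\eps^{-1/2})$ at every point. Integrating this multiplicity against the indicator of the union gives
\[
\sum_{j\in P_t}\sum_{R\in \mathcal{R}_{i,j}}\area(\widehat{R})\;\leq\; O(\eps^{-1/2})\cdot \area\Bigl(\bigcup_{R\in \mathcal{R}_i}\widehat{R}\Bigr).
\]
Since $S\subset[0,1]^2$ and $\|p\,q_i(p)\|\leq \sqrt{2}$ for every edge of $G$, each sector $R_i(p)$—and hence each core—lies in a fixed bounded region (say $[-\sqrt{2},1+\sqrt{2}]^2$) of area $O(1)$. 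Thus the right-hand side is $O(\eps^{-1/2})$. Summing over the two parity classes and using $\area(R)=9\,\area(\widehat{R})$ yields the claim.

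There is no real obstacle beyond the bookkeeping: the only thing one must check carefully is that the parity-class trick genuinely invokes Lemma~\ref{lem:disjoint} (which needs $j+2\leq j'$), and that sectors do live in a bounded ambient region so the $\area(\bigcup\widehat{R})=O(1)$ step is legitimate. Both points reduce to one-line observations.
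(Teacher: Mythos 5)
Your proof is correct and follows essentially the same route as the paper: both reduce to the cores, bound the pointwise multiplicity within each scale by Lemma~\ref{lem:density}, use Lemma~\ref{lem:disjoint} across scales in a fixed residue class, and integrate the multiplicity over a bounded region. The only cosmetic differences are that you split the scales into two parity classes where the paper uses three residue classes mod~$3$ (either suffices for the hypothesis $j+2\leq j'$), and you are slightly more careful than the paper in noting that the cores live in a bounded neighborhood of $[0,1]^2$ rather than in $[0,1]^2$ itself.
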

\begin{proof}
For every sector $R\in \mathcal{R}_i(p)$, we have $\area(R)=\Theta(\area(\widehat{R}))$.
Define the function $f_{i,j}:[0,1]^2\rightarrow \N$ such that for all $g\in [0,1]^2$,
$f(g)$ is the number of cores $\widehat{R}$, $R\in \mathcal{R}_{i,j}$, that contain $g$
Then $\sum_{R\in \mathcal{R}_{i,j}} \area(\widehat{R}) =\int_{[0,1]^2} f_{i,j}(g)$.
By Lemma~\ref{lem:density}, we have $f(g)\leq O(\eps^{-1/2})$ for all $g\in [0,1]$.
By Lemma~\ref{lem:disjoint}, $\sum_{j\in\N} f_{i,3j+\ell}(g) \leq O(\eps^{-1/2})$
for all $\ell\in \{0,1,2\}$ and $g\in [0,1]$. Consequently,
\begin{align*}
\sum_{R\in \mathcal{R}_i}\area(R)
&\leq O\left(\sum_{R\in \mathcal{R}_i}\area(\widehat{R}) \right)
= O\left(\sum_{\ell=0}^2 \sum_{j\in \N} \sum_{R\in \mathcal{R}_{i,3j+\ell}} \area(\widehat{R}) \right)\\
&= O\left(\sum_{\ell=0}^2 \sum_{j\in \N} \int_{[0,1]^2} f_{i,3j+\ell}(g) \right)
= O\left(\sum_{\ell=0}^2 \int_{[0,1]^2}  \sum_{j\in \N} f_{i,3j+\ell}(g) \right)\\
&\leq O\left(\sum_{\ell=0}^2 \int_{[0,1]^2} \eps^{-1/2} \right)
= O\left(\sum_{\ell=0}^2 \eps^{-1/2}\area([0,1]^2)\right) =O(\eps^{-1/2}),
\end{align*}
as claimed.
\end{proof}

\begin{lemma}\label{lem:Fi}
For every $i\in \{1,\ldots, k\}$, we have $\|F_i\|\leq O(\eps^{-1/2}\sqrt{n})$.
\end{lemma}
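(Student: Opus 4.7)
The plan is to combine the area bound from Lemma~\ref{lem:volume} with a Cauchy--Schwarz step. First I would observe that each sector $R_i(p)\in \mathcal{R}_i$ is a circular sector of radius $\|pq_i(p)\|$ and aperture $\Theta(\sqrt{\eps})$ (at most $\tfrac18\sqrt{\eps}$), so its area is
\[
\area(R_i(p)) = \Theta\!\left(\sqrt{\eps}\,\|pq_i(p)\|^2\right).
\]

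Next, summing over all edges in $F_i$ and invoking Lemma~\ref{lem:volume}, we get
\[
\sqrt{\eps}\sum_{pq_i(p)\in F_i} \|pq_i(p)\|^2
\;\leq\; \Theta\!\left(\sum_{R\in \mathcal{R}_i} \area(R)\right)
\;\leq\; O(\eps^{-1/2}),
\]
which gives the $\ell^2$ bound $\sum_{pq_i(p)\in F_i}\|pq_i(p)\|^2 \leq O(\eps^{-1})$.

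Finally, since $|F_i|\leq n$ (each vertex $p\in S$ contributes at most one edge $pq_i(p)$ to $F_i$), the Cauchy--Schwarz inequality yields
\[
\|F_i\| \;=\; \sum_{pq_i(p)\in F_i}\|pq_i(p)\|
\;\leq\; \sqrt{|F_i|}\cdot\sqrt{\sum_{pq_i(p)\in F_i}\|pq_i(p)\|^2}
\;\leq\; \sqrt{n}\cdot O(\eps^{-1/2})
\;=\; O(\eps^{-1/2}\sqrt{n}),
\]
as required. There is no real obstacle here beyond carefully tracking the constant from the aperture bound; all the geometric heavy lifting was already done in Lemmas~\ref{lem:intersect}--\ref{lem:volume}, so this proof is just the clean area-to-length conversion via Cauchy--Schwarz.
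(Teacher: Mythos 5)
Your proposal is correct and follows essentially the same route as the paper: lower-bound each sector's area by $\Omega(\sqrt{\eps}\,\|pq_i(p)\|^2)$, cap the total area via Lemma~\ref{lem:volume}, and convert the resulting $\ell^2$ bound to a length bound using $|F_i|\leq n$. The paper phrases the last step as Jensen's inequality applied to the average edge weight $w_i=\|F_i\|/|F_i|$, which is exactly your Cauchy--Schwarz step in different clothing.
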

\begin{proof}
For every sector $R_i(p)\in \bigcup_{j\in \N}\mathcal{R}_{i,j}$, we have
\[
\area(\widehat{R}_i(p))
\geq \Omega(\area(R_i(p)))
\geq \Omega(\|p q_i(p)\|^2 \sqrt{\eps}).
\]
In particular, summation over all edges $e\in F_i$ and Jensen's inequality gives
\[
\sum_{R\in \mathcal{R}_i} \area(R)
\geq \Omega\left(\sqrt{\eps}\, \sum_{e\in F_i} \|e\|^2\right)
\geq \Omega\left(\sqrt{\eps}\cdot |F_i|\left(\frac{1}{|F_i|}\sum_{e\in F_i} \|e\|\right)^2\right)
\geq \Omega\left(\sqrt{\eps}\cdot |F_i| \cdot w_i^2\right),
\]
where $w_i=\frac{1}{|F_i|}\sum_{e\in F_i} \|e\| = \|F_i\|/|F_i|$ is the average weight of an edge in $F_i$.
Combined with Lemma~\ref{lem:volume}, we obtain
\begin{align}
\sqrt{\eps}\cdot |F_i|\cdot w_i^2 &\leq O(\eps^{-1/2})\label{eq:key}\\
w_i^2 &\leq O\left(\frac{1}{\eps\, |F_i|}\right)\nonumber \\
w_i &\leq O\left(\frac{1}{\sqrt{\eps\, |F_i|}}\right). \nonumber
\end{align}
\later{
\begin{equation}\label{eq:key}
\sqrt{\eps}\cdot |F_i|\cdot w_i^2 \leq O(\eps^{-1/2})
\hspace{3mm}\Rightarrow\hspace{3mm}
w_i^2 \leq O\left(\frac{1}{\eps\, |F_i|}\right)
\hspace{3mm}\Rightarrow\hspace{3mm}
w_i \leq O\left(\frac{1}{\sqrt{\eps\, |F_i|}}\right).
\end{equation}
}
Finally, $\|F_i\|\leq |F_i|\cdot w_i
\leq O(|F_i| /\sqrt{\eps\, |F_i|})
= O(\eps^{-1/2}\sqrt{|F_i|})
\leq  O(\eps^{-1/2}\sqrt{n})$, as required.
\end{proof}

\begin{theorem}\label{thm:UBsqaure}
For every set of $n$ points in $[0,1]^2$ and every $\eps\in (0,\frac19)$, Algorithm \textsc{SparseYao} returns a Euclidean $(1+\eps)$-spanner of weight $O(\eps^{-3/2}\,\sqrt{n})$.
\end{theorem}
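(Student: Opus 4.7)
The plan is to simply assemble the tools already assembled in this section. Theorem~\ref{thm:twostage} already guarantees that \textsc{SparseYao}$(S,\eps)$ returns a $(1+\eps)$-spanner, so only the weight bound requires proof. The weight analysis factors into two independent reductions, each contributing a factor of $\eps^{-1/2}$, applied to a base bound of $\sqrt{n}$.

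First I would invoke Lemma~\ref{lem:factor}, which bounds the total weight by $\|G\|=O(\eps^{-1/2})\cdot\|F\|$, where $F$ collects the edges of $G$ going from each vertex $p$ to its closest neighbor $q_i(p)$ in each of the $k=\Theta(\eps^{-1/2})$ large cones. This reduction absorbs the extra edges added in step~1c of the algorithm, using the fact that any such small-cone edge $pq_{i',j}$ satisfies $\|pq_{i',j}\|<2\|pq_i\|$ (Lemma~\ref{lem:deltoid}), so each of the $O(k)=O(\eps^{-1/2})$ small-cone edges added at point $p$ is charged by a constant factor to $\|pq_i\|$.

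Next, since $F=\bigcup_{i=1}^{k}F_i$ and Lemma~\ref{lem:Fi} gives $\|F_i\|=O(\eps^{-1/2}\sqrt{n})$ for each $i$, summation over the $k=\Theta(\eps^{-1/2})$ cone orientations yields
\[
\|F\|=\sum_{i=1}^{k}\|F_i\|\leq O(\eps^{-1/2})\cdot O(\eps^{-1/2}\sqrt{n}) = O(\eps^{-1}\sqrt{n}).
\]
Plugging this into Lemma~\ref{lem:factor} gives $\|G\|\leq O(\eps^{-1/2})\cdot O(\eps^{-1}\sqrt{n}) = O(\eps^{-3/2}\sqrt{n})$, as required.

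There is essentially no obstacle at this stage; the entire theorem is bookkeeping once Lemmas~\ref{lem:factor} and~\ref{lem:Fi} are in hand. The substantive work has already been done in the charging scheme for Lemma~\ref{lem:Fi}, where the $\sqrt{n}$ factor arises from Jensen's inequality applied to the per-direction area bound $\sum_{R\in \mathcal{R}_i}\area(R)=O(\eps^{-1/2})$ coming from Lemma~\ref{lem:volume}.
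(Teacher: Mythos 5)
Your proposal is correct and follows exactly the same route as the paper: establish stretch via Theorem~\ref{thm:twostage}, bound $\|F\|=\sum_i\|F_i\|=O(\eps^{-1}\sqrt{n})$ via Lemma~\ref{lem:Fi}, and then apply Lemma~\ref{lem:factor} to get $\|G\|=O(\eps^{-3/2}\sqrt{n})$. The paper's proof is the same two-line bookkeeping argument, so there is nothing to add.
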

\begin{proof}
Let $G=\textsc{SparseYao}(S,\eps)$, and define $F\subset E(G)$ and $F_1,\ldots, F_k$ as above.
By Lemma~\ref{lem:Fi}, $\|F\|=\sum_{i=1}^k \|F_i\|= O(k\,\eps^{-1/2}\,\sqrt{n}) =O(\eps^{-1}\,\sqrt{n})$.
By Lemma~\ref{lem:factor}, $\|G\|\leq O(\eps^{-1/2})\cdot (\|F\|+\sqrt{2})\leq O(\eps^{-3/2}\,\sqrt{n})$,
as claimed.
\end{proof}

\section{Generalization to Higher Dimensions}
\label{sec:d-space}

\paragraph{Upper Bound.}
For every constant $d\geq 2$, Algorithm \textsc{SparseYao} and its analysis generalize to $\R^d$.
We sketch the necessary adjustments for a point set $S\subset [0,1]^d$.
Recall that for $d=2$, we partitioned the plane into $k=\Theta(\eps^{-1/2})$ cones $C_1,\ldots , C_k$, of aperture $\frac18\,\sqrt{\eps}$.
In $d$-dimensions, we can \emph{cover} $\R^d$ with $k=\Theta(\eps^{(1-d)/2})$ cones of aperture $\frac18\,\sqrt{\eps}$ such that every point in $\R^d$ is covered by at most $O(d)$ cones. With these cones, Algorithm \textsc{SparseYao} and its stretch analysis goes through almost verbatim. We point out a few dimension-dependent details: In step~1c of Algorithm~\textsc{SparseYao}, instead of three consecutive cones  $\bigcup_{i'=i-1}^{i+1}C_{i'}(p')$, we need $O(2^d)$ cones that cover a $\frac{\sqrt{\eps}}{4}$-neighborhood of $C_i(p)$, in spherical distance with respect to $\mathbb{S}^{d-1}$. For the stretch analysis, the key technical lemmas directly generalize to $d$-space: In Lemmas~\ref{lem:technical1}---\ref{lem:tilde}, the points $a$, $b$, $p$, and $q$ are coplanar in $\R^d$; and in Lemma~\ref{lem:technical3} and~\ref{lem:tilde}, the regions $\widehat{A}(p,q)$ and $\widehat{B}(p,q)$ are small neighborhoods of $A(p,q)$ and $B(p,q)$, resp., independent of dimension.

For the weight analysis of the spanner $G=\textsc{SparseYao}(S,\eps)$  also generalizes.
Standard volume argument shows that every cone of aperture $\frac18\,\sqrt{\eps}$ is covered by $O(\eps^{(1-d)/2})$ cones of aperture $\Theta(\eps^{-1})$. Thus the direct generalization of Lemma~\ref{lem:factor} yields $\|G\|=O_d(\eps^{(1-d)/2})\cdot \|F\|$, where $F$ is partitioned into $k=\Theta_d(\eps^{(1-d)/2})$ subsets $F=\bigcup_{i=1}^k F_i$.

In the generalization of Lemma~\ref{lem:density}, every generic point $g\in [0,1]^d$ is contained in $O(\eps^{(1-d)/2})$ regions $R_i(p)$: In the proof, however, the $y$-monotone curve $\gamma$ is replaced by a $(d-1)$-dimensional surface/terrain.
In the weight analysis for $\|F_i\|$, we need to charge the weight of each edge $e\in F_i$ to an empty sector $\widehat{R}_i$, which is the intersection of a cone $C_i$ of aperture $\frac18\, \sqrt{\eps}$ and a ball of radius $\|e\|$. The volume of such a region is $\Theta_d(\|e\|^d\cdot \eps^{(d-1)/2})$. The \emph{core} of the sectors can be defined analogously, and Lemmas~\ref{lem:disjoint}--\ref{lem:volume} extend to $d$-space. Finally, in the proof of Lemma~\ref{lem:Fi}, Jensen's inequality is used for the function $x\rightarrow x^d$. In particular, for the average weight of an edge in $F_i$,  $w_i=\|F_i\|/ |F_i|$, inequality~\eqref{eq:key} becomes
\begin{align}
\eps^{(d-1)/2}\cdot |F_i|\cdot w_i^d &\leq O_d(\eps^{(1-d)/2})\label{eq:key+d}\\
w_i^d &\leq O_d\left(\frac{1}{\eps^{d-1} |F_i|}\right)\nonumber \\
w_i &\leq O_d\left(\frac{1}{\eps^{1-1/d}\, |F_i|^{1/d}}\right), \nonumber
\end{align}
and $\|F_i\| =|F_i|\cdot w_i \leq O_d( \eps^{1/d-1}\,|F_i|^{1-1/d}) \leq  O_d( \eps^{1/d-1} n^{1-1/d})$. Overall,
\[
\|G\|
\leq O\left(\eps^{(1-d)/2}\right)\cdot \sum_i \|F_i\|
\leq O_d\left(\eps^{(1-d)/2}\cdot \eps^{(1-d)/2}\cdot \eps^{1/d-1} n^{1-1/d}\right)
\leq O_d\left(\eps^{-d+1/d} n^{1-1/d}\right).
\]

\paragraph{Lower Bound.}
The empty ellipse condition and the lower bound construction readily generalize to every dimension $d\geq 2$. Let $S_0$ be a set of $2m$ points, where $m=\lfloor (\eps/d)^{1-d}\rfloor$, with $m$ points arranged in a grid on two opposite faces of a unit cube $[0,1]^d$.
By the empty ellipsoid condition, every $(1+\eps)$-spanner for $S_0$ contains a complete bipartite graph $K_{m,m}$, of weight $\Omega_d(\eps^{2(1-d)})$.
if we arrange $\Theta_d(\eps^{d-1}n)$ translated copies of $S_0$ in a $(\eps^{d-1}n)^{1/d}\times \ldots \times (\eps^{d-1}n)^{1/d}$ grid, we obtain a set $S$ of $\Theta(n)$ points, and a lower bound of $\Omega_d(\eps^{1-d} n)$. Scaling by a factor of $(\eps^{d-1}n)^{1/d}$ yields a set of $\Theta(n)$ points in $[0,1]^d$ for which any $(1+\eps)$-spanner has weight $\Omega_d(\eps^{-d+1/d}\, n^{1-1/d})$.

\section{Spanners for the Integer Lattice}
\label{sec:grid}

We briefly review known results from analytic number theory in Section~\ref{ssec:Farey}, and derive upper bounds on the minimum weight of a $(1+\eps)$-spanner for the $n\times n$ grid: First we analyze the weight of  Yao-graphs (Section~\ref{ssec:gridUB}), and then refine the analysis for Sparse Yao-graphs in (Section~\ref{ssec:next}).

\subsection{Preliminaries: Farey Sequences}
\label{ssec:Farey}

Two points in the integer lattice $p,q\in \Z^2$ are \emph{visible} if the line segment $pq$ does not pass through any lattice point. An integer point $(i,j)\in \Z^2$ is visible from the origin $(0,0)$ if $i$ and $j$ are relatively prime, that is, $\mathrm{gcd}(i,j)=1$. The \emph{slope} of a segment between $(0,0)$ and $(i,j)$ is $j/i$. For every $n\in \N$, the \emph{Farey set of order $n$},
    \[ F_n=\left\{\frac{a}{b} : 0\leq a\leq b\leq n\right\},  \]
is the set of slopes of the lines spanned by the origin and lattice points $(b,a)\in [0,n]^2$ with $a\leq b$.
The \emph{Farey sequence} is the sequence of elements in $F_n$ in increasing order.
Note that $F_n\subset [0,1]$. Farey sets and sequences have fascinating properties,
and the distribution of $F_n$, as $n\rightarrow \infty$ is not fully understood.
It is known that
\[
|F_n|=1+\sum_{1\leq i\leq n} \varphi(i) = \frac{3n^2}{\pi^2}+O(n \log n),
\]
where $\varphi(i)$  is Euler's totient function (i.e., $\varphi(i)$ is the number of nonnegative integers $j$ with $1\leq j\leq i$ and $\mathrm{gcd}(i,j)=1$). Furthermore, if $\frac{p_1}{q_1}$ and $\frac{p_2}{q_2}$ are consecutive terms of the Farey sequence in reduced form (i.e.,  $\mathrm{gcd}(p_1,q_1)=1$ and $\mathrm{gcd}(p_2,q_2)=1$), then $|p_1q_2-p_2q_1|=1$ \cite{HW79}. The Farey sequence is uniformly distributed on $[0,1]$ in the sense that for any fixed subinterval $[\alpha,\beta]\subset [0,1]$, the asymptotic frequency of the Farey set in $[\alpha,\beta]$ is known converge as $n$ tends to infinity~\cite{Dress99}:
    \[ \frac{|F_n\cap [\alpha,\beta]|}{|F_n|} = \beta-\alpha +o(1). \]
The error term is bounded by $O(n^{-1+\varepsilon})$, for some $\eps>0$. However, determining the rate of convergence is known to be equivalent to the Riemann hypothesis~\cite{Franel24,Landau24}; see also~\cite{Ledoan18} and references therein.

The key result we use is a bound on the average distance to a Farey set $F_n$.
For every $x\in [0,1]$, let
    \[\rho_n(x)= \min_{\frac{p}{q}\in F_n} \left| \frac{p}{q} - x\right| \]
denote the distance between $x$ and the Farey set $F_n$.
Kargaev and Zhigljavsky~\cite{KZ96} proved that
    \begin{equation}\label{eq:KZ}
    \int_0^1 \rho_n(x) \, dx  = \frac{3}{\pi^2} \, \frac{\ln n}{n^2} + O\left(\frac{1}{n^2}\right), \hspace{5mm}\textrm{as} \hspace{5mm} n\rightarrow \infty.
    \end{equation}

\subsection{The Weight of Yao-Graphs for the Grid}
\label{ssec:gridUB}

\begin{lemma}\label{lem:num}
For a positive integer $k\in \N$, consider the subdivision of the unit interval $[0,1]$ into $k$ subintervals,
$[0,1]=\bigcup_{i=1}^k [\frac{i-1}{k},\frac{i}{k}]$. For every $i=1,\ldots , k$, let $q_i$ be the
smallest positive integer such that $\frac{i-1}{k}\leq \frac{p_i}{q_i}\leq \frac{i}{k}$ for some integer $p_i$.
Then
\begin{enumerate}
\item[{\rm (i)}] $\sum_{i=1}^k q_i =O(k^{3/2}\log^{1/2} k)$, and
\item[{\rm (ii)}] $\sum_{i=1}^k q_i^3 =O(k^3\log k)$.
\end{enumerate}
\end{lemma}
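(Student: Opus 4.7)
The plan is to couple $q_i$ with the Farey set via the equivalence
\[
q_i > n \quad\Longleftrightarrow\quad [(i-1)/k,\, i/k] \cap F_n = \emptyset.
\]
Combined with the Kargaev--Zhigljavsky integral bound~\eqref{eq:KZ}, this equivalence will yield a density estimate for the ``bad'' indices $i$ (those with $q_i > n$), and then both (i) and (ii) will follow by layer-cake summation with appropriately tuned cutoffs.

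First I will prove the density estimate
\[
|\{i \in \{1,\ldots,k\} : q_i > n\}| = O\bigl(k^2 \log n / n^2\bigr) \qquad \text{for } n \geq 2.
\]
If $q_i > n$, then every $p/q \in F_n$ lies outside $[(i-1)/k,\, i/k]$, so for each $x$ in this subinterval the closest Farey fraction is at least as far from $x$ as the nearer endpoint of the subinterval, giving $\rho_n(x) \geq \min\bigl(x - (i-1)/k,\; i/k - x\bigr)$. Integrating this triangular lower bound yields $\int_{(i-1)/k}^{i/k} \rho_n(x)\, \dd x \geq 1/(4k^2)$. Summing over all bad indices and invoking~\eqref{eq:KZ} gives the claim. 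I will also use the trivial bound $q_i \leq k$ (since $i/k$ lies in the interval and reduces to a fraction whose denominator divides $k$), so in the layer-cake sums it suffices to consider $0 \leq n \leq k-1$.

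For (i), I will write $\sum_{i=1}^k q_i = \sum_{n \geq 0} |\{i : q_i > n\}|$ and split at $N := \ceil{\sqrt{k \log k}}$: the first $N$ terms contribute at most $kN$ in total, while the remaining terms contribute $\sum_{n \geq N} O(k^2 \log n / n^2) = O(k^2 \log k / N)$. Both pieces are $O(k^{3/2} \log^{1/2} k)$ for this choice of $N$.

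For (ii), I will write $\sum_{i=1}^k q_i^3 = \sum_{n \geq 0} \bigl((n+1)^3 - n^3\bigr)\,|\{i : q_i > n\}|$ and split at $N := \ceil{k^{2/3} (\log k)^{1/3}}$: the low-$n$ contribution is at most $k \sum_{n < N} O(n^2) = O(k N^3) = O(k^3 \log k)$, and the high-$n$ contribution is $\sum_{N \leq n \leq k-1} O(n^2) \cdot O(k^2 \log n / n^2) = O(k^2 \log k) \cdot O(k) = O(k^3 \log k)$. The main technical ingredient is the density estimate in the second paragraph; once it is in hand, the layer-cake calculations are routine, and the cutoffs $N$ are fixed by balancing the two regimes in each sum.
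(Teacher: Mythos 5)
Your proposal is correct and follows essentially the same route as the paper: the key ingredient in both is the density estimate $|\{i: q_i>n\}|=O(k^2\log n/n^2)$, obtained by integrating the triangular lower bound on $\rho_n$ over each subinterval disjoint from $F_n$ and invoking the Kargaev--Zhigljavsky bound, together with the cutoff $q_i\le k$. The only difference is organizational --- you sum via the layer-cake identity with a single threshold where the paper uses a dyadic decomposition --- and your calculation at both cutoffs checks out.
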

\begin{proof}
Let $[\alpha,\beta]\subset [0,1]$ be an interval of length $\beta-\alpha=\frac{1}{k}$.
If $[\alpha,\beta]$ contains a point $\frac{p_i}{q_i}\in F_n$, then $q_i\leq n$.
Otherwise, $[\alpha,\beta]$ is disjoint from the Farey set $F_n$, and
then $\rho_n(x)\geq \min\{|x-\alpha|,|x-\beta|\}$ for all $x\in [\alpha,\beta]$.
In the latter case,
\begin{equation}\label{eq:blip}
    \int_{[\alpha,\beta]} \rho_n(x) \, dx
  \geq \int_{[\alpha,\beta]}\min\{|x-\alpha|,|x-\beta|\} \, dx
  = \frac{1}{4\cdot |\beta-\alpha|^2}
  = \frac{1}{4k^2}.
\end{equation}

For every positive integer $n$, let $a(n)$ be the number of intervals in $\{[\frac{i-1}{k},\frac{i}{k}]: i=1,\ldots , k\}$ that are disjoint from $F_n$. Note that $a(n)=0$ for $n\geq k$, since all $k$ (closed) intervals contain a rational of the form $\frac{p}{k}$. In particular, we have $q_i\leq k$ for all $i=1,\ldots , k$.

The combination of \eqref{eq:KZ} and \eqref{eq:blip} yields $a(n)\leq O(k^2 \log n / n^2)$.
If we set $n=\sqrt{ck\log k}$ for a sufficiently large constant $c\geq 2$, then
    \[ a \left(\sqrt{ck\log k}\right)
    \leq O \left( \frac{k^2 \log(\sqrt{ck \log k})}{ck\log k} \right)
    = O \left( \frac{k}{2} \cdot  \frac{\log(ck\log k)}{c\log k} \right)
    = O \left( \frac{k}{2} \cdot  \frac{\log(c) + 2\log(k)}{c\log k} \right)
    \leq \frac{k}{2}.
    \]
That is, at most $\frac{k}{2}$ of $k$ intervals are disjoint from $F_n$. For the remaining at least $\frac{k}{2}$ intervals, we have $q_i\leq n\leq O(\sqrt{k\log k})$, and the sum of these $q_i$ terms is $O(k^{3/2}\log^{1/2}k)$. It remains to bound the sum of $q_i$ terms in the intervals disjoint form $F_n$. We use a standard diadic partition. Let $A=\sqrt{ck\log k}$. Then
\begin{align*}
\sum_{i=1}^k q_i
  &=O(k^{3/2}\log^{1/2} k)+\sum_{j=\lfloor \log A\rfloor}^{\lfloor \log k\rfloor} \big(a(2^j) - a(2^{j-1})\big)2^j \\
  &=O(k^{3/2}\log^{1/2} k)+\sum_{j=\lfloor \log A\rfloor}^{\lfloor \log k\rfloor} a(2^j) \big(2^j-2^{j-1}\big)\\
  &\leq O(k^{3/2}\log^{1/2} k)+\sum_{j=\lfloor \log A\rfloor}^{\lfloor \log k\rfloor} O\left(\frac{k^2 j}{2^{2j}}\right) 2^{j-1}\\
  &=O\left(k^{3/2}\log^{1/2} k+ k^2 \sum_{j=\lfloor \log A\rfloor}^{\lfloor \log k\rfloor} \frac{j}{2^j}\right)\\
  &=O\left(k^{3/2}\log^{1/2} k + k^2 \frac{\log A}{A} \right)\\
  &=O\left(k^{3/2}\log^{1/2} k + k^2 \frac{\log k}{\sqrt{k \log k}} \right)\\
  &= O(k^{3/2}\log^{1/2} k).
\end{align*}
The completes the proof of (i). It remains to prove (ii).

Recall that we set $n=\sqrt{ck\log k}$ for a sufficiently large constant $c\geq 2$, then $a(n)\leq \frac{k}{2}$. That is, at most $\frac{k}{2}$ of $k$ intervals are disjoint from $F_n$. For the remaining at least $\frac{k}{2}$ intervals, we have $q_i\leq n\leq O(\sqrt{k\log k})$, and the sum of $q_i^3$ terms for these indices $i$ is $O(k^{5/2}\log^{3/2}k)$. It remains to bound the sum of $q_i$ terms in the intervals disjoint form $F_n$. Let $A=\sqrt{ck\log k}$. Then
\begin{align*}
\sum_{i=1}^k q_i^3
  &=O(k^{5/2}\log^{3/2} k)+\sum_{j=\lfloor \log A\rfloor}^{\lfloor \log k\rfloor} \big(a(2^j) - a(2^{j-1})\big) 2^{3j}\\
  &=O(k^{5/2}\log^{3/2} k)+\sum_{j=\lfloor \log A\rfloor}^{\lfloor \log k\rfloor} a(2^j) \big(2^{3j}-2^{3(j-1)}\big)\\
  &\leq O(k^{5/2}\log^{3/2} k)+\sum_{j=\lfloor \log A\rfloor}^{\lfloor \log k\rfloor} O\left(\frac{k^2 j}{2^{2j}}\right) 2^{3j}\\
  &=O\left(k^{5/2}\log^{3/2} k+ k^2 \sum_{j=\lfloor \log A\rfloor}^{\lfloor \log k\rfloor} j\cdot 2^{j}\right)\\
  &=O\left(k^{5/2}\log^{3/2} k + k^2 \cdot k\log k \right)\\
  &= O(k^3\log k),
\end{align*}
as claimed.
\end{proof}

\begin{lemma}\label{lem:grid}
For a positive integer $k\in \N$, consider the subdivision of the plane into $k$ cones, each with apex at the origin $o$, and aperture $2\pi/k$. For each $i=1,\ldots , k$, let $p_i\in \Z^2\setminus \{0\}$ be a point in the $i$th cone that lies closest to the origin. Then $\sum_{i=1}^k \|op_i\| =O(k^{3/2}\log^{1/2} k)$.
\end{lemma}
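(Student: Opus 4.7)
The plan is to reduce Lemma~\ref{lem:grid} to Lemma~\ref{lem:num}(i) by exploiting the correspondence between lattice points in a cone and Farey fractions in the cone's slope range. By the eight-fold dihedral symmetry of $\Z^2$ (reflections in the coordinate axes and in the lines $y=\pm x$), each of the eight octants contains $\Theta(k)$ cones whose total contribution is of the same order; so it suffices to bound the sum over cones contained in the first octant $\{(x,y):0\leq y\leq x\}$, up to $O(1)$ boundary cones handled separately (each of which contains a lattice point of norm $O(1)$ on an axis or on the diagonal).

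For each first-octant cone $C_i$ with angular range $[\theta_{i-1},\theta_i]\subseteq[0,\pi/4]$, the set of slopes $y/x$ achieved by points $(x,y)\in C_i$ is the interval $[\tan\theta_{i-1},\tan\theta_i]$. Since $\tan$ is bi-Lipschitz on $[0,\pi/4]$, the $\Theta(k)$ first-octant cones induce a partition of $[0,1]$ into $\Theta(k)$ intervals of width $\Theta(1/k)$. I would then show that $\|op_i\|=\Theta(q_i^*)$, where $q_i^*$ is the smallest denominator of a reduced fraction lying in the slope interval of $C_i$. Indeed, any lattice point $(b,a)\in C_i$ with $\gcd(a,b)=d$ yields the reduced fraction $(a/d)/(b/d)$ in the slope interval, so $b/d\geq q_i^*$ and $\|o(b,a)\|\geq b\geq q_i^*$; conversely, if $p_i^*/q_i^*$ realizes the minimum, then $(q_i^*,p_i^*)\in C_i$ is a lattice point of norm at most $\sqrt{2}\,q_i^*$.

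The final step is to apply (a mild generalization of) Lemma~\ref{lem:num}(i) to the partition of $[0,1]$ into $\Theta(k)$ slope intervals. The proof of Lemma~\ref{lem:num} uses only that the partition consists of $O(k)$ intervals each of width $\Omega(1/k)$: this is exactly what is needed to deduce, from the Kargaev--Zhigljavsky estimate~\eqref{eq:KZ}, that the number of intervals disjoint from $F_n$ is $O(k^2\log n/n^2)$, after which the dyadic summation goes through unchanged. We obtain $\sum_i q_i^*=O(k^{3/2}\log^{1/2}k)$, and hence $\sum_{i=1}^k\|op_i\|=O(k^{3/2}\log^{1/2}k)$, as claimed.

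The main obstacle is of a bookkeeping nature: reconciling the fact that equal angular cones do not correspond to equal slope intervals, and that first-octant slopes omit the vertical direction, so one must argue that Lemma~\ref{lem:num}'s proof is robust to constant factors in the widths and to a bounded number of boundary cones. Both issues dissolve since~\eqref{eq:KZ} is integrated over all of $[0,1]$ and the pointwise lower bound $\int_{I}\rho_n(x)\,dx\geq c\,|I|^{2}$ on each interval $I$ disjoint from $F_n$ loses at most a constant factor when $|I|$ varies within $\Theta(1/k)$.
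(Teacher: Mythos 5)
Your proposal is correct and follows essentially the same route as the paper: eight-fold symmetry of $\Z^2$, the bi-Lipschitz property of $\tan$ on $[0,\pi/4]$ to convert cones into slope intervals of width $\Theta(1/k)$, the identification of the closest lattice point in a cone with the smallest Farey denominator in its slope interval, and an application of Lemma~\ref{lem:num}(i). The only (harmless) difference is that you apply the argument of Lemma~\ref{lem:num} directly to the non-uniform partition induced by the cones, whereas the paper instead observes that each cone's slope interval contains a full subinterval of the uniform $1/k$-partition and invokes Lemma~\ref{lem:num} as stated; your justification for why the proof tolerates widths that are only $\Theta(1/k)$ is sound.
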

\begin{proof}
The function $\tan(x)$ is a monotone increasing bijection from the interval $[0,\frac{\pi}{4}]$ to $[0,1]$. Its derivative is bounded by $1\leq \tan'(x)\leq 2$. Consequently, it distorts the length of any interval by a factor of at most 2. That is, it maps any interval $[\alpha,\beta]\subset [0,\frac{\pi}{4}]$ to an interval $[\tan(\alpha), \tan(\beta)]\subset [0,1]$ with $\beta-\alpha\leq \tan(\beta)-\tan(\alpha)\leq 2(\beta-\alpha)$.

Recall that Algorithm \textsc{SparseYao} operates on $k=\Theta(\eps^{-1/2})$ cones $C_1,\ldots ,C_k$.
The two coordinate axes and the lines $y=\pm x$ subdivide the plane into eight cones with aperture $\pi/4$.
These four lines contain a set $R$ of eight rays emanating from the origin. A cone $C_i$ that contains any ray in $R$ necessarily contains a lattice point $p_i$ with $\|op_i\|\leq \sqrt{2}$. Consider the cones $C_i$ between two consecutive rays in $R$; there are $O(k)$ such cones. Rotate these cones by a multiple of $\frac{\pi}{4}$ so that they correspond to slopes in the interval $[0,1]$. Due to the rotational symmetry of $\Z^2$, the rotation is an isometry on $\Z^2$.
Each cone $C_i$ is an interval of angles $[\alpha_i,\beta_i]\subset [0,\frac{\pi}{4}]$ with $\beta_i-\alpha_i=\frac{2\pi}{k}$. As noted above, this corresponds to an interval of slopes $[\tan(\alpha_i),\tan(\beta_i)]\subset [0,1]$ with
$\frac{2\pi}{k}\leq \tan \beta_i-\tan\alpha_i\leq \frac{4\pi}{k}$.

Subdivide the interval $[0,1]$ into $k$ subintervals of length $\frac{1}{k}$.
Each interval $[\tan(\alpha_i),\tan(\beta_i)]\subset [0,1]$
contains at least one intervals of this subdivision.
Let $a_i$ be the smallest positive integer such that there exists a
rational $\frac{b_i}{a_i}$ in the $i$th interval.
Then the lattice point $(a_i,b_i)$ lies in $C_i$.
Since $\frac{b_i}{a_i}\leq 1$, then $b_i\leq a_i$,
and so the distance between $(a_i,b_i)$ and the origin is at most $\sqrt{2}\cdot a_i$.
Let $p_i$ be a point in $C_i$ closest to the origin,
hence $\|op_i\|\leq \sqrt{2}\cdot a_i$.
By Lemma~\ref{lem:num}(i), we conclude that
$\sum_{i} \|op_i\| \leq \sqrt{2} \sum_i a_i = O(k^{3/2}\log^{1/2} k)$,
where the summation is over cones bounded by two consecutive rays in $R$.
Summation over all octants readily yields
$\sum_{i=1}^k \|op_i\| \leq O(k^{3/2}\log^{1/2} k)$.
\end{proof}

\begin{corollary}\label{cor:Yao}
For positive integers $k$ and $n$, let $G_{k,n}$ be the Yao-graph with $k$ cones on the $n^2$ points
in the $n\times n$ section of the integer lattice.
Then $\|G_{k,n}\| \leq  O(n^2k^{3/2}\log^{1/2} k)$.
\end{corollary}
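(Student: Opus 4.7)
The plan is to bound $\|G_{k,n}\|$ by a cone-by-cone accounting of the Yao-edges. For each cone $C_i$, let $v_i\in C_i\cap(\Z^2\setminus\{0\})$ be a lattice vector that lies closest to the origin, and write $r_i=\|v_i\|$; Lemma~\ref{lem:grid} gives $\sum_{i=1}^k r_i=O(k^{3/2}\log^{1/2}k)$. Let $W_i=\sum_{p\in S}\|p\,q_i(p)\|$, where $q_i(p)$ denotes the closest point of $S$ in the cone $C_i(p)$ (with $\|p\,q_i(p)\|=0$ if the cone is empty), so that $\|G_{k,n}\|\leq\sum_{i=1}^k W_i$. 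It therefore suffices to show $W_i=O(n^2 r_i)$ for every $i$; summing and invoking Lemma~\ref{lem:grid} then yields the claim.

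Fix $i$ and partition $S$ into the \emph{interior} set $S_i^{\mathrm{int}}=\{p\in S:p+v_i\in S\}$ and the \emph{boundary} set $S_i^{\mathrm{bd}}=S\setminus S_i^{\mathrm{int}}$. For $p\in S_i^{\mathrm{int}}$, translation invariance of $\Z^2$ makes $r_i$ the minimum possible distance from $p$ to any lattice point in $C_i(p)$; since $p+v_i$ achieves this distance and lies in $S$, we get $\|p\,q_i(p)\|=r_i$. For $p\in S_i^{\mathrm{bd}}$, I use the trivial bound $\|p\,q_i(p)\|\leq\sqrt{2}(n-1)$. Writing $v_i=(a,b)$ with $|a|,|b|\leq r_i$, the condition $p+v_i\notin\{0,\dots,n-1\}^2$ confines $p$ to a union of at most two axis-parallel strips of widths $|a|$ and $|b|$, so $|S_i^{\mathrm{bd}}|\leq n|a|+n|b|=O(n\,r_i)$. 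Combining,
\[
W_i\;\leq\;|S_i^{\mathrm{int}}|\cdot r_i+|S_i^{\mathrm{bd}}|\cdot\sqrt{2}(n-1)\;\leq\; n^2 r_i+O(n\,r_i)\cdot O(n)\;=\;O(n^2 r_i),
\]
and summing over $i$ gives $\|G_{k,n}\|=O(n^2)\cdot\sum_{i=1}^k r_i=O(n^2 k^{3/2}\log^{1/2}k)$.

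The main obstacle is the boundary contribution: an individual boundary Yao-edge can be as long as $\Theta(n)$, far longer than the interior edge length $r_i$, so it is not a priori clear that such edges do not dominate $W_i$. The argument works because the strip of points whose ``natural'' translate $p+v_i$ leaves the grid has width only $O(r_i)$, yielding a boundary population $O(n r_i)$ that is just small enough to make the boundary total $O(n^2 r_i)$ match the interior total $n^2 r_i$. This rests on the translation invariance of $\Z^2$, which is what makes $r_i$ the precise Yao-edge length everywhere except near the boundary.
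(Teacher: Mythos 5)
Your proof is correct, and its core is the same as the paper's: everything reduces to Lemma~\ref{lem:grid}, i.e.\ to the bound $\sum_{i=1}^k r_i = O(k^{3/2}\log^{1/2}k)$ on the total length of the closest lattice vectors in the $k$ cones, multiplied by the $n^2$ vertices. The difference is in the bookkeeping. The paper sums per vertex: it says that each star $S_p$ has weight $O(k^{3/2}\log^{1/2}k)$ ``by Lemma~\ref{lem:grid}'' and multiplies by $n^2$. As written, that step silently identifies the closest point of the \emph{finite grid} $S$ in a cone $C_i(p)$ with the closest point of $\Z^2$ in that cone; these differ for vertices near the boundary, where the nearest lattice point $p+v_i$ may fall outside $S$ and the actual Yao edge can be as long as $\Theta(n)$. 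Your per-cone decomposition into $S_i^{\mathrm{int}}$ and $S_i^{\mathrm{bd}}$ addresses exactly this: translation invariance gives $\|pq_i(p)\|=r_i$ on the interior set, and the observation that the boundary set for cone $i$ lives in strips of total width $O(r_i)$, hence has size $O(nr_i)$, makes the crude $O(n)$ bound on boundary edges harmless. So you buy a fully rigorous treatment of the boundary at the cost of a slightly longer argument; the paper buys brevity by leaving that (true but nontrivial) boundary accounting implicit. One could also note that your interior estimate only needs $\|pq_i(p)\|\le r_i$, so the equality claim, while true, is not essential.
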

\begin{proof}
Recall the Yao-graph is defined as a union of $n$ stars, one for each vertex.
For each vertex $p$, a star $S_p$ is obtained by subdividing the plane into
$k$ cones with apex $p$ and aperture $2\pi/k$, and connecting $p$
to the closes point in the cone (if such a point exists).
By Lemma~\ref{lem:grid}, $\|S_p\|\leq O(k^{3/2}\log^{1/2} k)$ for any $p\in \Z^2$.
Summation over $n^2$ vertices yields $\|G_{k,n}\| \leq  O(n^2k^{3/2}\log^{1/2} k)$.
\end{proof}

For every finite point set $S\subset \R^2$, the Yao-graph $Y_k(s)$ with $k=\Theta(\eps^{-1})$ cones per vertex
is a $(1+\eps)$-spanner. Corollary~\ref{cor:Yao} readily implies that the $n\times n$ section of the integer lattice admits a $(1+\eps)$-spanner of weight $O(\eps^{-3/2}\log^{1/2}(\eps^{-1})\cdot n^2)$.

\subsection{The Weight of Sparse Yao-Graphs for the Grid}
\label{ssec:next}

\begin{theorem}\label{thm:UBgrid}
Let $S$ be the $n\times n$ section of the integer lattice for some positive integer $n$.
Then the graph $G=$\textsc{SparseYao}$(S,\eps)$ has weight $O(\eps^{-1}\log(\eps^{-1})\cdot n^2)$.
\end{theorem}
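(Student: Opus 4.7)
The plan is to decompose $E(G)$ into the large-cone edges $F=\{p\,q_i(p)\in E(G)\}$ added in step~1b of Algorithm~\textsc{SparseYao} and the small-cone edges $E(G)\setminus F$ added in step~1c, and then bound the total weight in each part separately, finally summing over the $n^2$ vertices of the grid.

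For $\|F\|$, the set $F$ is a subgraph of the Yao-graph $Y_k(S)$ with $k=\lceil 16\pi/\sqrt{\eps}\rceil=\Theta(\eps^{-1/2})$ cones per vertex. Applying Corollary~\ref{cor:Yao} with this $k$ immediately gives
\[
  \|F\|\;\leq\;\|Y_k(S)\|\;=\;O\!\bigl(n^2\,k^{3/2}\log^{1/2}k\bigr)\;=\;O\!\bigl(n^2\,\eps^{-3/4}\log^{1/2}(\eps^{-1})\bigr),
\]
which is already well within the target bound.

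For the small-cone edges, the naive bound from Lemma~\ref{lem:factor} gives $\|E(G)\setminus F\|\leq O(\eps^{-1/2})\,\|F\|$, which exceeds the target. The improvement exploits the \emph{empty-$B$ condition}: when step~1c processes $pq_i\in F$, a candidate small-cone edge $pq_{i',j}$ is added only if the candidate endpoint $q_{i',j}(p,\tfrac{1}{3}\|pq_i\|)$ lies \emph{outside} the deltoid $B(p,q_i)$. On the integer grid, $B(p,q_i)$ has area $\Theta(\sqrt{\eps}\,\|pq_i\|^{2})$ and, by Pick's theorem, contains $\Theta(\sqrt{\eps}\,\|pq_i\|^{2})$ lattice points (once $\|pq_i\|$ exceeds a threshold of order $\eps^{-1/4}$); each such lattice point absorbs a prospective small-cone edge. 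Combining this geometric saving with the third-moment Farey bound of Lemma~\ref{lem:num}(ii),
\[
  \sum_{i=1}^{k}\|p q_i(p)\|^{3}\;=\;O(k^{3}\log k)\;=\;O\!\bigl(\eps^{-3/2}\log\eps^{-1}\bigr)\quad\text{per vertex } p,
\]
one obtains a per-vertex small-cone contribution of $O(\eps^{-1}\log\eps^{-1})$. Summation over the $n^{2}$ vertices then yields $\|E(G)\setminus F\|=O(n^{2}\eps^{-1}\log\eps^{-1})$, and adding this to $\|F\|$ proves the theorem.

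The main obstacle is the per-iteration counting: one must show rigorously that, for each $pq_i\in F$, the number of small cones $(i',j)$ with $i'\in\{i{-}1,i,i{+}1\}$ whose candidate endpoint lies outside $B(p,q_i)$ decays with $\|pq_i\|$ at the right rate. The plan is to group the edges $pq_i\in F$ dyadically by scale $\|pq_i\|\in[2^{-j-1},2^{-j})$, bound the contribution at each scale using the absorption argument above, and sum against Lemma~\ref{lem:num}(ii); the $O(\log\eps^{-1})$ factor in the final bound arises from the number of relevant dyadic scales. Assembling this with the bound on $\|F\|$ then completes the proof.
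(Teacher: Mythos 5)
Your decomposition into the large-cone edges $F$ and the small-cone edges, the use of Corollary~\ref{cor:Yao} for $\|F\|$, and the final combination with Lemma~\ref{lem:num}(ii) all match the paper's proof. The gap is precisely the step you flag as the main obstacle: the per-iteration count of small-cone edges. What is needed is the upper bound $m\leq O(\sqrt{\eps}\,\|pq_i\|^2)$ on the number $m$ of edges $pq_{i',j}$ added in the iteration that adds $pq_i$; combined with $\|pq_{i',j}\|<2\|pq_i\|$ (Lemma~\ref{lem:deltoid}) this gives a contribution of $O(\sqrt{\eps}\,\|pq_i\|^3)$ per large-cone edge, and then $\sum_{i=1}^k\|pq_i\|^3=O(k^3\log k)$ from Lemma~\ref{lem:num}(ii) finishes the per-vertex bound with no dyadic decomposition at all --- the $O(\log\eps^{-1})$ factor comes from Lemma~\ref{lem:num}(ii), not from a sum over scales. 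Note also that the quantity you want is an upper bound that \emph{grows} like $\sqrt{\eps}\,\|pq_i\|^2$ (capped at $3k$), not something that ``decays with $\|pq_i\|$''.

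Your proposed mechanism for that count is aimed at the wrong region. The added endpoints are exactly the candidates lying \emph{outside} $B(p,q_i)=W_1\cap W_2$, which is an unbounded region of infinite area; counting lattice points inside $B$ that ``absorb'' candidates bounds the number of edges \emph{not} added, which says nothing about the number that are added. The paper instead observes that every added endpoint lies in the bounded deltoid $W_1\setminus W_2$, of area $O(\sqrt{\eps}\,\|pq_i\|^2)$, orders the added endpoints angularly on one side of the ray $\overrightarrow{pq_i}$, and applies Pick's theorem to the interior-disjoint lattice triangles $\Delta(p,t_{h-1},t_h)$ spanned by consecutive endpoints: each has area at least $\frac12$, so $m=O(\area(W_1\setminus W_2))=O(\sqrt{\eps}\,\|pq_i\|^2)$. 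You would need to supply this (or an equivalent) argument to close the proof.
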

\begin{proof}
The edges $p q_i(p)$ in Algorithm \textsc{SparseYao} form a Yao-graph $Y_k(S)$ with $k=\Theta(\eps^{-1/2})$ cones per vertex. By Corollary~\ref{cor:Yao}, we have $\|Y_k(S)\|\leq O(\eps^{-3/4}\log^{1/2}(\eps^{-1})\cdot n^2)$.

Algorithm \textsc{SparseYao} refines each cone $C_i(p)$ of apex $p$ and aperture $2\pi/k$ into $k$ cones $C_{i,j}(p)$, and adds some of the edges between $p$ and a closest point $q_{i,j}(p)\in C_{i,j}(p)$.
It remains to bound the weight of the edges $pq_{i,j}$ added to the spanner. Fix $p\in S$ and $i\in \{1,\ldots , k\}$. Suppose that the algorithm adds edges from $p$ to $m$ points of the form $q_{i',j}$ for $i'\in \{i-1,i,i+1\}$ and $j\in \{1,\ldots , k\}$.
We may assume w.l.o.g.\ that $m'\geq \lfloor m/2\rfloor$ of these edges lie on the left side of the ray $\overrightarrow{pq_i}$. Label these $m'$ points in ccw order about $p$ as $t_1,\ldots ,t_{m'}$, and let $t_0=q_i$. Then the triangles $\Delta(p,t_{h-1},t_h)$, for $h=1,\ldots , m'$, are interior-disjoint, contained in $W_i\setminus W_2$, and spanned by lattice points in the $\Z^2$. By Pick's theorem, $\area(\Delta(p,t_{h-1},t_h))\geq \frac12$ for all $h=1,\ldots , m'$. Consequently, $\area(W_1\setminus W_2)\geq m'/2\geq \Omega(m)$.

We also derive an upper bound on $\area(W_1\setminus W_2)$. Recall (cf.\ Fig.~\ref{fig:wedge}) that $B(p,q_i) = W_1\cap W_2$, where $W_1$ and $W_2$ are cones centered at $p$ and $q_i$, with apertures $\frac12\, \sqrt{\eps}$ and $\sqrt{\eps}$, respectively. Note that $pq_i$ partitions $W_1\setminus W_2$ into two congruent isosceles triangles, each with two sides of weight $\|pq_i\|$, and so $\area(W_1\setminus W_2)=2\|pq_i\|^2\sin (\pi-\sqrt{\eps}/2) = O(\eps^{1/2}\,\|pq_i\|^2 )$.
By contrasting the lower and upper bounds for $\area(W_1\setminus W_2)$, we obtain
\[\Omega(m)\leq \area(W_1\setminus W_2)\leq O(\eps^{1/2}\,\|pq_i\|^2 )
\,\,\, \Rightarrow \,\,\, m\leq O(\eps^{1/2} \|pq_i\|^2).
\]

By Lemma~\ref{lem:deltoid}, if an edge $pq_{i',j}(p)$ was added in the same iteration as $pq_i(p)$, then $\|pq_{i',j}(p)\|< 2\,\|pq_i(p)\|$. Overall, the total weight of all edges added together with $pq_i$ is $O(m\, \|pq_i\|)\leq O(\eps^{1/2}\, \|pq_i\|^3)$. Summation over $i=1,\ldots ,k$ yields $O(\eps^{1/2}\, \sum_{i=1}^k \|pq_i\|^3)$.

Using Lemma~\ref{lem:num}(ii) with $k=\Theta(\eps^{-1/2})$, the total weight of the edges added for a vertex $p$ is
$O(\eps^{1/2}\cdot k^3\log k) = O(\eps^{-1}\log \eps^{-1})$. Summation over all $n^2$ vertices $p\in S$,
gives an overall weight of $\|G\|=O(\eps^{-1}\log(\eps^{-1})\cdot n^2)$, as claimed.
\end{proof}

The combination of Lemma~\ref{lem:LBgrid} and Theorem~\ref{thm:UBgrid} (lower and upper bounds) establishes Theorem~\ref{thm:grid}. We restate Theorem~\ref{thm:UBgrid} for $n$ points in the unit square $[0,1]^2$.

\begin{corollary}
Let $S$ be the $\lfloor \sqrt{n}\rfloor \times \lfloor\sqrt{n}\rfloor$ section of the scaled lattice $\frac{1}{\lfloor \sqrt{n}\rfloor}\,\Z^2$, contained in $[0,1]^2$.
Then $G=\textsc{SparseYao}(S,\eps)$ has weight $O(\eps^{-1}\log(\eps^{-1})\cdot \sqrt{n})$.
\end{corollary}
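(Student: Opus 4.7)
The plan is to derive this corollary directly from Theorem~\ref{thm:UBgrid} by a rescaling argument, noting that Algorithm~\textsc{SparseYao} behaves well under similarities. Specifically, the algorithm's behavior depends only on the Euclidean distances between points (via the cones $C_i$, $C_{i,j}$ and the regions $A$, $\widehat{A}$, $B$, $\widehat{B}$, all of which are defined up to a scaling factor proportional to the inter-point distances). So if $S' = \lambda S$ is obtained from $S$ by a dilation with factor $\lambda>0$, then $\textsc{SparseYao}(S',\eps)$ returns the same combinatorial graph as $\textsc{SparseYao}(S,\eps)$, and its total weight is exactly $\lambda \cdot \|\textsc{SparseYao}(S,\eps)\|$.

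First I would let $N=\lfloor \sqrt{n}\rfloor$, so $S$ is the image under the dilation $x\mapsto x/N$ of the $N\times N$ integer lattice section $S^\ast = \{(i,j)\in\Z^2: 0\le i,j< N\}$, which sits in $[0,1]^2$. Then I would apply Theorem~\ref{thm:UBgrid} to $S^\ast$ to obtain
\[
\|\textsc{SparseYao}(S^\ast,\eps)\| = O\!\left(\eps^{-1}\log(\eps^{-1})\cdot N^2\right).
\]
Finally, applying the scaling observation with $\lambda = 1/N$, I would conclude
\[
\|\textsc{SparseYao}(S,\eps)\| = \frac{1}{N}\,\|\textsc{SparseYao}(S^\ast,\eps)\| = O\!\left(\eps^{-1}\log(\eps^{-1})\cdot N\right) = O\!\left(\eps^{-1}\log(\eps^{-1})\cdot \sqrt{n}\right),
\]
since $N=\lfloor\sqrt{n}\rfloor = \Theta(\sqrt{n})$.

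The only nontrivial point, and hence the one I would double-check, is the scale-equivariance of Algorithm~\textsc{SparseYao}: the cone partition $C_1,\ldots,C_k$ and each refined $C_{i,j}$ are defined purely by angles (independent of scale); the ordering of pairs in the lists $L_{i,j}$ uses Euclidean distances and projections, both of which are multiplied by $\lambda$ under a dilation (hence the sort order is preserved); and the tests in step~1c ("is $q_{i',j}\in B(p,q_i)$?", "is $\|s q_{i,j}(s,f)\|>f$ with $f=\tfrac13\|pq_i\|$?") and in step~1d ("is $s\in \widehat{A}(p,q_i)$?") are all invariant under dilation because both the query points and the defining regions scale uniformly. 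Therefore the algorithm produces the same edge set on $S$ and on $S^\ast$, and each edge weight is multiplied by $\lambda=1/N$, yielding the claimed bound.
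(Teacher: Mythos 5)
Your proposal is correct and follows essentially the same route as the paper: apply Theorem~\ref{thm:UBgrid} to the $\lfloor\sqrt{n}\rfloor\times\lfloor\sqrt{n}\rfloor$ integer lattice section and scale all weights by $1/\lfloor\sqrt{n}\rfloor$. Your extra verification that \textsc{SparseYao} is equivariant under dilation is a sensible sanity check of a step the paper leaves implicit.
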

\begin{proof}
Apply Theorem~\ref{thm:UBgrid} for the $\lfloor \sqrt{n}\rfloor \times \lfloor\sqrt{n}\rfloor$ section of the integer lattice $\Z^2$, and then scale down all weights by a factor of $\frac{1}{\lfloor \sqrt{n}\rfloor}$.
\end{proof}

\section{Conclusions}
\label{sec:con}

The \textsc{SparseYao} algorithm, introduced here, combines features of Yao-graphs and greedy spanners. It remains an open problem whether the celebrated greedy algorithm~\cite{althofer1993sparse} always returns a $(1+\eps)$-spanner of weight $O(\eps^{-3/2}\sqrt{n})$ for $n$ points in the unit square (and $O(\eps^{(1-d^2)/d}n^{(d-1)/d})$ for $n$ points in $[0,1]^d$). The analysis of the greedy algorithm is known to be notoriously difficult~\cite{FiltserS20,LeS22}.
It is also an open problem whether \textsc{SparseYao} or the greedy algorithm approximates the instance-minimal weight of a Euclidean $(1+\eps)$-spanner within a ratio $o_d(\eps^{-d})$ in $\R^d$, which would go below the best possible bound implied by lightness~\cite{LeS22}.

All results in this paper pertain to the Euclidean distance (i.e., $L_2$-norm in $\R^d$). Generalizations to $L_p$-norms for $p\geq 1$ (or Minkowski norms with respect to a centrally symmetric convex body in $\R^d$) would be of interest. It is unclear whether some or all of the machinery developed here generalizes to other norms.
Finally, we note that Steiner points can substantially improve the weight of a $(1+\eps)$-spanner in Euclidean space~\cite{BhoreT22,LeS22,LeS20}. It is left for future work to study the minimum weight of a Euclidean Steiner $(1+\eps)$-spanner for $n$ points in the unit cube $[0,1]^d$; and for an $n\times \ldots \times n$ section of the integer lattice $\Z^d$.

The \textsc{SparseYao}$(S,\eps)$ algorithm is a modified version of the classical Yao-graph on a set $S$ of $n$ points in $\R^d$. It can easily be implemented in $O_d(n^2\log n)$ time by a brute-force algorithm after sorting the edges of the complete graph on $n$ points. Note, however, that Yao-graphs in the plane can be constructed in $O(n\log n)$ time~\cite{ChangHT90,Funke023}; and $\Theta$-graphs in $\mathbb{R}^d$ can be computed $O_d(n\eps^{1-d}\log^{d-1}n)$ time~\cite{NS-book}: These bounds are near-linear in $n$ for constant $\eps>0$.
It remains an open problem whether the algorithm \textsc{SparseYao}$(S,\eps)$ or its adaptation to $\Theta$-graphs can be implemented in near-linear time for constant $\eps>0$ and dimension $d\in \mathbb{N}$.
For a $\lfloor n\rfloor \times \lfloor n\rfloor$ section of the integer lattice in the plane, the \textsc{SparseYao} algorithm can be implemented in $O(n(\log n+\eps^{-1}))$ time due to symmetry: One can find a closest lattice points to the origin in each cone in $O(n\log n)$ time, and then translate the resulting star to all other lattice points.

\bibliographystyle{plainurl}
\bibliography{a-revised}

\begin{thebibliography}{10}

\bibitem{ABUAFFASH2022101807}
A.~Karim Abu-Affash, Gali Bar-On, and Paz Carmi.
\newblock $\delta$-{G}reedy $t$-spanner.
\newblock {\em Comput. Geom.}, 100:101807, 2022.
\newblock \href {https://doi.org/10.1016/j.comgeo.2021.101807}
  {\path{doi:10.1016/j.comgeo.2021.101807}}.

\bibitem{Aga17a}
Pankaj~K. Agarwal.
\newblock Range searching.
\newblock In Jacob~E. Goodman, Joseph O’Rourke, and Csaba~D. T\'oth, editors,
  {\em Handbook of Discrete and Computational Geometry}, chapter~40, pages
  1057--1092. CRC Press, Boca Raton, FL, 3rd edition, 2017.
\newblock \href {https://doi.org/10.1201/9781315119601}
  {\path{doi:10.1201/9781315119601}}.

\bibitem{AgarwalWY05}
Pankaj~K. Agarwal, Yusu Wang, and Peng Yin.
\newblock Lower bound for sparse {E}uclidean spanners.
\newblock In {\em Proc. 16th {ACM-SIAM} Symposium on Discrete Algorithms
  ({SODA})}, pages 670--671, 2005.
\newblock URL: \url{http://dl.acm.org/citation.cfm?id=1070432.1070525}.

\bibitem{AichholzerBBBKRTV14}
Oswin Aichholzer, Sang~Won Bae, Luis Barba, Prosenjit Bose, Matias Korman,
  Andr{\'{e}} van Renssen, Perouz Taslakian, and Sander Verdonschot.
\newblock Theta-3 is connected.
\newblock {\em Comput. Geom.}, 47(9):910--917, 2014.
\newblock \href {https://doi.org/10.1016/j.comgeo.2014.05.001}
  {\path{doi:10.1016/j.comgeo.2014.05.001}}.

\bibitem{AkitayaBB22}
Hugo~A. Akitaya, Ahmad Biniaz, and Prosenjit Bose.
\newblock On the spanning and routing ratios of the directed
  {\(\Theta\)}\({}_{\mbox{6}}\)-graph.
\newblock {\em Comput. Geom.}, 105-106:101881, 2022.
\newblock \href {https://doi.org/10.1016/j.comgeo.2022.101881}
  {\path{doi:10.1016/j.comgeo.2022.101881}}.

\bibitem{althofer1993sparse}
Ingo Alth{\"{o}}fer, Gautam Das, David~P. Dobkin, Deborah Joseph, and
  Jos{\'{e}} Soares.
\newblock On sparse spanners of weighted graphs.
\newblock {\em Discret. Comput. Geom.}, 9:81--100, 1993.
\newblock \href {https://doi.org/10.1007/BF02189308}
  {\path{doi:10.1007/BF02189308}}.

\bibitem{AryaS97}
Sunil Arya and Michiel H.~M. Smid.
\newblock Efficient construction of a bounded-degree spanner with low weight.
\newblock {\em Algorithmica}, 17(1):33--54, 1997.
\newblock \href {https://doi.org/10.1007/BF02523237}
  {\path{doi:10.1007/BF02523237}}.

\bibitem{BarbaBDFKORTVX15}
Luis Barba, Prosenjit Bose, Mirela Damian, Rolf Fagerberg, Wah~Loon Keng,
  Joseph O'Rourke, Andr{\'{e}} van Renssen, Perouz Taslakian, Sander
  Verdonschot, and Ge~Xia.
\newblock New and improved spanning ratios for {Y}ao graphs.
\newblock {\em J. Comput. Geom.}, 6(2):19--53, 2015.
\newblock \href {https://doi.org/10.20382/jocg.v6i2a3}
  {\path{doi:10.20382/jocg.v6i2a3}}.

\bibitem{BhoreT22}
Sujoy Bhore and Csaba~D. T{\'{o}}th.
\newblock Euclidean steiner spanners: Light and sparse.
\newblock {\em {SIAM} J. Discret. Math.}, 36(3):2411--2444, 2022.
\newblock URL: \url{https://doi.org/10.1137/22m1502707}, \href
  {https://doi.org/10.1137/22M1502707} {\path{doi:10.1137/22M1502707}}.

\bibitem{BorradaileLW19}
Glencora Borradaile, Hung Le, and Christian Wulff{-}Nilsen.
\newblock Greedy spanners are optimal in doubling metrics.
\newblock In {\em Proc. 30th {ACM-SIAM} Symposium on Discrete Algorithms
  ({SODA})}, pages 2371--2379, 2019.
\newblock \href {https://doi.org/10.1137/1.9781611975482.145}
  {\path{doi:10.1137/1.9781611975482.145}}.

\bibitem{BoseCMRV16}
Prosenjit Bose, Jean{-}Lou~De Carufel, Pat Morin, Andr{\'{e}} van Renssen, and
  Sander Verdonschot.
\newblock Towards tight bounds on theta-graphs: {M}ore is not always better.
\newblock {\em Theor. Comput. Sci.}, 616:70--93, 2016.
\newblock \href {https://doi.org/10.1016/j.tcs.2015.12.017}
  {\path{doi:10.1016/j.tcs.2015.12.017}}.

\bibitem{BoseHO21}
Prosenjit Bose, Darryl Hill, and Aur{\'{e}}lien Ooms.
\newblock Improved bounds on the spanning ratio of the theta-5-graph.
\newblock In {\em Proc.\ 17th Symposium on Algorithms and Data Structures
  {(WADS)}}, volume 12808 of {\em LNCS}, pages 215--228. Springer, 2021.
\newblock \href {https://doi.org/10.1007/978-3-030-83508-8_16}
  {\path{doi:10.1007/978-3-030-83508-8_16}}.

\bibitem{BuchinHO20}
Kevin Buchin, Sariel Har{-}Peled, and D{\'{a}}niel Ol{\'{a}}h.
\newblock A spanner for the day after.
\newblock {\em Discret. Comput. Geom.}, 64(4):1167--1191, 2020.
\newblock \href {https://doi.org/10.1007/s00454-020-00228-6}
  {\path{doi:10.1007/s00454-020-00228-6}}.

\bibitem{ChanHJ20}
Timothy~M. Chan, Sariel Har{-}Peled, and Mitchell Jones.
\newblock On locality-sensitive orderings and their applications.
\newblock {\em {SIAM} J. Comput.}, 49(3):583--600, 2020.
\newblock \href {https://doi.org/10.1137/19M1246493}
  {\path{doi:10.1137/19M1246493}}.

\bibitem{ChangHT90}
Maw{-}Shang Chang, Nen{-}Fu Huang, and Chuan~Yi Tang.
\newblock An optimal algorithm for constructing oriented {V}oronoi diagrams and
  geographic neighborhood graphs.
\newblock {\em Inf. Process. Lett.}, 35(5):255--260, 1990.
\newblock \href {https://doi.org/10.1016/0020-0190(90)90054-2}
  {\path{doi:10.1016/0020-0190(90)90054-2}}.

\bibitem{Clarkson87}
Kenneth~L. Clarkson.
\newblock Approximation algorithms for shortest path motion planning.
\newblock In {\em Proc. 19th {ACM} Symposium on Theory of Computing ({SoCG})},
  pages 56--65, 1987.
\newblock \href {https://doi.org/10.1145/28395.28402}
  {\path{doi:10.1145/28395.28402}}.

\bibitem{DasHN93}
Gautam Das, Paul~J. Heffernan, and Giri Narasimhan.
\newblock Optimally sparse spanners in 3-dimensional {E}uclidean space.
\newblock In {\em Proc. 9th Symposium on Computational Geometry (SoCG)}, pages
  53--62, 1993.
\newblock \href {https://doi.org/10.1145/160985.160998}
  {\path{doi:10.1145/160985.160998}}.

\bibitem{DasN97}
Gautam Das and Giri Narasimhan.
\newblock A fast algorithm for constructing sparse euclidean spanners.
\newblock {\em Int. J. Comput. Geom. Appl.}, 7(4):297--315, 1997.
\newblock \href {https://doi.org/10.1142/S0218195997000193}
  {\path{doi:10.1142/S0218195997000193}}.

\bibitem{DasNS95}
Gautam Das, Giri Narasimhan, and Jeffrey~S. Salowe.
\newblock A new way to weigh malnourished {E}uclidean graphs.
\newblock In {\em Proc. 6th {ACM-SIAM} Symposium on Discrete Algorithms
  (SODA)}, pages 215--222, 1995.
\newblock URL: \url{http://dl.acm.org/citation.cfm?id=313651.313697}.

\bibitem{DinitzES10}
Yefim Dinitz, Michael Elkin, and Shay Solomon.
\newblock Low-light trees, and tight lower bounds for {E}uclidean spanners.
\newblock {\em Discret. Comput. Geom.}, 43(4):736--783, 2010.
\newblock \href {https://doi.org/10.1007/s00454-009-9230-y}
  {\path{doi:10.1007/s00454-009-9230-y}}.

\bibitem{Dress99}
Fra\c{c}ois Dress.
\newblock Discr\'epance des suites de {F}arey.
\newblock {\em J. Th\'eor. Nombres Bordeaux}, 11(2):345–367, 1999.
\newblock URL: \url{http://eudml.org/doc/248344}.

\bibitem{DumitrescuG16}
Adrian Dumitrescu and Anirban Ghosh.
\newblock Lattice spanners of low degree.
\newblock {\em Discret. Math. Algorithms Appl.}, 8(3):1650051:1--1650051:19,
  2016.
\newblock \href {https://doi.org/10.1142/S1793830916500518}
  {\path{doi:10.1142/S1793830916500518}}.

\bibitem{ElkinS15}
Michael Elkin and Shay Solomon.
\newblock Optimal {E}uclidean spanners: Really short, thin, and lanky.
\newblock {\em J. {ACM}}, 62(5):35:1--35:45, 2015.
\newblock \href {https://doi.org/10.1145/2819008} {\path{doi:10.1145/2819008}}.

\bibitem{Fe55}
Leonard Few.
\newblock The shortest path and the shortest road through $n$ points.
\newblock {\em Mathematika}, 2(2):141--144, 1955.
\newblock \href {https://doi.org/10.1112/S0025579300000784}
  {\path{doi:10.1112/S0025579300000784}}.

\bibitem{FiltserS20}
Arnold Filtser and Shay Solomon.
\newblock The greedy spanner is existentially optimal.
\newblock {\em {SIAM} J. Comput.}, 49(2):429--447, 2020.
\newblock \href {https://doi.org/10.1137/18M1210678}
  {\path{doi:10.1137/18M1210678}}.

\bibitem{Franel24}
J\'er\^ome Franel.
\newblock Les suites de {F}arey et les probl\`emes des nombres premiers.
\newblock {\em Nachrichten von der Gesellschaft der Wissenschaften zu
  Göttingen, Mathematisch-Physikalische Klasse}, pages 198--201, 1924.
\newblock URL: \url{http://eudml.org/doc/59156}.

\bibitem{Funke023}
Daniel Funke and Peter Sanders.
\newblock Efficient {Y}ao graph construction.
\newblock In {\em Proc. 21st Symposium on Experimental Algorithms ({SEA})},
  volume 265 of {\em LIPIcs}, pages 20:1--20:20. Schloss Dagstuhl, 2023.
\newblock \href {https://doi.org/10.4230/LIPICS.SEA.2023.20}
  {\path{doi:10.4230/LIPICS.SEA.2023.20}}.

\bibitem{GalantP22}
Damien Galant and C{\'{e}}dric Pilatte.
\newblock A note on optimal degree-three spanners of the square lattice.
\newblock {\em Discret. Math. Algorithms Appl.}, 14(3):2150124:1--2150124:12,
  2022.
\newblock \href {https://doi.org/10.1142/S179383092150124X}
  {\path{doi:10.1142/S179383092150124X}}.

\bibitem{gao2006deformable}
Jie Gao, Leonidas~J. Guibas, and An~Nguyen.
\newblock Deformable spanners and applications.
\newblock {\em Comput. Geom.}, 35(1-2):2--19, 2006.
\newblock \href {https://doi.org/10.1016/j.comgeo.2005.10.001}
  {\path{doi:10.1016/j.comgeo.2005.10.001}}.

\bibitem{Gottlieb15}
Lee{-}Ad Gottlieb.
\newblock A light metric spanner.
\newblock In {\em Proc. 56th IEEE Symposium on Foundations of Computer Science
  ({FOCS})}, pages 759--772, 2015.
\newblock \href {https://doi.org/10.1109/FOCS.2015.52}
  {\path{doi:10.1109/FOCS.2015.52}}.

\bibitem{GudmundssonK18}
Joachim Gudmundsson and Christian Knauer.
\newblock Dilation and detours in geometric networks.
\newblock In {\em Handbook of Approximation Algorithms and Metaheuristics},
  volume~2. Chapman and Hall/CRC, 2nd edition, 2018.

\bibitem{GudmundssonLN02}
Joachim Gudmundsson, Christos Levcopoulos, and Giri Narasimhan.
\newblock Fast greedy algorithms for constructing sparse geometric spanners.
\newblock {\em {SIAM} J. Comput.}, 31(5):1479--1500, 2002.
\newblock \href {https://doi.org/10.1137/S0097539700382947}
  {\path{doi:10.1137/S0097539700382947}}.

\bibitem{Sariel}
Sariel Har-Peled.
\newblock {\em Geometric Approximation Algorithms}, volume 173 of {\em
  Mathematics Surveys and Monographs}.
\newblock AMS, 2011.
\newblock \href {https://doi.org/http://dx.doi.org/10.1090/surv/173}
  {\path{doi:http://dx.doi.org/10.1090/surv/173}}.

\bibitem{HW79}
Godfrey~H. Hardy and Edward~M. Wright.
\newblock Farey series and a theorem of {M}inkowski.
\newblock In {\em An Introduction to the Theory of Numbers}, chapter~3, pages
  23--37. Clarendon Press, Oxford, 1979.
\newblock \href {https://doi.org/10.2307/3610026} {\path{doi:10.2307/3610026}}.

\bibitem{KZ96}
Pavel Kargaev and Anatoly Zhigljavsky.
\newblock Approximation of real numbers by rationals: {S}ome metric theorems.
\newblock {\em Journal of Number Theory}, 61:209--225, 1996.
\newblock \href {https://doi.org/10.1006/jnth.1996.0145}
  {\path{doi:10.1006/jnth.1996.0145}}.

\bibitem{Kei88}
J.~Mark Keil.
\newblock Approximating the complete {E}uclidean graph.
\newblock In {\em Proc. 1st Scandinavian Workshop on Algorithm Theory
  ({SWAT})}, volume 318 of {\em LNCS}, pages 208--213. Springer, 1988.
\newblock \href {https://doi.org/10.1007/3-540-19487-8_23}
  {\path{doi:10.1007/3-540-19487-8_23}}.

\bibitem{KeilG92}
J.~Mark Keil and Carl~A. Gutwin.
\newblock Classes of graphs which approximate the complete {E}uclidean graph.
\newblock {\em Discret. Comput. Geom.}, 7:13--28, 1992.
\newblock \href {https://doi.org/10.1007/BF02187821}
  {\path{doi:10.1007/BF02187821}}.

\bibitem{Landau24}
Edmund Landau.
\newblock Bemerkungen zu der vorstehenden {A}bhandlung von {H}errn {F}ranel.
\newblock {\em Göttinger Nachrichten}, pages 202--206, 1924.
\newblock Coll. works, vol. 8 (Thales Verlag, Essen).

\bibitem{LeS20}
Hung Le and Shay Solomon.
\newblock Light {E}uclidean spanners with {S}teiner points.
\newblock In {\em Proc. 28th European Symposium on Algorithms (ESA)}, volume
  173 of {\em LIPIcs}, pages 67:1--67:22. Schloss Dagstuhl, 2020.
\newblock \href {https://doi.org/10.4230/LIPIcs.ESA.2020.67}
  {\path{doi:10.4230/LIPIcs.ESA.2020.67}}.

\bibitem{LeS22}
Hung Le and Shay Solomon.
\newblock Truly optimal {E}uclidean spanners.
\newblock {\em SIAM J. Computing}, pages FOCS19--135--FOCS19--199, 2022.
\newblock to appear.
\newblock \href {https://doi.org/10.1137/20M1317906}
  {\path{doi:10.1137/20M1317906}}.

\bibitem{LeS23}
Hung Le and Shay Solomon.
\newblock A unified framework for light spanners.
\newblock In {\em Proc.\ 55th {ACM} Symposium on Theory of Computing ({STOC})},
  pages 295--308, 2023.
\newblock \href {https://doi.org/10.1145/3564246.3585185}
  {\path{doi:10.1145/3564246.3585185}}.

\bibitem{Ledoan18}
Andrew~H. Ledoan.
\newblock The discrepancy of {F}arey series.
\newblock {\em Acta Mathematica Hungarica}, 156:465--480, 2018.
\newblock \href {https://doi.org/10.1007/s10474-018-0868-x}
  {\path{doi:10.1007/s10474-018-0868-x}}.

\bibitem{LevcopoulosNS02}
Christos Levcopoulos, Giri Narasimhan, and Michiel H.~M. Smid.
\newblock Improved algorithms for constructing fault-tolerant spanners.
\newblock {\em Algorithmica}, 32(1):144--156, 2002.
\newblock \href {https://doi.org/10.1007/s00453-001-0075-x}
  {\path{doi:10.1007/s00453-001-0075-x}}.

\bibitem{MitchellW18}
Joseph S.~B. Mitchell and Wolfgang Mulzer.
\newblock Proximity algorithms.
\newblock In {\em Handbook of Discrete and Computational Geometry}. CRC Press,
  Boca Raton, FL, 3rd edition, 2018.

\bibitem{NS-book}
Giri Narasimhan and Michiel H.~M. Smid.
\newblock {\em Geometric Spanner Networks}.
\newblock Cambridge University Press, 2007.
\newblock \href {https://doi.org/10.1017/CBO9780511546884}
  {\path{doi:10.1017/CBO9780511546884}}.

\bibitem{RaoS98}
Satish Rao and Warren~D. Smith.
\newblock Approximating geometrical graphs via "spanners" and "banyans".
\newblock In {\em Proc. 30th {ACM} Symposium on the Theory of Computing
  ({STOC})}, pages 540--550, 1998.
\newblock \href {https://doi.org/10.1145/276698.276868}
  {\path{doi:10.1145/276698.276868}}.

\bibitem{Roditty12}
Liam Roditty.
\newblock Fully dynamic geometric spanners.
\newblock {\em Algorithmica}, 62(3-4):1073--1087, 2012.
\newblock \href {https://doi.org/10.1007/s00453-011-9504-7}
  {\path{doi:10.1007/s00453-011-9504-7}}.

\bibitem{ruppert1991approximating}
Jim Ruppert and Raimund Seidel.
\newblock Approximating the $d$-dimensional complete {E}uclidean graph.
\newblock In {\em Proc.\ 3rd Canadian Conference on Computational Geometry
  ({CCCG})}, pages 207--210, 1991.
\newblock URL: \url{https://cccg.ca/proceedings/1991/paper50.pdf}.

\bibitem{SolomonE14}
Shay Solomon and Michael Elkin.
\newblock Balancing degree, diameter, and weight in euclidean spanners.
\newblock {\em {SIAM} J. Discret. Math.}, 28(3):1173--1198, 2014.
\newblock \href {https://doi.org/10.1137/120901295}
  {\path{doi:10.1137/120901295}}.

\bibitem{SteeleS89}
J.~Michael Steele and Timothy~Law Snyder.
\newblock Worst-case growth rates of some classical problems of combinatorial
  optimization.
\newblock {\em {SIAM} J. Comput.}, 18(2):278--287, 1989.
\newblock \href {https://doi.org/10.1137/0218019} {\path{doi:10.1137/0218019}}.

\bibitem{SupowitRP83}
Kenneth~J. Supowit, Edward~M. Reingold, and David~A. Plaisted.
\newblock The travelling salesman problem and minimum matching in the unit
  square.
\newblock {\em {SIAM} J. Comput.}, 12(1):144--156, 1983.
\newblock \href {https://doi.org/10.1137/0212009} {\path{doi:10.1137/0212009}}.

\bibitem{Yao82}
Andrew~Chi{-}Chih Yao.
\newblock On constructing minimum spanning trees in $k$-dimensional spaces and
  related problems.
\newblock {\em {SIAM} J. Comput.}, 11(4):721--736, 1982.
\newblock \href {https://doi.org/10.1137/0211059} {\path{doi:10.1137/0211059}}.

\end{thebibliography}

\end{document}